\DeclareMathOperator{\trace}{tr}
\DeclareMathOperator{\rank}{rank}
\DeclareMathOperator*{\argmax}{arg\ max}
\DeclareMathAlphabet\mathbfcal{OMS}{cmsy}{b}{n}
\newcommand{\mat}[1]{\boldsymbol{#1}}
\providecommand{\eye}{\mat{I}}
\providecommand{\mA}{\ensuremath{\mat{A}}}
\providecommand{\mC}{\ensuremath{\mat{C}}}
\newcommand{\st}{{\rm s.t.}}
\newcommand{\m}{\boldsymbol}
\newcommand{\mb}[1]{\mathbf{#1}}
\newcommand{\mc}[1]{\mathcal{#1}}
\newcommand{\mbb}[1]{\mathbb{#1}}
\newcommand{\mr}[1]{\mathrm{#1}}
\DeclarePairedDelimiter\abs{\lvert}{\rvert}%
\DeclarePairedDelimiter\norm{\lVert}{\rVert}%
\let\oldabs\abs
\def\abs{\@ifstar{\oldabs}{\oldabs*}}
\let\oldnorm\norm
\def\norm{\@ifstar{\oldnorm}{\oldnorm*}}
\NewDocumentCommand{\evaluat}{sO{\big}mm}{%
	\IfBooleanTF{#1}
	{\mleft. #3 \mright|_{#4}}
	{#3#2|_{#4}}%
}
\setlist[itemize]{leftmargin=*}
\pgfplotsset{compat=1.18}
\tikzset{
	tangent/.style={decoration={
			markings,mark=at position #1 with {
				\coordinate (ta) at (0,0);
				\coordinate (tb) at (0.1,0);
			}
		},postaction=decorate},
	tangent/.default=0.5
}
\definecolor{Vgray}{RGB}{247,247,247}
\definecolor{Vedge}{RGB}{200,200,200}
\definecolor{Sub1purple}{RGB}{200,200,255}
\definecolor{Sub1edge}{RGB}{205,215,238}
\definecolor{Sub2yellow}{RGB}{251,234,206}
\definecolor{Sub2edge}{RGB}{251,234,206} 
\definecolor{Sub3green}{RGB}{215,238,210}
\definecolor{Sub3edge}{RGB}{215,232,210}
\definecolor{Sub1color}{rgb}{0.85, 0.9, 1}
\definecolor{Sub2color}{rgb}{1, 0.9, 0.7}
\definecolor{Sub3color}{rgb}{0.8, 1, 0.8}
\definecolor{Sub4color}{RGB}{200,200,255}
\definecolor{Sub4edge}{RGB}{205,215,238}
\definecolor{Sub5color}{RGB}{255,210,210}
\definecolor{Sub5edge}{RGB}{240,180,180}
\definecolor{Sub5edge}{RGB}{140,120,80}
\definecolor{Sub5color}{RGB}{255,245,225}
\definecolor{Sub6edge}{RGB}{100,130,150}
\definecolor{Sub6color}{RGB}{220,240,255}
\definecolor{Sub7edge}{RGB}{130,100,150}
\definecolor{Sub7color}{RGB}{245,230,255}
\definecolor{Sub8edge}{RGB}{150,130,100}
\definecolor{Sub8color}{RGB}{255,250,230}
\definecolor{Sub9edge}{RGB}{90,140,140}
\definecolor{Sub9color}{RGB}{220,250,250}
\definecolor{Sub10edge}{RGB}{110,110,110}
\definecolor{Sub10color}{RGB}{240,240,240}
\tikzset{
	highlightP1/.style={
		fill=Sub2yellow, rounded corners=2pt, inner sep=2pt, anchor=base},
	highlightP2/.style={
		fill=Sub3green, rounded corners=2pt, inner sep=2pt, anchor=base}
}
\definecolor{LightBlue}{rgb}{0.7, 0.85, 1}
\definecolor{LightOrange}{rgb}{1, 0.8, 0.6}
\definecolor{LightGreen}{rgb}{0.843, 0.933, 0.824}
\tikzstyle{box} = [rectangle, draw, thick, rounded corners, 
\tikzstyle{colorbox1} = [rectangle, draw=black, rounded corners, fill=Sub1purple, 
\tikzstyle{colorbox2} = [rectangle, draw=black, rounded corners, fill=Sub2yellow, 
\tikzstyle{colorbox3} = [rectangle, draw=black, rounded corners, fill=Sub3green, 
\tikzstyle{colorbox4} = [rectangle, draw=black, rounded corners, fill={rgb,255:red,247; green,247; blue,247},text centered, font=\scriptsize, minimum height=0.5cm, text width=3cm]
\tikzstyle{arrow} = [thick,->,>=stealth]
\tikzstyle{dashedarrow} = [thick, dashed, ->, >=stealth]
\newcommand{\introstart}[2]{\noindent \lettrine[lines=2]{#1}{#2}}
\definecolor{LCSS}{HTML}{004498}
\newcommand{\parstartc}[1]{\noindent \textbf{\textcolor{LCSS}{#1.}}\;}
\newcommand{\parquestc}[1]{\noindent \textbf{\textcolor{LCSS}{#1?}}\;}
\newcommand{\R}{\mathbb{R}}
\newcommand{\Rn}[1]{\mathbb{R}^{#1}}
\newcommand{\logdet}[1]{\mr{log} \mr{det}(#1)}
\newcommand{\titlesc}[1]{\title{\Large \vspace{0.7cm} \LARGE \centering {\textsc{{#1}}}}}
\renewenvironment{proof}{%
	\par\vspace{0.3em}%
	\noindent\textbf{\textit{\textcolor{LCSS}{Proof.}}}\enspace%
}{%
	\hfill$\blacksquare$%
	\par\vspace{0.3em}%
}
\newtheorem{theorem}{\textbf{\textcolor{LCSS}{Theorem}}}
\newtheorem{mylem}{\textbf{\textcolor{LCSS}{Lemma}}}
\newtheorem{mydef}{\textbf{\textcolor{LCSS}{Definition}}}
\newtheorem{myrem}{\textbf{\textcolor{LCSS}{Remark}}}
\newtheorem{mycor}{\textbf{\textcolor{LCSS}{Corollary}}}
\newtheorem{myprs}{\textbf{\textcolor{LCSS}{Proposition}}}
\newtheorem{problem}{\textbf{\textcolor{LCSS}{Problem}}}
\def\@begintheorem#1#2{%
	\par\noindent\textbf{\textcolor{LCSS}{\textit{#1 #2}}}%
	\@ifnextchar[{\@withinfo}{\textbf{\textcolor{LCSS}{.}}\enspace\ignorespaces}}
	\def\@withinfo[#1]{
\textbf{\textcolor{LCSS}{\textit{ (#1)}}}\textcolor{LCSS}{.} \ignorespaces}
\colorlet{subsectioncolor}{.} 
\definecolor{subsectioncolor}{RGB}{39,94,77} 
\author{Mohamad H. Kazm$\text{a}^{\diamond}$, \textit{Graduate Student Member, IEEE} and Ahmad F. Taha, \textit{Member, IEEE}  
	\thanks{
		$^\diamond$Corresponding author. This work is supported by National Science Foundation under Grants 2152450 and 2151571. The authors are with the Civil $\&$ Environmental Engineering and Electrical $\&$ Computer Engineering Departments, Vanderbilt University, 2201 West End Ave, Nashville, Tennessee 37235. Emails: mohamad.h.kazma@vanderbilt.edu, ahmad.taha@vanderbilt.edu.} 
}
\begin{document}
\newdimen\origiwspc%
\newdimen\origiwstr%
\origiwspc=\fontdimen2\font
\origiwstr=\fontdimen3\font

\maketitle
\thispagestyle{headings} 

\pagestyle{plain}

\markboth{IEEE Transactions on Automatic Control, 2026. DOI: 10.1109/TAC.2026.3676329}{}
\begin{abstract}
Network partitioning has gained recent attention as a pathway to enable decentralized operation and control in large-scale systems. This paper addresses the interplay between partitioning, observability, and sensor placement (SP) in dynamic networks. The problem, being computationally intractable at scale, {is a largely unexplored, open problem in the literature.} To that end, the paper's objective is designing scalable partitioning of linear systems while maximizing observability metrics of the subsystems. We show that the partitioning problem can be posed as a submodular maximization problem---and the SP problem can subsequently be solved over the partitioned network. Consequently, theoretical bounds are derived to compare observability metrics of the original network with those of the resulting partitions, highlighting the impact of partitioning on system observability. Case studies on networks of varying sizes corroborate the derived theoretical bounds. 
\end{abstract}
\begin{IEEEkeywords}
	Observability, network partitioning, submodular maximization, matroid constraints, multilinear extensions, sensor placement.
\end{IEEEkeywords}
\section{Introduction and Contributions}\label{sec:Intro}
\introstart{N}{etwork} partitioning, the process of segmenting a system into finite sub-networks, has gained popularity in the past decade for both static and dynamic networks. Its advantages are numerous. Some of these merits are: reducing computational complexity, identifying community structures, performing localized sensing and decentralized feedback control, reducing communication costs, improving network resilience, and minimizing cascading failures. Partitioning is not merely a theoretical concept---it has been studied in a variety of dynamic network applications (epidemics, opinion dynamics, power grids, and water systems). 

Relevant to dynamic network partitioning is observability---a fundamental system-theoretic notion that enables one to infer internal system states from some output measurements. Observing the full state is limited by the number and configuration of sensors within a system. In particular, for large and complex systems, such as cyber-physical systems~\cite{Barboni2020} and coupled biological reaction networks~\cite{Boccaletti2006}, state reconstruction is constrained by sensing and its allocation costs. As a result, one potential goal of partitioning is to identify observable subsystems (i.e., partitions) that preserve overall system observability, thereby enabling scalable localized/distributed state-estimation strategies developed for large-scale systems.

In this paper, we consider the problem of network partitioning for linear systems such that a variety of the subsystems' observability metrics are optimized. This is followed by performing localized sensor placement (SP) via a submodular set optimization approach. We note that there is no literature on studying the interplay between observability, partitioning, and SP---and there are no known methods solving this observability-based partitioning problem. 
The benefit of this approach is threefold: \textit{(i)} to identify observable interconnected subsystems, \textit{(ii)} to enable localized/decentralized state-estimation strategies, and \textit{(iii)} to yield a more scalable solution for the SP problem, which remains a subject of ongoing research due to its combinatorial properties.

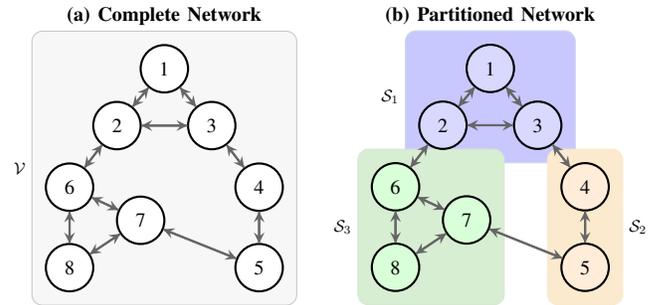
\begin{figure}[t]
	\vspace{-0.18cm}
	\centering
	\captionsetup[subfloat]{labelformat=empty} 
	\makebox[0.2\textwidth]{ 
	\subfloat[\label{fig:prop_part_1}]{
		\begin{minipage}{0.3\textwidth} 
			\centering
			\begin{tikzpicture}[scale=0.63, transform shape]
				\tikzstyle{node}=[circle, draw=black, thick, fill=white, minimum size=1cm, font=\large]
				\tikzstyle{arrow}=[->, thick, >=stealth, color=black!60]
				\node[node] (x1) at (0,2.2) {1};
				\node[node] (x2) at (-1,1) {2};
				\node[node] (x3) at (1,1) {3};
				\node[node] (x4) at (2,-0.3) {4};
				\node[node] (x5) at (2,-2) {5};
				\node[node] (x6) at (-2,-0.3) {6};
				\node[node] (x7) at (-0.5,-1) {7};
				\node[node] (x8) at (-2,-2) {8};
				
				\draw[arrow] (x2) -- (x6);
				\draw[arrow] (x6) -- (x2);
				\draw[arrow] (x2) -- (x1);
				\draw[arrow] (x1) -- (x2);
				\draw[arrow] (x2) -- (x3);
				\draw[arrow] (x3) -- (x2);
				\draw[arrow] (x3) -- (x1);
				\draw[arrow] (x1) -- (x3);
				\draw[arrow] (x3) -- (x4);
				\draw[arrow] (x4) -- (x3);
				\draw[arrow] (x4) -- (x5);
				\draw[arrow] (x5) -- (x4);
				\draw[arrow] (x6) -- (x7);
				\draw[arrow] (x7) -- (x6);
				\draw[arrow] (x7) -- (x8);
				\draw[arrow] (x8) -- (x7);
				\draw[arrow] (x8) -- (x6);
				\draw[arrow] (x6) -- (x8);
				\draw[arrow] (x5) -- (x7);
				\draw[arrow] (x7) -- (x5);
				
				\node[above=0.5cm] at (0,2.5) {\textbf{\large (a) Complete System}};
				\begin{scope}[on background layer]
					\node[draw=Vedge, fill=Vgray, 
					rounded corners, inner sep=5pt, fit=(x1) (x2) (x3) (x4) (x5) (x6) (x7) (x8), label=left:\(\mc{V}\)] {};
				\end{scope}
			\end{tikzpicture}
		\end{minipage}}
		\hspace{-1.2cm}
		\subfloat[\label{fig:prop_part_2}]{
		\begin{minipage}{0.3\textwidth} 
			\centering
			\begin{tikzpicture}[scale=0.63, transform shape]
				\tikzstyle{node}=[circle, draw=black, thick, fill=white, minimum size=1cm, font=\large]
				\tikzstyle{colorednode1}=[circle, draw=black, thick, fill=blue!15, minimum size=1cm, font=\large]
				\tikzstyle{colorednode2}=[circle, draw=black, thick, fill=orange!15, minimum size=1cm, font=\large]
				\tikzstyle{colorednode3}=[circle, draw=black, thick, fill=green!15, minimum size=1cm, font=\large]
				\tikzstyle{arrow}=[->, thick, >=stealth, color=black!60]
				
				\node[colorednode1] (x1) at (0,2.2) {1};
				\node[colorednode1] (x2) at (-1,1) {2};
				\node[colorednode1] (x3) at (1,1) {3};
				\node[colorednode2] (x4) at (2,-0.3) {4};
				\node[colorednode2] (x5) at (2,-2) {5};
				\node[colorednode3] (x6) at (-2,-0.3) {6};
				\node[colorednode3] (x7) at (-0.5,-1) {7};
				\node[colorednode3] (x8) at (-2,-2) {8};
				
				\draw[arrow] (x2) -- (x6);
				\draw[arrow] (x6) -- (x2);
				\draw[arrow] (x2) -- (x1);
				\draw[arrow] (x1) -- (x2);
				\draw[arrow] (x2) -- (x3);
				\draw[arrow] (x3) -- (x2);
				\draw[arrow] (x3) -- (x1);
				\draw[arrow] (x1) -- (x3);
				\draw[arrow] (x3) -- (x4);
				\draw[arrow] (x4) -- (x3);
				\draw[arrow] (x4) -- (x5);
				\draw[arrow] (x5) -- (x4);
				\draw[arrow] (x6) -- (x7);
				\draw[arrow] (x7) -- (x6);
				\draw[arrow] (x7) -- (x8);
				\draw[arrow] (x8) -- (x7);
				\draw[arrow] (x8) -- (x6);
				\draw[arrow] (x6) -- (x8);
				\draw[arrow] (x5) -- (x7);
				\draw[arrow] (x7) -- (x5);
				
				\node[above=0.5cm] at (0,2.5) {\textbf{\large (b) Partitioned Subsystems}};

				\begin{scope}[on background layer]
					\node[draw=Sub1edge, fill=Sub1purple, 
					rounded corners, inner sep=5pt, fit=(x1) (x2) (x3), label=left:\(\mc{S}_1\)] {};
					
					\node[draw=Sub2edge, fill=Sub2yellow, 
					rounded corners, inner sep=5pt, fit=(x4) (x5), label=right:\(\mc{S}_2\)] {};
					
					\node[draw=Sub3edge, fill=Sub3green, 
					rounded corners, inner sep=5pt, fit=(x6) (x7) (x8), label=left:\(\mc{S}_3\)] {};
				\end{scope}
			\end{tikzpicture}
		\end{minipage}}
	} 
	\vspace{-0.2cm}
	\caption{(a) A system of measured internal state, $ v\in \mc{V}$ (measurable space), depicting interactions of a dynamical system and (b) the subsequent subsystems $\mc{S}_i \subseteq \mc{V}$. The nodes represent the system states and edges represent the internal state connections. The colored boxes represent the nodes that belong to a particular subsystem. The interactions between the subsystems remain after system partitioning.}\label{fig:subsystem}
\end{figure}
\setlength{\textfloatsep}{2pt}
The SP problem has been thoroughly studied in the literature~\cite{Joshi2009,Summers2016,Tzoumas2016,Summers2019,Zhou2019,Vinod2022}. One approach follows from~\cite{Summers2016} that formulates the actuator placement, dual to the SP problem, in the context of submodular set optimization. This setting is characterized by diminishing returns properties which provides provable performance guarantees that allow one to solve the combinatorial SP problem in polynomial time~\cite{Guo2021}. However, the applicability of polynomial-time algorithms, such as the simple greedy, still yields scalability challenges for large-scale systems and real-time sensor scheduling. To this end, in this paper we approach alleviating the computational burden of the SP problem in large-scale linear time-invariant (LTI) systems by identifying subsystems within the original dynamical system as described in Fig.~\ref{fig:subsystem}. This allows the exploitation of more advanced algorithms that leverage parallelization over the subsystems. Specifically, this paper studies the fundamental problem of partitioning or clustering of a dynamic linear network to perform two key functions: \textit{(i)} maximizing the partitions' observability metrics while \textit{(ii)} ensuring optimized SP for overall system observability. This also enables, as broader impact, localized state estimation for the partitioned interconnected subsystems. We perform the above two key functions via a specific set of techniques, grounded in submodular set function optimization using matroid constraints. In what follows, we briefly highlight the rich literature of network partitioning, SP, and their overlap. 

\parstartc{Clustering versus partitioning} We note here that there is somewhat of an ambiguity in the literature of partitioning/clustering, where there is no consensus among the practical interpretation of these two words. While some studies define partitioning through dissecting a network into $k$ partitions that are no longer connected~\cite{Raak2016,She2020}, others maintain the connectivity in the network~\cite{Ishizaki2014a,Ananduta2021,Mattioni2022}. In this paper, we consider the problem of partitioning a system into clustered communities, referred to as subsystems, where the interconnections between the measurable states of the subsystems are maintained. Notice from Fig.~\ref{fig:prop_part_2}, the connections (arrows) between subsystems $\mc{S}_1,\mc{S}_2,\mc{S}_3$ remain intact after the partitioning.

\parquestc{Why is partitioning needed} The clustering of nodes and network states is becoming increasingly relevant for the scalability of distributed control methods~\cite{Chanfreut2021}. In a dual concept to observability, a submodular approach towards islanding of multi-agent networks based on consensus-based control is formulated in~\cite{Cheng2024}. The controlled islanding or partitioning of a power grid is formulated to reduce the effects of cascading failure events. Furthermore, while studying control under multi-agent networks and coupled systems, the \textit{herdability}\footnote{Herdability is a relaxed notion of controllability that indicates the ability to steer system states to a specific state-space subset.} of a set of leader nodes in a cluster is introduced in recent literature~\cite{She2020,DePasquale2023}. The dynamical network is \textit{equitable} partitioned to characterize a controllable subspace that renders the system herdable. In~\cite{Yu2018,Yu2019}, the local subsystem dynamics of a discrete LTI system are identified by leveraging local subsystem input and output signals. An optimization-based partitioning algorithm is proposed in~\cite{Arastou2025}, where local controllability is imposed as a constraint through the rank of the controllability matrix. The method demonstrates that partitioning can reduce computational costs for local controllers, thereby improving decentralized control performance. Readers are referred to~\cite{Chanfreut2021} for a survey on clustering strategies for scalable distributed control, and to~\cite{Riccardi2025} for a recent survey on partitioning in model predictive control; clustering remains a topic of interest.

The partitioning of complex systems provides an essential decomposition for effective and scalable decentralized state estimation in large-scale dynamical systems~\cite{Daoutidis2019}.
For system monitoring under limited computational and sensing resources, the authors in~\cite{Niazi2023} propose a clustering-based approach that constructs an aggregated observer for large-scale linear flow networks. The method designs an observer based on the average state of each cluster while achieving computational tractability over the clusters. However, it does not explicitly ensure or preserve observability within individual clusters during the clustering process. The authors in~\cite{Corah2018} propose a submodular distributed sensor network planning framework. The method randomly partitions network agents to enable efficient distributed decision-making while maintaining constant-factor bound guarantee. A partitioning of a sensing network based on stability margin is introduced in~\cite{Hamdipoor2019}. In~\cite{Rezazadeh2021}, the optimal strategy of an agent in communication within a multi-agent network is solved via a continuous submodular maximization of a shared utility function under disjoint strategy sets. 

\parstartc{Paper contributions} As compared to the aforementioned literature, this paper partitions an LTI system into $k$ interconnected subsystems by maximizing an observability-based metric/measure of the subsystem dynamics. Observing internal states from external measurements implies that there exists a coupling network between internal states, this enables leveraging observability for structural inference~\cite{Succar2025}. Building on this notion, we exploit the internal state connections, quantified using each subsystem’s observability Gramian, to identify subsystems that partition a system while maximizing an observability-based metric. The SP problem, then solved for the partitions, is related to the subsystem observability properties of the dynamical system; it ensures that state information is retained within each cluster while maximizing overall system observability. The observability-based partitioning problem, and the interplay between partitioning and observability in linear systems, has not been thoroughly investigated and remains an open problem to the best of the authors’ knowledge; this paper aims to fill this gap.
\begin{itemize}
    \item We introduce a system partitioning framework that maximizes observability-based metrics of LTI subsystem dynamics, while considering overall system's observability. The partitioning problem is formulated as a submodular set function maximization problem under a partition matroid which dictates the allocation of measurable states to partitions that yield maximal observability gain. This makes the otherwise combinatorial partitioning problem solvable in polynomial time. In doing so, we also show that the objective function representing the observability measures of the subsystem dynamics retains the modularity, submodularity, and monotonicity properties of the original observability-based measures under no system partitioning.
	\item We extend the observability-based SP problem to partitioned systems by solving it under a partition matroid constraint which represents the sensors that can be allocated to each partition. We show that the submodular properties of parameterized observability measures are retained for a partitioned system, enabling the use of greedy algorithms.
	\item We efficiently solve the optimal partitioning by utilizing a continuous greedy algorithm. A multilinear relaxation is utilized to enable greedy selection under a matroid constraint while achieving a $(1 - 1/e)$ performance guarantee.
	\item We provide theoretical bounds that characterize the relationship between the observability measures of the global system and those of the sum of local subsystems. Such bounds quantify performance loss—or lack thereof—when solving the SP problem in the partitioned versus unpartitioned system.
    \item We validate the proposed partition and SP framework on two combustion reaction networks. The results empirically validate the theoretical bounds for the $\log\!\det$ metric under varying partition numbers, and sensor configurations.
\end{itemize}

\parstartc{Notation} Let $\mathbb{N}$, $\R$, and $\R_{\geq 0}$ denote the set of natural, real, and non-negative real numbers. Let $\Rn{n}$ and $\Rn{n\times m}$ denote the set of real-valued column vectors of size $n$, and $n$-by-$m$ real matrices. The operators $\log\!\det(\m{A})$, $\trace(\m{A})$ and $\rank{(\m{A})}$ return the logarithmic determinant, trace and rank of matrix $\m{A}\in \Rn{n\times n}$. For any two matrices $\m{A}$ and $\m{B} \in \Rn{n\times n}$, the notation $\{\m{A}, \m{B}\}$ represents $\left[\m{A}^{\top} \m{B}^{\top}\right]^{\top}$. For a symmetric matrix $\m{A}$, the symbol $\m{A} \succeq 0$ denotes a positive semidefinite (PSD) matrix and $\m{A} \succ 0$ denotes a positive definite (PD) matrix. The symbol $\emptyset$ denotes the empty set. The cardinality of a set $\mc{V}$ is denoted by $|\mc{V}|$. The symbols $\subseteq$, $\cap$, $\cup$ and $\backslash$ denote set inclusion (subset), set intersection, set union, and set difference. The notation $2^{\mc{V}}$ denotes the power set of $\mc{V}$ (the set of all subsets of $\mc{V}$). For $v \in \mc{V}$ denoting a single element from a set, the notation $\mc{V} \cup \{v\}$ represents adding element $v$ to $\mc{V}$. The Cartesian product, $\mc{C} \times \mc{V}$, of two sets $\mc{C}$ and $\mc{V}$, represents all ordered pairs $(i,v)$ where $i \in \mc{C}$ and $v \in \mc{V}$.

\parstartc{Paper organization} The structure of the paper is as follows. Section~\ref{sec:problemformulation} presents the preliminaries and the problem formulation. Section~\ref{sec:main-reformulation} introduces the partitioning problem under a partition matroid constraint. Section~\ref{sec:main} provides evidence for the submodularity of observability-based measures in a partitioned system. The SP problem for a partitioned system is formulated in Section~\ref{sec:SNS}. The multilinear extension is introduced in~Section~\ref{sec:multilinear}. Numerical results are presented in Section~\ref{sec:simulation}. Section~\ref{sec:conclusion} concludes this paper.
\section{Preliminaries and Problem Formulation}\label{sec:problemformulation}
\subsection{Properties of submodular set functions}\label{subsec:setfunctions}
Consider a ground set $\mc{V}$ and its power set $2^{\mc{V}}$. A set function can be written as $f: 2^{\mc{V}}\rightarrow \mathbb{R}_{\geq 0}$. If $f$ exhibits a diminishing-returns property, it is said to be submodular (Definition~\ref{def:modular_submodular}). This means that adding a single element to an argument of $f$ has an incremental value no less than that of adding the same element to a superset of that argument.
\begin{mydef}\textit{(\hspace{-0.012cm}Submodularity~\cite{Bach2010})}\label{def:modular_submodular}
	The set function $f:2^{\mc{V}}\rightarrow \mbb{R}_{\geq 0}$ is \textit{modular} if there exists a weight function $w:\mc{V}\rightarrow \mbb{R}_{\geq 0}$ such that for any $\mc{S}\subseteq\mc{V}$, it holds that 
	$ f(\mc{S}) = w(\emptyset) + \sum_{s\in\mc{S}} w(s),$
	and $f$ is considered \textit{submodular} if for any $\mc{A} \subseteq \mc{B}\subseteq\mc{V}$ and for any $s\notin\mc{B}$, the following holds
	\begin{align}\label{eq:submodular_def}
		f(\mc{A}\cup\{s\}) - f(\mc{A})\geq  f(\mc{B}\cup\{s\}) - f(\mc{B}).
	\end{align}
\end{mydef}

The set function $f$ is said to be supermodular if the reverse inequality in~\eqref{eq:submodular_def} holds true for all $s\notin\mc{B}$. We say that a set function is normalized if $f(\emptyset)=0$. Furthermore, the set function $f$ is monotone increasing if for any $\mc{A}\subseteq\mc{B}\subseteq\mc{V}$, we have $	f(\mc{B})\geq 	f(\mc{A})$. A set function $f$ that is submodular, monotone increasing and normalized, is termed a \textit{polymatroid} set function~\cite{Bilmes2022}. Optimization problems that incorporate submodular objective functions are often constrained on an admissible subset $\mc{S}\subseteq \mc{V}$ of the ground set. Constraints that arise in the context of submodular set optimization are matroid constraints. The following provides their definition.
\begin{mydef}\textit{(\hspace{-0.012cm}Matroid Constraints~\cite{Fujishige2013})}\label{def:matroid-constraint}
	Let $\mc{I}$ be a collection of subsets of $\mc{V}$. A constraint can then be represented by the tuple $\mc{M} = (\mc{V}, \mc{I})$. The tuple $\mc{M}$ is a matroid constraint if the following hold true: \textit{(i)} $\emptyset \in \mc{I}$; \textit{(ii)} $\mc{A} \subseteq \mc{B} \in \mc{I} \Rightarrow \mc{A} \in \mc{I}$; and \textit{(iii)} if $\mc{A}, \mc{B} \in \mc{I}$ and $|\mc{A}| < |\mc{B}|$, then there exists $s \in \mc{B} \backslash \mc{A}$ such that $\mc{A} \cup \{s\} \in \mc{I}$.
\end{mydef}

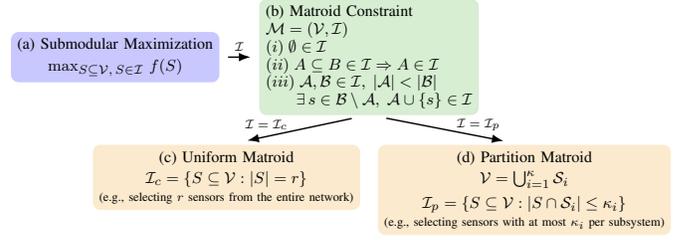
\begin{figure}[t]
	\vspace{+0.2cm}
	\centering
	\resizebox{0.99\linewidth}{!}{%
		\begin{tikzpicture}[node distance=0.5cm, every node/.style={align=center, font=\small}]
			\node[draw=Sub1purple, rectangle, rounded corners, fill=Sub1purple] (opt) {(a) Submodular Maximization\\[0.5ex] \(\max_{S\subseteq \mc{V},\, S \in \mc{I}}\, f(S)\)};

			\node[draw=Sub3edge, fill=Sub3green, rectangle, rounded corners,  right=of opt,  xshift=0.3cm, align=left] (matroid) {(b) Matroid Constraint\\ \(\mc{M}=(\mc{V},\mc{I})\)\\ 
				\((i)\; \emptyset \in \mc{I}\)\\ 
				\((ii)\; A \subseteq B \in \mc{I} \Rightarrow A \in \mc{I}\)\\ 
				\((iii)\; \mc{A}, \mc{B} \in \mc{I},\; |\mc{A}|<|\mc{B}|\)\\ 
				\hspace{0.5cm} \(\exists\, s \in \mc{B} \setminus \mc{A},\; \mc{A} \cup \{s\} \in \mc{I}\)};

			\node[draw=Sub2edge, rectangle, rounded corners, fill=Sub2yellow, below left=0.6cm and -2cm of matroid] (uniform) {(c) Uniform Matroid\\[0.5ex] \(\mc{I}_c=\{S\subseteq \mc{V}: |S|=r\}\) \\ {\scriptsize(e.g., selecting $r$ sensors from the entire network)} };

			\node[draw=Sub2edge, rectangle, rounded corners, fill=Sub2yellow, below right=0.6cm and -2cm of matroid] (partition) {(d) Partition Matroid\\[0.5ex] \(\mc{V}=\bigcup_{i=1}^{\kappa}\mc{S}_i \vspace{+0.05cm}\)\\ \(\mc{I}_{p}=\{S\subseteq \mc{V}: |S\cap \mc{S}_i|\leq \kappa_i\}\) 
			\\ {\scriptsize(e.g., selecting sensors with at most $\kappa_i$ per subsystem)}};
			
			\draw[-{Latex}
			, thick,shorten <=5pt, shorten >=5pt] (opt) -- (matroid) node[midway, above] {\scriptsize$\mc{I}$};

			\draw[-{Latex}, thick, shorten <=10pt, shorten >=12pt] 
			(matroid.south) -- (uniform.north) node[midway, left, xshift=-2pt, yshift=3pt] {\scriptsize$\mc{I} = \mc{I}_c$}; 
			
			\draw[-{Latex}, thick, shorten <=10pt, shorten >=12pt] 
			(matroid.south) -- (partition.north) node[midway, right, xshift=2pt, yshift=3pt] {\scriptsize$\mc{I} = \mc{I}_p$};
		\end{tikzpicture}%
	}
	\vspace{-0.5cm}
	\caption{Matroid Constraints: For any (a) submodular maximization problem, a matroid constraint (b) satisfies: \textit{(i)} the null property, \textit{(ii)} the heredity property and \textit{(iii)} augmentation property. Two common matroid constraints are the (c) uniform matroid $\mc{I}_c$ and (d) partition matroid $\mc{I}_p$.}
	\label{fig:matroid2}
\end{figure}

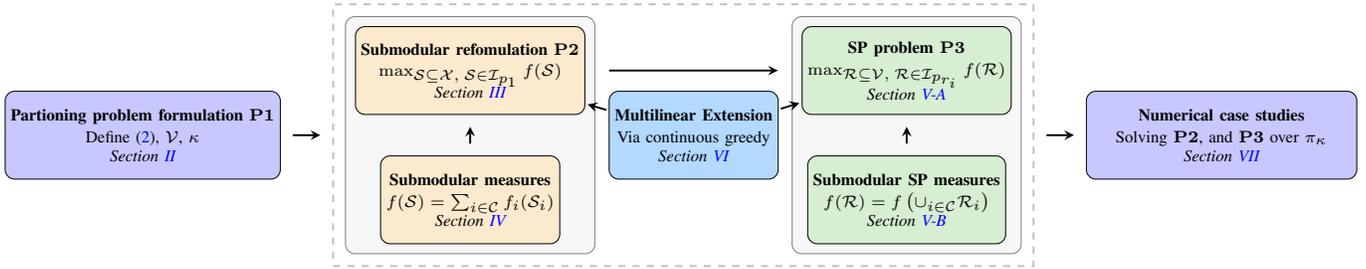
\begin{figure*}[t]
	\centering
	\scriptsize
	\resizebox{1\textwidth}{!}{
		\begin{tikzpicture}[
			node distance=0.6cm and 0.95cm,
			box/.style={draw, rounded corners, fill=blue!5, minimum height=1.3cm, minimum width=2.5cm, align=center},
			arrow/.style={->,>=stealth, thick,shorten <=5pt, shorten >=5pt},
			label/.style={font=\normal, align=center}
			]
			
			\node[box, fill=Sub2yellow] (p2) {
				\textbf{Submodular reformulation $\mb{P2}$} \\[1pt]
				$\max_{\mc{S}\subseteq\mc{X},\; \mc{S} \in \mc{I}_{p_1}}  f(\mc{S})$ \\
				\textit{Section~\ref{sec:main-reformulation}}
			};
			
			\node[box, fill=Sub2yellow, below=of p2] (submod) {
				\textbf{Submodular measures} \\[1pt]
				$f(\mc{S}) = \sum_{i \in \mc{C}} f_i(\mc{S}_i)$\\
				\textit{Section~\ref{sec:main}}
			};
			
			\begin{scope}[on background layer]
				\node[
				draw=gray, 
				fill={rgb,255:red,247; green,247; blue,247}, 
				rounded corners, 
				inner sep=4pt, 
				fit= (p2) (submod),
				] (P2) {};
			\end{scope}

			\node[box, fill=Sub1purple, left=of P2] (setup) {
				\textbf{Partitioning problem formulation $\mb{P1}$}\\[1pt]
				 Define \eqref{eq:model_DT}, $\mc{V}$, $\kappa$\\
				\textit{Section~\ref{sec:problemformulation}}
			};
			
			\node[box, fill=Sub3green,anchor=north west] at ([xshift=3.2cm]p2.north east) (p3) {
				\textbf{SP problem $\mb{P3}$} \\[1pt]
				$\max_{\mc{R}\subseteq\mc{V},\;\mc{R}\in\mc{I}_{p_{r_i}}} f(\mc{R})$\\
				\textit{Section~\ref{subsec:SNS-1}}
			};
			
			\node[box, fill=Sub3green, below=of p3] (submod2) {
				\textbf{Submodular SP measures} \\[1pt]
				$f(\mc{R}) =f\left(\cup_{i\in\mc{C}} \mc{R}_{i}\right)$\\
				\textit{Section~\ref{subsec:SNS-2}}
			};
			
			\begin{scope}[on background layer]
				\node[
				draw=gray, 
				fill={rgb,255:red,247; green,247; blue,247}, 
				rounded corners, 
				inner sep=4pt, 
				fit= (p3) (submod2),
				] (P3) {};
			\end{scope}
			
			\node[box, fill=LightBlue, align=center] 
			at ($(P2.center)!0.512!(P3.center)$) 
			(multi) {
				\textbf{Multilinear Extension}\\[1pt]
				Via continuous greedy\\
				\textit{Section~\ref{sec:multilinear}}
			};
			
			\node[box, fill=Sub1purple, right=of P3,minimum width=4cm] (sim) {
				\textbf{Numerical case studies}\\[1pt]
				Solving $\mb{P2}$, and $\mb{P3}$ over $\pi_{\kappa}$\\
				\textit{Section~\ref{sec:simulation}}
			};
			\node[draw=gray!50, dashed, thick, fit=(P2)(P3), inner sep=5pt, name=PartitionToSP] {};
			
			\draw[arrow] (setup) -- (PartitionToSP);
			\draw[arrow] (submod) -- (p2);
			\draw[arrow] (submod2) -- (p3);
			\draw[arrow,shorten <=10pt, shorten >=10pt] (p2) -- (p3);
			\draw[arrow] (PartitionToSP) -- (sim);
			\draw[arrow,shorten <=1pt, shorten >=1pt] (multi) -- (p2);
			\draw[arrow,shorten <=1pt, shorten >=1pt] (multi) -- (p3);
			
		\end{tikzpicture}
	}
	\vspace{-0.2cm}
	\caption{Overview of the system partitioning problem under $\mb{P2}$ (Sections~\ref{sec:problemformulation}–\ref{sec:main}), followed by SP under $\mb{P3}$ (Section~\ref{sec:SNS}). The continuous greedy algorithm can be utilized to solve $\mb{P2}$ and $\mb{P3}$ (Section~\ref{sec:multilinear}) for an LTI system such as a linearized combustion reaction network (Section~\ref{sec:simulation}).}
	\label{fig:framework}
	\vspace{-0.4cm}
\end{figure*}
\setlength{\textfloatsep}{2pt}

 Note that a matroid is an abstraction of linear independence and dependence structure of a set of columns of a matrix. A matroid constraint $\mc{M}_c = (\mc{V},\mc{I}_c)$ that allows for SP in dynamical networks is the uniform cardinality constraint $\mc{I}_c=\{ \mc{S} \subseteq \mc{V}: \abs{\mc{S}} = r\}$ for some $r$ that represents the feasible number of sensing nodes. A relevant matroid that considers multi-network systems is the partition matroid $\mc{M}_p = (\mc{V},\mc{I}_{p_{\kappa_i}})$, where $\mc{I}_{p_{\kappa_i}}=\{ \mc{S} \subseteq \mc{V}: |\mc{S} \cap \mc{S}_i| \leq \kappa_i\}$.

 The set $\mc{V}$ is partitioned into $\pi_{\kappa} = \left\{\mc{S}_1, \ldots, \mc{S}_\kappa  \right\}$ blocks such that $\mc{S}_i \cap \mc{S}_j=\emptyset$ for $i \neq j$. This means that a partition matroid is one where the ground set is partitioned into blocks, such that an independent set $\mc{S}$ intersects the block by no more than a specific limit~\cite{Wang2021c}. Formulating a submodular maximization problem under a matroid constraint allows for proven theoretical optimality guarantees~\cite{Calinescu2011}. In particular, uniform matroids (Fig.~\ref{fig:matroid2}) apply when the number of allowable selections is limited to a fixed cardinality, as in the case of SP problems, while partition matroids (Fig.~\ref{fig:matroid2}) apply when the ground set $\mc{V}$ is divided into disjoint blocks $\mc{S}_i$, with at most $\kappa_i$ elements selected from each block. In the context of system observability, these constraints arise when assigning a limited number of sensors across different subsystems. The resulting optimization problem, therefore, seeks to maximize a monotone submodular function subject to a partition matroid constraint rather than a uniform constraint. Similar to SP under uniform matroids, such a formulation (under partitioning) enables leveraging different greedy algorithms while ensuring performance guarantees under submodularity. The next section presents the system partitioning problem formulation.
\subsection{Problem formulation: partitioning linear systems}\label{susec:problemformulation}
We consider a LTI discrete dynamical system with a linear measurement model given as
\begin{equation}\label{eq:model_DT}
	\m{x}[{k+1}] = \m{A}\m{x}[k] + \m{B}\m{u}[k], \;\;\;\;\; \quad
	\m y[k] =  \m{C} \m{x}[{k}],
\end{equation}
where $k\in\mbb{N}$ refers to the discrete-time index, such that vector $\m{x}[k]  \in \Rn{n_x}$ represents the dynamic states, vector $\m{u}[k] \in  \Rn{n_u}$ represents the input state vector, and vector $\m{y}[k] \in \Rn{n_y}$ the output measurements. Matrices $\m{A}\in \Rn{n_x \times n_x}$, $\m{B}\in \Rn{n_x \times n_u}$, and $\m{C}\in \Rn{n_y \times n_x}$ are the LTI state-space matrices. 

Considering the above LTI system~\eqref{eq:model_DT}, the SP problem can be solved by first identifying $\kappa$ number of subsystems. This entails optimally partitioning the system dynamics while maximizing an observability-based metric for each of the subsystems and then solving the SP problem under a partition matroid rather than a cardinality constraint. Such a framework renders the problem computationally feasible given that the combinatorial problem is exponentially constrained by the largest number of states within a subsystem. For an unpartitioned system this is equal to the full measurement state-space dimension $n_y$. The partitioning problem can be summarized as follows: \textit{(i)} given the set of measurable state-space variables, represented by the ground set $\mc{V}$ covering the full state-space of the LTI system~\eqref{eq:model_DT}, the objective is to partition $\mc{V}$ into $\kappa$ disjoint subsystems $\mc{S}_i$, each representing a group of states; \textit{(ii)} by partitioning based on maximizing an observability-based metric associated with each subsystem, the resulting partitioning yields a decomposition in which each subsystem achieves an optimal observability measure. Accordingly, the measurable state partitioning problem ($\mb{P1}$) of system~\eqref{eq:model_DT} is defined as follows.
\begin{problem}\label{problem:P1}
	Let the ground set $\mc{V} := \{ v\in\mbb{N}\,|\,0 < v \leq n_y\}$ represent the measurable state-space of cardinality $|\mc{V}| = n_y$ and $f\left(\mc{S}_1, \ldots, \mc{S}_\kappa \right): 2^{\mc{V}} \rightarrow \mbb{R}_{\geq 0}$ is a sum of polymatroid functions ${f}_{i}: 2^{\mc{V}} \rightarrow \mbb{R}_{\geq 0} \; \forall\; i \in \mc{C} = \{1, \dots, \kappa\}$. Then, the partitioning into $\kappa$ subsystems is as follows
\begin{subequations}\label{eq:P1}
	\begin{align}
		\llap{$\mb{P1}$:}
		\; \; \; \; \max _{\mc{S}_1, \ldots, \mc{S}_{\kappa}}& \quad f\left(\mc{S}_1, \ldots, \mc{S}_{\kappa}\right) = \sum_{i=1}^{\kappa} f_i\left({\mc{S}_{i}}\right),\label{eq:P1_obj}\\
		\quad \st& \quad \bigcup_{i=1}^{\kappa} \mc{S}_{i}=\mc{V}, \; \mc{S}_{i} \cap \mc{S}_{j}=\emptyset \;\; \forall \; i \neq j,\label{eq:P1_contraint}
	\end{align}
\end{subequations}
\end{problem}
where $f^{*}_{\pi_{\kappa}} := \max_{\mc{S}_1, \ldots, \mc{S}_{\kappa}} f\left(\mc{S}_1, \ldots, \mc{S}_{\kappa}\right)$. We assume here that all system states can be observed via a to-be-placed sensor. The partitions $\pi_{\kappa} = \left\{\mc{S}_1, \ldots, \mc{S}_{\kappa}\right\}$ are non-overlapping partitions of ground set $\mc{V}$ such that the union of all partitions covers $\cup_{i=1}^{\kappa} \mc{S}_{i}$ is equal to $\mc{V}$, and the intersection of two partitions is the null set for all $i\neq j$, i.e., $\mc{S}_{i} \cap \mc{S}_{j}=\emptyset$. The resulting system partition $\mc{\pi}_{\kappa}$ is $\kappa$ subsystems of size $\kappa_i$ for each $i\in \mc{C}$. Here $\mc{C}$ is the index set of the partitions. Observe that the constraints in $\mb{P1}$ do not hold the properties of a matroid constraint (Definition~\ref{def:matroid-constraint}). The feasible sets are tuples of disjoint sets \(\mc{S}_i\) such that their union is equal to the ground set \(\mc{V}\), which does not satisfy the \textit{(ii)} hereditary and \textit{(iii)} augmentation properties of a matroid.

Thus to enable the use of greedy (and continuous greedy) algorithms with provable theoretical guarantees on the solution's optimality, we rewrite $\mb{P1}$. Note that $\mb{P1}$ can be considered a submodular welfare problem in the field of discrete optimization~\cite{Vondrak2008a}. Thus, we reformulate the original partitioning problem $\mb{P1}$ into a submodular maximization subject to a matroid constraint, denoted as $\mb{P2}$. As illustrated in Fig.~\ref{fig:framework}, the partitioning problem $\mb{P2}$ and the following SP problem exhibit and retain submodular properties of the original system under well-known observability measures. A solution under the multilinear extension of $\mb{P2}$ yields such theoretical guarantees. The ensuing section formulates the system partitioning problem in the context of submodular maximization under a matroid constraint.
\section{Reformulation using Submodular Set Function Optimization}\label{sec:main-reformulation}
To exploit submodular maximization via greedy algorithms, the partitioning problem $\mb{P1}$ can be reformulated as a submodular maximization problem under a partition matroid constraint. This entails creating $\kappa$ copies of the measurable states in the ground set $\mc{V}$, one for each subsystem $\mc{S}_i$. The subsystems, initialized as empty sets, then \textit{compete} to acquire the states $v \in \mc{V}$, each forming a unique set of states by setting a partition matroid constraint with $\kappa_i = 1$.

\subsection{Competing subsystems for system partitioning}\label{subsec:main2-1}
We consider partitioning the measurable state-space into $\kappa$ subsystems. The subsystems $\mc{S}_i$ acquire measurable states $v \in \mc{V}$ while maximizing an observability-based measure $f_i\left({\mc{S}_{i}}\right)$ of the subsystems. Given that the partition sets are disjoint, i.e., each state $v$ is assigned to only one subsystem, then each $\mc{S}_i$ competes for each of the available states. This is similar to the welfare allocation problem, where certain resources are allocated among a number of agents while maximizing the utility of each agent for the allocated resource, such that each state is considered a resource to be claimed by at most one subsystem. As such, we reformulate $\mb{P1}$ as a submodular maximization over an extended ground set $\mc{X}:=\mc{C} \times \mc{V}$. This creates $\kappa$ copies of states $v$, one copy for each of the subsystems. Then, by considering the assignment process using a partition matroid with $\kappa_i = 1$ for all $i \in \mc{C}$, we ensure that a state $v$ is allocated to at most one subsystem. The resulting optimization problem, referred to as $\mb{P2}$, can be defined as follows.
\begin{problem}\label{problem:P2}
	Let the ground set $\mc{X}=\mc{C} \times \mc{V}$ be the Cartesian product of set $\mc{V}$ representing the measurable state-space and $\mc{C}$ the subsystem index set, where an element $(i,v) \in \mc{X}$ represents the state $v$ being assigned to subsystem $\mc{S}_i$. Let the submodular set function be defined as $f: 2^{\mc{X}} \rightarrow \mbb{R}_{\geq 0}$, then $\mb{P1}$ can be rewritten as
\begin{subequations}\label{eq:P1-2}
	\begin{align}
	   \llap{$\mb{P2}$:} \;\;\;
		\max _{\mc{S}}& \;\; f\left(\mc{S}\right) = \sum_{i \in \mc{C}} 	f_i\Bigl(\{v\in\mc{V} \mid (i,v)\in S\}\Bigr),\hspace{-0.2cm}\label{eq:P1-2-obj}\\
		\;\;\; \st& \;\left\{\mc{S} \subseteq \mc{X}: \left|\mc{S} \cap (\mc{C}\times \{v\})\right| \leq 1, \; \forall \;  v \in \mc{V} \right\}\hspace{-0.05cm}.\label{eq:P1-2-constraint}
	\end{align}
\end{subequations}
\end{problem}
where $f_{\pi^{*}_{\kappa}} = \max_{\mc{S}} f\left(\mc{S}\right).$

Problem $\mb{P2}$ dictates that each state $v$ in the ground set $\mc{V}$ is duplicated $\kappa$ times across the subsystems $\forall\; i \in \mc{C}$. The constraint $\left|\mc{S}\cap(\mc{C}\times \{v\})\right| \leq 1$ ensures that each state $v \in \mc{V}$ is assigned to at most one partition as a result of setting $\kappa_i=1$ for each of the subsystems in $\mc{\pi}_{\kappa}$. The submodular set function $f(\mc{S})$ can be defined such that every set $\mc{S} \subseteq \mc{X}$ can be written as $\mc{S}=\bigcup_{i \in \mc{C}}\left(\{i \} \times \mc{S}_i\right)$. Note that $f$, if originally submodular, retains its submodularity under a nonnegative sum of submodular functions $f_i$, where each $f_i: 2^{\mc{V}}\rightarrow\R_{\geq0}$ is a submodular observability-based measure assigned to the subsystem $\mc{S}_i$. In fact, a non-negative conic combination of submodular functions remains submodular. In the subsequent sections, a theoretical analysis regarding the submodularity of observability-based measures defined by $f(\mc{S})$ under that sum of partitioned observability measures $f_i(\mc{S}_i)$ is provided. 

The following lemma establishes that problems $\mb{P1}$ and $\mb{P2}$, for partitioning observable states $v\in\mc{V}$ according to observability-based submodular measures, are equivalent; see Fig.~\ref{fig:comparison_P1P2} for an illustrative example.

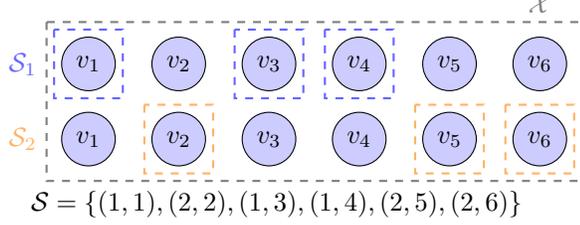
\begin{figure}[t]
	\centering
	\vspace{0.2cm}
			\centering
			\begin{flushleft}
				\textbf{(a) P1:} $\cup_{i=1}^{\kappa} \mc{S}_{i}=\mc{V},\; \mc{S}_i \cap \mc{S}_j = \emptyset$
			\end{flushleft}
			\vspace{-0.4cm}
			\begin{tikzpicture}[scale=1, transform shape, every node/.style={align=center, font=\normalsize}]
				\hspace{+0.3cm}
				\foreach \i/\x/\v in {1/0/\(v_1\), 2/1.2/\(v_3\), 3/2.4/\(v_4\), 4/3.6/\(v_2\), 5/4.8/\(v_5\), 6/6/\(v_6\)} {
					\node[draw, circle, fill=blue!20, minimum size=0.7cm] (v\i) at (\x,1) {\v};
				}
				
				\draw[dashed, thick, color=gray] ($(v1.north west)+(-0.3,0.3)$) rectangle ($(v6.south east)+(0.3,-0.3)$);
				\node[above=0.2cm of $(v6.north)!0.0!(v6.north)$,color=gray] {\(\mc{V}\)};
				
				\draw[dashed, thick, color=blue!60] ($(v1.north west)+(-0.2,0.2)$) rectangle ($(v3.south east)+(0.2,-0.2)$);
				\node[left=0.2cm of v1, blue!60] {\(\mc{S}_1\)};
				
				\draw[dashed, thick, color=orange!60] ($(v4.north west)+(-0.2,0.2)$) rectangle ($(v6.south east)+(0.2,-0.2)$);
				\node[right=0.2cm of v6, orange!60] {\(\mc{S}_2\)};
								
				\node[below=0.2cm of v2,xshift=0.3cm] {$\mc{S}_1 = \{1,3,4\}$, $\mc{S}_2 = \{2,5,6\}$};
			\end{tikzpicture}
        \vspace{-0.3cm}
        
			\centering
			\begin{flushleft}
 					\textbf{(b) P2:} $\mc{S} \subseteq \mc{X},\; \left|\mc{S} \cap (\mc{C}\!\times\!\{v\})\right|\leq 1$
			\end{flushleft}
			\vspace{-0.4cm}
			\begin{tikzpicture}[scale=1, transform shape, every node/.style={align=center, font=\normalsize}]
							
				\foreach \i/\x/\v in {1/0/\(v_1\), 2/1.2/\(v_2\), 3/2.4/\(v_3\), 4/3.6/\(v_4\), 5/4.8/\(v_5\), 6/6/\(v_6\)} {
					\node[draw, circle, fill=blue!20, minimum size=0.7cm] (v\i t) at (\x,1) {\v};
					\node[draw, circle, fill=blue!20, minimum size=0.7cm] (v\i b) at (\x,0) {\v};
				}
				
				\draw[dashed, thick, color=blue!60] ($(v1t.north west)+(-0.2,0.2)$) rectangle ($(v1t.south east)+(0.2,-0.2)$);
				\draw[dashed, thick, color=blue!60] ($(v3t.north west)+(-0.2,0.2)$) rectangle ($(v3t.south east)+(0.2,-0.2)$);
				\draw[dashed, thick, color=blue!60] ($(v4t.north west)+(-0.2,0.2)$) rectangle ($(v4t.south east)+(0.2,-0.2)$);
				
				\draw[dashed, thick, color=orange!60] ($(v2b.north west)+(-0.2,0.2)$) rectangle ($(v2b.south east)+(0.2,-0.2)$);
				\draw[dashed, thick, color=orange!60] ($(v5b.north west)+(-0.2,0.2)$) rectangle ($(v5b.south east)+(0.2,-0.2)$);
				\draw[dashed, thick, color=orange!60] ($(v6b.north west)+(-0.2,0.2)$) rectangle ($(v6b.south east)+(0.2,-0.2)$);
				
				\node[below=0.2cm of v3b,xshift=0.1cm] {\(\mc{S} = \{(1,1), (2,2), (1,3), (1,4), (2,5), (2,6)\}\)};
				\draw[dashed, thick, color=gray] 
				($(v1t.north west)+(-0.3,0.3)$) rectangle ($(v6b.south east)+(0.3,-0.3)$);
				\node[above=0.2cm of $(v6t.north)!0.0!(v6b.north)$,color=gray] {\(\mc{X}\)};
				\node[left=0.2cm of v1t, blue!60] {\(\mc{S}_1\)};
				\node[left=0.2cm of v1b, orange!60] {\(\mc{S}_2\)};
			\end{tikzpicture}
	\vspace{-0.2cm}
	\caption{(a) Disjoint selection of state $v$ from ground set \(\mc{V}\). (b) Disjoint selection from ground set \(\mc{X}=\mc{C}\times \mc{V}\) by duplicating each element \(v\in\mc{V}\) for each subsystem \(i\in\mc{C}\); selection under a partition matroid.
	}
	\label{fig:comparison_P1P2}
\end{figure}

\begin{mylem}\label{lemma:equivalenceP1andP2}
	Consider the observability-based partitioning problem \(\mb{P1}\) defined over the ground set \(\mc{V}\). By defining the ground set \(\mc{X} = \mc{C}\times \mc{V}\) and imposing the partition matroid constraint $ \{S \subseteq \mc{X} \mid |S\cap (\mc{C}\times \{v\})|\le 1,\ \forall\, v\in\mc{V}\},$
	\(\mb{P1}\) is equivalent to the submodular maximization formulation \(\mb{P2}\).
\end{mylem}

\begin{proof}
	We prove such equivalence by showing that the solution space of $\mb{P1}$ and $\mb{P2}$ is a bijective mapping. Suppose that $\mc{\pi}_{\kappa}$ is a feasible solution of $\mb{P1}$, then each $\mc{S}_i\subseteq\mc{V}$ represents a partition such that $$\cup_{i=1}^{\kappa} \mc{S}_{i}=\mc{V}, \; \mc{S}_{i} \cap \mc{S}_{j}=\emptyset, \;\; \forall \; i \neq j.$$ Then, for a ground set $\mc{X} = \mc{C}\times \mc{V}$, the partitions $\mc{S}_i$ can be written as their union according to $\mc{S}= \bigcup_{i=1}^{\kappa} \{(i,v) : v\in \mc{S}_i\} \subseteq \mc{X}$. The Cartesian product here ensures that each element $v \in \mc{V}$ appears $\kappa$ times for index $i \in \mc{C}$. This ensures that each state $v$ is assigned to subsystem $\mc{S}_i$ with no overlap, i.e., $\mc{S}_i\cap \mc{S}_j = \emptyset$; then the partition $\mc{S}$ has the constraint $\left|\mc{S} \cap \left( \mc{C} \times \{v\} \right) \right| \le 1, \quad \forall\, v \in \mc{V}$. That is, only one pair $(i,v) \in \mc{X}$ appears in $\mc{S}$. Thus $\mc{S}$ is equivalent to the feasible solution of $\mb{P2}$ due to constraint~\eqref{eq:P1-2-constraint}. Now, let $\pi^1_\kappa = \{\mc{S}^1_1,\dots,\mc{S}^1_\kappa\} \quad \text{and} \quad \pi^2_\kappa = \{\mc{S}^2_1,\dots,\mc{S}^2_\kappa\}$ be two feasible solutions of $\mb{P1}$ such that there exists at least one $i \in \mc{C}$ for which $\mc{S}^1_i \neq \mc{S}^2_i$, i.e., $\mc{S}^1_i \setminus \mc{S}^2_i \neq \emptyset$. Then for
	
	$$\mc{S}^1= \bigcup_{i=1}^{\kappa} \{(i,v) : v\in \mc{S}^1_i\}, \;\; \mc{S}^2= \bigcup_{i=1}^{\kappa} \{(i,v) : v\in \mc{S}^2_i\},$$ there exists at least some $v\in \mc{V}$ such that $v \in \mc{S}^1_i \setminus \mc{S}^2_i$. That is, there exists a pair $(i,v) \in \mc{S}$ that appears in $\mc{S}^1$ and not $\mc{S}^2$, i.e., $\mc{S}^1 \neq \mc{S}^2$. This establishes the injectivity of the feasible solution space map.
	
	Furthermore, suppose that $\mc{S}\subseteq \mc{X}$ is a feasible solution of $\mb{P2}$, then the constraint $\left|\mc{S} \cap (\mc{C}\times \{v\})\right| \leq 1, \; \forall \;  v \in \mc{V}$, ensures for each $(i,v) \in \mc{S}$, a state $v \in \mc{V}$ is assigned to $\mc{S}$. Now, for $i \in \mc{C}$ define $	\mc{S}_i = \{ v\in\mc{V} \,:\, (i,v)\in S\}$ then, the constraint of $\mb{P2}$ imposes that $\mc{S}_i \cap \mc{S}_j = \emptyset$ and since $v\in \mc{V}$ is allocated to at least one partition then $ \cup_{i=1}^{\kappa} \mc{S}_{i}=\mc{V}$. Thus $\{\mc{S}_1, \dots, \mc{S}_{\kappa}\}$ is equivalent to feasible solution of $\mb{P1}$ due to constraint~\eqref{eq:P1_contraint}.

	It now follows that for any feasible solution $\mc{S} \subseteq \mc{X}$ of $\mb{P2}$ and from the definition of $\mc{S}_i = \{v : (i,v) \in S\}$, a state $v \in \mc{V}$ is assigned to at most one partition $\mc{S}_i$. Thus, we obtain a feasible partition of $\mb{P1}$ $\{\mc{S}_1, \dots, \mc{S}_\kappa\}$ given that $ \cup_{i=1}^{\kappa} \mc{S}_{i}=\mc{V}$ (surjective property). Hence, the feasible space is a bijective mapping, meaning that the solutions map uniquely between the two problem formulations. It is straightforward to show that the objective function is equivalent under the mapping between $\mb{P1}$ and $\mb{P2}$. Let $\mc{S}_i = \{v: (i,v) \in S\}$, then~\eqref{eq:P1-2-obj} can be written as $f(S) = \sum_{i\in\mc{C}} f_i\big(\{v : (i,v) \in S\}\big) = \sum_{i\in\mc{C}} f_i(\mc{S}_i),$ which is equivalent to~\eqref{eq:P1_obj}. This concludes the proof.
\end{proof}

Following the proof of Lemma~\ref{lemma:equivalenceP1andP2}, in the next section we characterize subsystem observability and show that the observability-based measures defined by $f_i\left({\mc{S}_{i}}\right)$ are modular and submodular, and therefore maintain optimality guarantees when common greedy algorithms are implemented.
\begin{figure*}[t]
	\centering
	\vspace{+0.3cm}
	\captionsetup[subfloat]{labelformat=empty} 
	\makebox[\textwidth][c]{%
		\subfloat[\label{fig:sys_mapping}]{
		\begin{minipage}{0.48\textwidth}
			\centering
			{\small \textbf{(a) Subsystem Mapping}} \\[0.25cm]
			\begin{tikzpicture}[scale=0.9, transform shape, every node/.style={align=center, font=\scriptsize}]
				\node[draw=black, fill={rgb,255:red,215; green,238; blue,210}, rounded corners, minimum width=0.75cm, minimum height=2.2cm] (yS1) at (-1,1.15) {\scriptsize $\m{y}_{\mc{S}_1}$};
				\node[right=0.05cm and -0.05cm of yS1] {\scriptsize $=$};
				\node[draw=black, fill={rgb,255:red,251; green,234; blue,206}, rounded corners, minimum width=0.75cm, minimum height=2.2cm] (yS2) at (-1,-1.15) {\scriptsize $\m{y}_{\mc{S}_2}$};
				\node[right=0.05cm and -0.05cm of yS2] {\scriptsize $=$};
				
				\node[draw=black, fill={rgb,255:red,215; green,238; blue,210}, rounded corners, minimum width=4.3cm, minimum height=2.2cm] (S1) at (1.9,1.15) {};
				\node[draw=black, fill={rgb,255:red,251; green,234; blue,206}, rounded corners, minimum width=4.3cm, minimum height=2.2cm] (S2) at (1.9,-1.15) {};
				
				\node[draw, circle, fill=blue!10, minimum size=0.45cm] (v1) at (0.15, 1.85) {$v_1$};
				\node[draw, circle, fill=blue!10, minimum size=0.45cm] (v3) at (1.55, 1.15  ) {$v_3$};
				\node[draw, circle, fill=blue!10, minimum size=0.45cm] (v4) at (2.25, 0.45) {$v_4$};
				
				\node[draw, circle, fill=blue!10, minimum size=0.45cm] (v2) at (0.85, -0.45) {$v_2$};
				\node[draw, circle, fill=blue!10, minimum size=0.45cm] (v5) at (2.95, -1.1) {$v_5$};
				\node[draw, circle, fill=blue!10, minimum size=0.45cm] (v6) at (3.65, -1.85) {$v_6$};
				
				\node[draw, rectangle, rounded corners,minimum width=0.35cm, minimum height=4.5 cm, fill=gray!10] (xvec) at (4.55,0) {\scriptsize $\m{x}[k]$};
			\end{tikzpicture}
		\end{minipage}}
		\hfill
			\subfloat[\label{fig:gram_comp}]{
		\begin{minipage}{0.48\textwidth}
			\centering
			{\small \textbf{(b) Gramian Composition}} \\ 
			\begin{tikzpicture}[scale=1, transform shape, every node/.style={align=center, font=\scriptsize}]
				\node[draw, circle, minimum size=0.5cm, fill=gray!20] (sum1) at (0, 1.0) {$\sum$};
				\node at (0,0.4) {\scriptsize $i \in \mc{C}$};
				\node[draw, rectangle,rounded corners, fill={rgb,255:red,215; green,238; blue,210}, minimum width=1.8cm, minimum height=1.2cm, right=0.2cm of sum1] (W1) {$\m{W}_{\mc{S}_1}$};
				\node[draw, rectangle,rounded corners, fill={rgb,255:red,251; green,234; blue,206}, minimum width=1.8cm, minimum height=1.2cm, right=0.2cm of W1] (W2) {$\m{W}_{\mc{S}_2}$};
				\node[draw, rectangle,rounded corners, fill=gray!10, minimum width=1.0cm, minimum height=1.2cm, right=0.8cm of W2] (Wo) {$\m{W}_o$};
				
				\begin{scope}[on background layer]
					\node[
					draw=gray, 
					fill=gray!10, 
					rounded corners, 
					inner sep=2pt, 
					fit=(W1) (W2),
					label=above:{\scriptsize $\m{W}_{\pi_{\kappa}} = \sum_{i \in \mc{C}} \m{W}_{\mc{S}_i}$}
					] (SumW) {};
				\end{scope}
				\draw[-{Latex[length=2pt, width=2pt]}, thick,shorten <=1pt,shorten >=1pt] (SumW) -- (Wo) node[midway, above] {};
				
				\node[coordinate, below left=1.4cm and 1cm of sum1] (matrowstart) {};
				
				\node[draw, circle, minimum size=0.3cm, fill={rgb,255:red,215; green,238; blue,210}, below=0cm of matrowstart] (sumS1) {\tiny$\sum$};
				\node[below=0.01cm of sumS1] {\scriptsize $v \in \mc{S}_1$};
				
				\node[draw, minimum width=0.25cm, minimum height=1cm, fill=gray!10, right=0.1cm of sumS1] (Ak1) {$\m{A}^k$};
				\node[above=0.05cm of Ak1] {\tiny $n_x\hspace{-0.05cm} \times\hspace{-0.05cm} n_x$};
				
				\node[draw, minimum width=0.1cm, minimum height=1.6cm, fill={rgb,255:red,215; green,238; blue,210}, right=0.1cm of Ak1] (cvT) {$\m{c}_v^\top$};
				\node[below=0.05cm of cvT] {\tiny $n_x \hspace{-0.05cm}\times\hspace{-0.05cm} 1$};
				
				\node[draw, minimum width=1.0cm, minimum height=0.25cm, fill={rgb,255:red,215; green,238; blue,210}, right=0.1cm of cvT] (cv) {$\m{c}_v$};
				\node[below=0.45cm of cv] {\tiny $1\hspace{-0.05cm} \times\hspace{-0.05cm} n_x$};
				
				\node[draw, minimum width=0.25cm, minimum height=1cm, fill=gray!10, right=0.1cm of cv] (Ak2) {$\m{A}^k$};
				\node[above=0.05cm of Ak2] {\tiny $n_x \hspace{-0.05cm}\times\hspace{-0.05cm} n_x$};
				
				\node[coordinate, below right=1.4cm and 2.8cm of sum1] (matrowstart) {};
				
				\node[draw, circle, minimum size=0.3cm, fill={rgb,255:red,251; green,234; blue,206}, below=0cm of matrowstart] (sumS2) {\tiny $\sum$};
				\node[below=0.01cm of sumS2] {\scriptsize $v \in \mc{S}_2$};
				
				\node[draw, minimum width=0.25cm, minimum height=1cm, fill=gray!10, right=0.1cm of sumS2] (Ak12) {$\m{A}^k$};
				\node[above=0.05cm of Ak12] {\tiny $n_x\hspace{-0.05cm} \times\hspace{-0.05cm} n_x$};
				
				\node[draw, minimum width=0.1cm, minimum height=1.6cm, fill={rgb,255:red,251; green,234; blue,206}, right=0.1cm of Ak12] (cvT2) {$\m{c}_v^\top$};
				\node[below=0.05cm of cvT2] {\tiny $n_x \hspace{-0.05cm}\times\hspace{-0.05cm} 1$};
				
				\node[draw, minimum width=1.0cm, minimum height=0.25cm, fill={rgb,255:red,251; green,234; blue,206}, right=0.1cm of cvT2] (cv2) {$\m{c}_v$};
				\node[below=0.45cm of cv2] {\tiny $1\hspace{-0.05cm} \times\hspace{-0.05cm} n_x$};
				
				\node[draw, minimum width=0.25cm, minimum height=1cm, fill=gray!10, right=0.1cm of cv2] (Ak22) {$\m{A}^k$};
				\node[above=0.05cm of Ak22] {\tiny $n_x \hspace{-0.05cm}\times\hspace{-0.05cm} n_x$};
				
				\draw[decorate, decoration={brace, amplitude=1pt}, semithick]
				([yshift=0.4cm]Ak1.north west) -- ([yshift=0.4cm]Ak2.north east)
				node[midway, above=4pt] (brace1){};
				
				\draw[-] ($(brace1.center) + (0, -5pt)$) -- (W1.south);
				
				\draw[decorate, decoration={brace, amplitude=1pt}, semithick]
				([yshift=0.4cm]Ak12.north west) -- ([yshift=0.4cm]Ak22.north east)
				node[midway, above=4pt] (brace2) {};
				\draw[-] ($(brace2.center) + (0, -5pt)$) -- (W2.south);
			\end{tikzpicture}
		\end{minipage}
	}}
	\vspace{-0.2cm}
	\caption{(a) Mapping of state subsets $\mc{S}_i$ to outputs $\m{y}_{\mc{S}_i}[k]$; each $v_i$ corresponds to a row $\mb{c}_v$ of $\mb{C}$. (b) Composition of the full observability Gramian $\m{W}_o$ from subsystem Gramians $\m{W}_{\mc{S}_i}$ and the corresponding matrix structure in each summation term of the subsystem Gramians.}
	\label{fig:partition_observability}
	\vspace{-0.4cm}
\end{figure*}
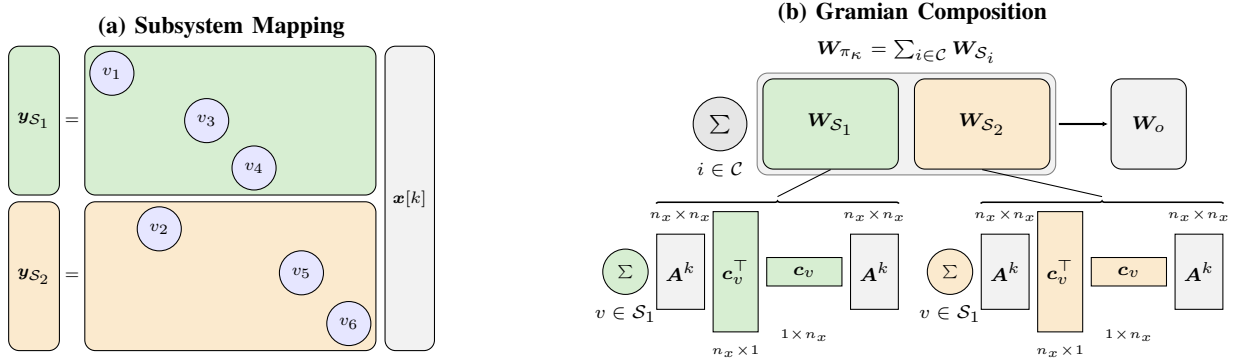
\section{Subsystem Observability under a Partition Matroid}\label{sec:main}
\subsection{Observability and its spectral measures}\label{subsec:systemdescription}
In this section, we present a brief overview of the notion of observability and its measures as a basis for quantifying subsystem observability under a partition matroid constraint. The system~\eqref{eq:model_DT} is said to be observable for an observation horizon ${N}$ if and only if the observability matrix given by 
\begin{equation}\label{eq:Obs_matrix}
	\rank{\m{O}}\hspace{-0.01cm} :=\hspace{-0.01cm}\rank{\left\{\m{C}, \m{C} \mA, \ldots, \m{C} \mA^{N-1}\right\}} = n_x,
\end{equation}
where $\m{O}(\m{x}[0]):=\m{O}\in \mathbb{R}^{Nn_{y} \times n_{x}}$ is full column rank. This observability rank condition is a qualitative metric that indicates the ability to infer states of a dynamical system by simply measuring its output. However, this is not indicative of how observable the dynamical system is. In contrast, observability metrics that quantify energy and volumetric related notions of the observability Gramian $\m{W}  \in \mathbb{R}^{n_x \times n_x}$, denoted as,
\begin{equation}\label{eq:LinObsGram}
	\m{W}:=\sum_{k=0}^{N-1}\left({\m{A}^{k}}\right)^{\top} \m{C}^{\top} \m{C}\m{A}^{k}= \m{O}^{\top} \m{O},
\end{equation}
provide a quantification of system observability. The matrix $\m{W}$ is non-singular if the system is observable over horizon $N$, otherwise it is not observable. That is, system~\eqref{eq:model_DT} is observable if and only if $\m{W}\succ 0$ for all $N>0$~\cite{Sato2024}. If the matrix $\m{A}$ of system~\eqref{eq:model_DT} is stable, the limit $\m{W}_o:=\lim _{N \rightarrow \infty} \m{W}$ satisfies the Lyapunov equation
\begin{equation}\label{eq:LyapEqLin}
	\m{A}^{\top}\m{W}_o\m{A} -\m{W}_o+ \m{C}^{\top}\m{C} = 0.
\end{equation}
Based on the Gramian~\eqref{eq:LinObsGram}, the following metrics can be considered to quantify system observability: $\rank(\m{W})$, $\trace(\m{W})$, and $\logdet{\m{W}}$---see~\cite{Pasqualetti2014a, Summers2016}. The $\trace(\m{W})$ captures the total average sensitivity of outputs to state perturbations, while $\log\!\det(\m{W})$ relates to the volume of the confidence ellipsoid for state-estimation error. The $\rank(\m{W})$ indicates the number of observable modes. Other metrics are used in the literature.
\subsection{Characterizing subsystem observability}\label{sec:subsysobs}
The partitioning problem $\mb{P2}$ clusters the measurable state-space into $\kappa$ subsystems belonging to set $\pi_{\kappa}$ that are interconnected. The objective is to identify subsystems that maximize a partition's observability measure. The linear measurement equation, $\m y[k] =  \m{C} \m{x}[{k}]$, can be decomposed to represent each of the subsystems belonging to $\pi_{\kappa}$. That is, the measurable state vector for $i \in \mc{C}$ can be written as $\m y_{\mc{S}_i}[k] =  \m{C}_{\mc{S}_i}\m{x}[{k}] \in \Rn{\kappa_i}$, where measurement mapping-function $\m{C}_{\mc{S}_i} \in \Rn{\kappa_i \times n_x}$ measures the states within a subsystem $\mc{S}_i$ for $i \in \mc{C}$, such that $|\mc{S}_i| = \kappa_i$; see Fig.~\ref{fig:sys_mapping} for an illustrative example. The set function $f_i({\mc{S}_i})$ for each of the subsystems quantifies an observability Gramian based measure. To that end, we define the observability Gramian of a partition $\mc{S}_i$, with $ i\in\mc{C}$, through the following proposition. 
\begin{myprs}\textit{(Subsystem observability Gramian)}\label{prs:obsGramparitions}
	Let the ground set $\mc{V} = \{ v\in\mbb{N}\,|\,0 < v \leq n_y\}$ represent the measurable state-space. The observability Gramian~\eqref{eq:LinObsGram} for a subsystem $\mc{S}_i \in \pi_{\kappa}$ can be written as
	\begin{align}\label{eq:subobsGram}
		\m{W}_{\mc{S}_i} := \sum_{v\in \mc{S}_i} 
		\left(\sum_{k=0}^{N-1}\left(\m{A}^{k}\right)^\top \hspace{-0.05cm}\m c_v^\top \m c_v \m{A}^{k}\right) \; \in \Rn{n_x\times n_x},
	\end{align}
	where $\m c_v\in\mbb{R}^{1\times n_x}$ is the $v$-th row of $\m C$ representing the mapping of state $v \in \mc{S}_i \subseteq \mc{V}$. Furthermore, the observability Gramian of a system partitioning $\pi_{\kappa}$ is the sum of Gramians for each subsystem and is given by
	\begin{equation}\label{eq:piGramian}
		\m{W}_{\pi_{\kappa}} := \sum_{i\in\mc{C}}\m{W}_{\mc{S}_i}  \; \in \Rn{n_x\times n_x}.
	\end{equation} 
\end{myprs}
\begin{proof}
	The proof follows from the definition of the observability Gramian~\eqref{eq:LinObsGram} of system~\eqref{eq:model_DT}. For the LTI measurement mapping matrix $\mC$ where $\m c_v \in\mbb{R}^{1\times n_x}$ for $v\in \mc{V}$, the decomposition of $\m{C}^\top \m{C}$ can be expressed using the outer product of $\m c_v$, that is $	\m{C}^\top \m{C} = \sum_{v=1}^{n_y} \m c_v^\top \m c_v$. Now under a disjoint partition of the measurable state-space $\pi_{\kappa}$ with \(\bigcup_{i\in\mc{C}} \mc{S}_i = \mc{V}\) and \(\mc{S}_i \cap \mc{S}_j = \emptyset\) for \(i\neq j\), we can write $\m{C}_{\mc{S}_i}^{\top}\m{C}_{\mc{S}_i}$ as $\sum_{v\in \mc{S}_i} \m{c}_v^\top \m{c}_v$ for subsystem $\mc{S}_i$ for all $v\in \mc{S}_i$. Now for any $\mc{S}_i \in \pi_{\kappa}$, we consider $\m{C}_{\mc{S}_i}^{\top}\m{C}_{\mc{S}_i}$ as the subsystem decomposition mapping of $\m{C}^\top \m{C}$ in the observability Gramian and by the linearity of a summation of disjoint sets, we obtain
	\begin{align*}\label{proof:prs1-1}
		\m{W}_{\mc{S}_i}\hspace{-0.1cm}=\hspace{-0.2cm} \sum_{k=0}^{N-1} \hspace{-0.1cm}\left(\m{A}^{k}\right)^\top\hspace{-0.1cm} \left( \sum_{v\in \mc{S}_i} \m{c}_v^\top \m{c}_v\hspace{-0.1cm} \right) \hspace{-0.1cm}\m{A}^{k}
		\hspace{-0.1cm}=\hspace{-0.2cm}\sum_{v\in \mc{S}_i}\hspace{-0.1cm} \left( \sum_{k=0}^{N-1}\hspace{-0.1cm} \left(\m{A}^{k}\right)^\top\hspace{-0.1cm} \m{c}_v^\top \m{c}_v \m{A}^{k}\hspace{-0.1cm} \right)\hspace{-0.1cm}.\hspace{-0.1cm}
	\end{align*}
	
	Now suppose that $\pi_{\kappa} = \{\mc{S}_1, \ldots, \mc{S}_{\kappa}\}$ is a partition of $\mc{V}$. Then, we can construct $\mc{C}_{\pi_{\kappa}} = \left\{c_v: v\in \mc{S}_i \right\}_{i\in\mc{C}}$. It follows that, $\m{C}_{\pi_{\kappa}}^\top \m{C}_{\pi_{\kappa}} = \sum_{v\in \mc{V}} \m{c}_v^\top \m{c}_v = \sum_{i\in \mc{C}} \; \sum_{v\in \mc{S}_i} \m{c}_v^\top \m{c}_v$. Then for an observability Gramian under the action of a partition measurement mapping function $\m{C}_{\pi_{\kappa}}^\top \m{C}_{\pi_{\kappa}}$, we obtain 
	\begin{align*}
		\m{W}_{\pi_{\kappa}} 
		&= \sum_{k=0}^{N-1} \left(\m{A}^{k}\right)^\top \left(\sum_{i\in \mc{C}} \sum_{v\in \mc{S}_i} \m{c}_v^\top \m{c}_v \right) \m{A}^{k},\\ 
		&=  \sum_{i\in \mc{C}}{\sum_{k=0}^{N-1} \left(\m{A}^{k}\right)^\top \left(\sum_{v\in \mc{S}_i} \m{c}_v^\top \m{c}_v\right) \m{A}^{k}}= \sum_{i\in \mc{C}} \m{W}_{\mc{S}_i}.
	\end{align*}
	This concludes the proof.
\end{proof}

The above proposition establishes the Gramian matrix that enables the quantification of subsystem and partitioned system observability, i.e., it establishes the extent to which a set of measurable states, $v \in \mc{S}_i \subseteq \mc{V}$, contribute to overall system observability; see~Fig.~\ref{fig:gram_comp}. Based on the subsystem observability Gramian~\eqref{eq:subobsGram}, the Gramian of partition $\pi_{\kappa}$ can be written as~\eqref{eq:piGramian} and is equivalent to the observability Gramian~\eqref{eq:LinObsGram}, according to the following corollary.

\begin{mycor}\textit{(Gramian equivalence)}\label{corr:obssystosubsys}
	Let $\pi_{\kappa}$ represent a system partition, then the observability Gramian~\eqref{eq:LinObsGram} of the LTI system~\eqref{eq:model_DT} can be written in terms of the subsystem observability Gramians~\eqref{eq:subobsGram}. That is, $\m{W}_{o}$ is equivalent to $	\m{W}_{\pi_{\kappa}}$.
\end{mycor}
\begin{proof}
	The proof follows from the proof of Proposition~\ref{prs:obsGramparitions}. For a disjoint partition $\pi_{\kappa}$ with $\bigcup_{i\in \mc{C}} \mc{S}_i = \mc{V}$, we have $\m{C}^{\top}\m{C} = \m{C}_{\pi_{\kappa}}^\top \m{C}_{\pi_{\kappa}} = \sum_{i\in \mc{C}} \; \sum_{v\in \mc{S}_i} \m{c}_v^\top \m{c}_v$. Then we obtain the observability Gramian~\eqref{eq:LinObsGram} by summing over all $ v\in\mc{V}$.					
	\begin{align*}
		\m{W}_o &= \sum_{v=1}^{n_y} \left( \sum_{k=0}^{N-1} \left(\m{A}^{k}\right)^\top \, \m c_v^\top \m c_v \, \m{A}^{k} \right),\\
		&= \sum_{i\in\mc{C}} \sum_{v\in \mc{S}_i} \left( \sum_{k=0}^{N-1} \left(\m{A}^{k}\right)^\top \, \m c_v^\top \m c_v \, \m{A}^{k} \right) = \sum_{i\in \mc{C}} \m{W}_{\mc{S}_i},
	\end{align*}
	and this concludes the proof.
\end{proof}

The above results allow us to define the observability of each subsystem within the system partition $\pi_{\kappa}$. Notice that for subsystems that are disjoint and cover the measurable state-space $\mc{V}$, the observability of the dynamical system can be expressed as the sum of the Gramians of each subsystem. This equivalence, illustrated in Corollary~\ref{corr:obssystosubsys}, allows us to evaluate the observability of each partition independently and then aggregate their contributions. This enables scalable and decentralized observability analysis of large-scale dynamical systems, in particular that of SP applications.
\subsection{Submodularity under system partitioning}\label{subsec:submodpartGram}
In this section, we show that the aforementioned observability measures for an LTI system, represented by a system partition $\pi_{\kappa}$, exhibit modular and submodular properties and thus provide theoretical guarantees under problem formulation $\mb{P2}$. The following lemma is essential in proving this result.
\begin{mylem}\textit{(\hspace{-0.012cm}\cite{Bilmes2022})}\label{lem:convex_comb}
	For set functions ${f}_{1}, {f}_{2}, \dots, {f}_{k}: 2^{\mc{V}}\rightarrow \mbb{R}_{\geq 0}$ that are submodular,
	any conic combination, that is, any weighted non-negative sum, ${f}(\mc{S})\hspace{-0.05cm}:=\hspace{-0.05cm}\sum_{i=1}^{k}w_{i}{f}_{i}$,
	is submodular, such that $w_{i}\geq 0 \;\forall \;i$.
\end{mylem}

Based on the above lemma, we can now establish the modularity of the subsystem observability Gramian as follows.
\begin{theorem}\textit{(Modularity of subsystem observability Gramian)}\label{theo:modularGrampartitions}
		Let \(\pi_{\kappa}=\{\mc{S}_i\}_{i\in\mc{C}}\) be a disjoint partition of the measurable-state set \(\mc{V}\). Then for each subsystem $\mc{S}_i$, the observability Gramian given by $\m{W}_{\mc{S}_i}$ is a modular set function for all $i \in \mc{C}$. Consequently, the overall observability Gramian for the partitioned system $\m{W}_{\pi_{\kappa}}$ is modular.
\end{theorem}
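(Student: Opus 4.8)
The plan is to extract the additive, singleton-wise structure of the Gramian directly from Proposition~\ref{prs:obsGramparitions} and then close the argument with the conic-combination rule of Lemma~\ref{lem:convex_comb}. First I would attach to every measurable state \(v\in\mc{V}\) the fixed matrix
\begin{equation*}
	\m{W}_v := \sum_{k=0}^{N-1}\bigl(\m{A}^{k}\bigr)^{\top}\m c_v^{\top}\m c_v\,\m{A}^{k}\ \in\ \Rn{n_x\times n_x},
\end{equation*}
which is positive semidefinite and depends only on the row \(\m c_v\) of \(\m C\) together with \((\m A,N)\) — in particular, not on any choice of partition. By~\eqref{eq:subobsGram} we have exactly \(\m{W}_{\mc{S}_i}=\sum_{v\in\mc{S}_i}\m{W}_v\) for every \(\mc{S}_i\subseteq\mc{V}\), so, taking the matrix-valued weight function \(w(v):=\m{W}_v\) and \(w(\emptyset):=\m{0}\), the map \(\mc{S}_i\mapsto\m{W}_{\mc{S}_i}\) has precisely the modular form \(w(\emptyset)+\sum_{v\in\mc{S}_i}w(v)\) of Definition~\ref{def:modular_submodular}, understood entrywise. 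Equivalently — and this is the form I would actually record — for any \(\mc{A}\subseteq\mc{B}\subseteq\mc{V}\) and any \(s\notin\mc{B}\),
\begin{equation*}
	\m{W}_{\mc{A}\cup\{s\}}-\m{W}_{\mc{A}}\;=\;\m{W}_s\;=\;\m{W}_{\mc{B}\cup\{s\}}-\m{W}_{\mc{B}},
\end{equation*}
so that~\eqref{eq:submodular_def} holds with equality; hence each \(\m{W}_{\mc{S}_i}\) is simultaneously sub- and supermodular, i.e.\ modular, for all \(i\in\mc{C}\).

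For the partitioned-system Gramian I would then invoke~\eqref{eq:piGramian}: \(\m{W}_{\pi_{\kappa}}=\sum_{i\in\mc{C}}\m{W}_{\mc{S}_i}\) is a non-negative, unit-weight sum of the modular functions \(\m{W}_{\mc{S}_i}\), each of which — when viewed over the extended ground set \(\mc{X}=\mc{C}\times\mc{V}\) of Problem~\ref{problem:P2} — depends only on the coordinates of the selected set inside the block \(\{i\}\times\mc{V}\). Applying Lemma~\ref{lem:convex_comb} in both directions (a non-negative sum of submodular functions is submodular, and likewise for supermodular functions) yields that \(\m{W}_{\pi_{\kappa}}\) is modular. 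Alternatively, one sees this immediately from \(\m{W}_{\pi_{\kappa}}=\sum_{v\in\mc{V}}\m{W}_v\), which is again of the form \(w(\emptyset)+\sum_v w(v)\), consistent with the Gramian equivalence of Corollary~\ref{corr:obssystosubsys}.

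The only point that needs genuine care is that Definition~\ref{def:modular_submodular} is phrased for scalar, nonnegative set functions, whereas \(\m{W}_{(\cdot)}\) is matrix-valued; I would therefore state explicitly that modularity here is meant entrywise — equivalently, \(\mc{S}\mapsto\trace(\m M^{\top}\m{W}_{\mc{S}})\) is a scalar modular set function for every fixed \(\m M\) — and note that \(\m{W}_v\succeq0\) is precisely what will later let this additive structure survive composition with the monotone spectral surrogates \(\trace\) and \(\logdet{\cdot}\) used to define the subsystem objectives \(f_i\). No real obstacle arises beyond this bookkeeping: the substantive content is the one-line identity furnished by Proposition~\ref{prs:obsGramparitions}.
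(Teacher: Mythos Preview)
Your proposal is correct and follows essentially the same route as the paper: define the singleton contribution \(\m W_v\) (the paper writes \(\m W_{\mc S_i}(v)\)), read off from Proposition~\ref{prs:obsGramparitions} that \(\m W_{\mc S_i}=\sum_{v\in\mc S_i}\m W_v\) has the modular form of Definition~\ref{def:modular_submodular}, and then pass to \(\m W_{\pi_\kappa}\) via the conic-combination rule of Lemma~\ref{lem:convex_comb}. Your explicit remark that modularity is to be understood entrywise for the matrix-valued map is a useful clarification the paper leaves implicit.
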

\begin{proof}
	For each measured state \(v\in\mc{V}\), we can define its contribution to the subsystem observability Gramian as $\m{W}_{\mc{S}_i}(v) \coloneqq \sum_{k=0}^{N-1} \left(\m{A}^{k}\right)^\top \, \m{c}_v^\top \m{c}_v\, \m{A}^{k}$.
	Then, by definition, the observability Gramian for a subsystem $i\in \mc{C}$ is given by $\m{W}_{\mc{S}_i} = \sum_{v\in \mc{S}_i} \m{W}_{\mc{S}_i}(v)$. Note that this can be written in the modular form as $\m{W}_{\mc{S}_i} = \m{W}_{\mc{S}_i} (\emptyset) + \sum_{v\in\mc{S}_i} \m{W}_{\mc{S}_i} (v).$ 
	As such, let \(\mc{A}\) and \(\mc{B}\) be two disjoint subsets of \(\mc{S}_i\). Then,
	$$\m{W}_{\mc{S}_i}(\mc{A}\cup\mc{B}) 
	=\sum_{v\in \mc{A}\cup \mc{B}} \m{W}_{\mc{S}_i}(v)
	=\sum_{v\in \mc{A}} \m{W}_{\mc{S}_i}(v) + \sum_{v\in \mc{B}} \m{W}_{\mc{S}_i}(v).$$
	This additivity property proves that Gramian \(\m{W}_{\mc{S}_i}\) is a modular set function. Now, from Proposition~\ref{prs:obsGramparitions} we have 
	$$
	\m{W}_{\pi_{\kappa}} = \sum_{i\in\mc{C}}\m{W}_{\mc{S}_i} =  \sum_{i=1}^{\kappa} \sum_{v\in \mc{S}_i} \m{W}_{\mc{S}_i}(v).
	$$
	Since this is a sum of modular functions, it follows from Lemma~\ref{lem:convex_comb} that $\m{W}_{\pi_{\kappa}}$ is modular.
\end{proof}
The above theorem shows that the observability Gramian for each subsystem and the overall partitioned system are modular set functions. Accordingly, Corollary~\ref{corr:SubPartObsMetrics} shows the submodular properties of the observability-based metrics under $\pi_{\kappa}$.
\begin{mycor}\textit{(Submodular properties under a system partition)}\label{corr:SubPartObsMetrics}
	Let the set functions $f_i:2^{\mc{V}} \to \mathbb{R}_{\geq0}$ represent an observability based-measure, and let $f(\mc{S}) = \sum_{i\in\mc{C}} f_i(\mc{S}_i)$. Then, the function $f\left(\mc{S}\right) = \sum_{i \in \mc{C}} \trace\left(\m{W}_{\mc{S}_{i}}\right)$ is a modular set function, while $f\left(\mc{S}\right) = \sum_{i \in \mc{C}} \rank\left({\m{W}_{\mc{S}_{i}}}\right)$ and $f\left(\mc{S}\right) = \sum_{i \in \mc{C}} \logdet{\m{W}_{\mc{S}_{i}}}$ are submodular monotone increasing set functions. 
\end{mycor}

\begin{proof}
	The proof follows from Theorem~\ref{theo:modularGrampartitions}. For each subsystem $\mc{S}_i \in \pi_{\kappa}$, the observability Gramian can be written as 	$\m{W}_{\mc{S}_i} = \sum_{v\in \mc{S}_i} \m{W}_{\mc{S}_i}(v)$. For disjoint subsystems $\mc{S}_i$ the trace is a linear function that depends on the measured states $v \in \mc{S}_i$. This implies that the trace is modular set function. Then, the sum of trace functions evaluated for $i\in \mc{C}$ we obtain
	\begin{align*}
	f(\mc{S}) &= \sum_{i \in \mc{C}} \trace(\m{W}_{\mc{S}_i}) 
		\hspace{-0.05cm}=\hspace{-0.05cm} \sum_{i \in \mc{C}} \trace\left(\sum_{v\in \mc{S}_i}\hspace{-0.1cm} \sum_{k=0}^{N-1}\hspace{-0.1cm} \left(\m{A}^{k}\right)^\top\hspace{-0.1cm} \m{c}_v^\top \m{c}_v \m{A}^{k}\hspace{-0.1cm} \right),\hspace{-0.1cm}\\
		&=\sum_{i \in \mc{C}} \sum_{v\in \mc{S}_i} \trace\left(\m{W}_{\mc{S}_i}(v)\right),
	\end{align*}	
	it is straightforward to show for any $\mc{A}_i,\,\mc{B}_i \subseteq\mc{S}_i$ and $\mc{A}_i \cup \{v_i\} = \mc{B}_i$ such that $i \in \mc{C}$, we obtain an equality
	\begin{equation}\label{eq:proof-cor2-1}
	\hspace{-0.2cm}\sum_{i=1}^{\kappa} \hspace{-0.05cm} \Big(  \hspace{-0.05cm} \trace(\mc{A}_i\cup\{v\}) \hspace{-0.05cm}- \hspace{-0.05cm}
	\trace(\mc{A}_i)\Big) \hspace{-0.05cm} =  \hspace{-0.05cm}\sum_{i=1}^{\kappa} \hspace{-0.05cm} \Big(\trace(\mc{B}_i\cup\{v\})  \hspace{-0.05cm}- \hspace{-0.05cm} \trace(\mc{B}_i) \hspace{-0.05cm}\Big),
	\end{equation}
	then, from Lemma~\ref{lem:convex_comb}, $f(\mc{S})\hspace{-0.1cm} =\hspace{-0.1cm}\sum_{i \in \mc{C}} \trace(\m{W}_{\mc{S}_i}) $ is a modular set function (see Definition~\ref{def:modular_submodular}). The proof of submodularity of the $\log\!\det$ and $\rank$ follow from Theorem~\ref{theo:modularGrampartitions} and Lemma~\ref{lem:convex_comb}, such that under the sum over all subsystems, $\log\!\det$ and $\rank$ exhibit monotone submodular properties. That is the inequality holds true in~\eqref{eq:proof-cor2-1} for $\mc{A}_i,\,\mc{B}_i \subseteq\mc{S}_i$ and $\mc{A}_i \cup \{v_i\} = \mc{B}_i$. We only provide the rest of the proof for the $\log\!\det$ metric, consider set function $ f_v: 2^{\mc{X} \setminus \{v\}} \rightarrow \mathbb{R}_{\geq0},
	\;\text{for each }v \in \mc{V},$ defined by
	\begin{align*}
		f_v(\mc{S})\hspace{-0.05cm}=\hspace{-0.05cm}\sum_{i \in \mc{C}} \logdet{\m{W}_{\mc{S}_i}(\mc{S}_i \cup\{v\})}\hspace{-0.05cm} - \hspace{-0.05cm}\sum_{i \in \mc{C}} \logdet{\m{W}_{\mc{S}_i}(\mc{S}_i)}.
	\end{align*}
	
	Now, let \(\mc{A}\) and \(\mc{B}\) be two sets such that $
	\mc{A} \;\subseteq\; \mc{B} \;\subseteq\; \mc{X}\setminus\{v\}$. Then we can define for each \(i \in \mc{C}\) the sets $\mc{A}_i := \mc{A} \cap \mc{S}_i, \; \mc{B}_i := \mc{B} \cap \mc{S}_i,$ such that
	\begin{align*}
		\overline{\m{W}}_{\mc{S}_i}(\epsilon)\hspace{-0.05cm}
		:= \hspace{-0.05cm}\m{W}_{\mc{S}_i}\!\bigl(\mc{A}_i\bigr)\hspace{-0.05cm}+\hspace{-0.05cm}
		\epsilon\,\Bigl( \m{W}_{\mc{S}_i}\!\bigl(\mc{B}_i\bigr) - \m{W}_{\mc{S}_i}\!\bigl(\mc{A}_i\bigr) \Bigr), \; 0 \le \epsilon \le 1.
	\end{align*}
	
	The set function $f_v(\mc{S})$ under the action of the above convex combination can be written as
	\begin{align*}
		\widehat{f}_v(\epsilon) \hspace{-0.05cm}:= \hspace{-0.05cm}\sum_{i \in \mc{C}}
		\Bigl[
		\logdet{\overline{\m{W}}_{\mc{S}_i}(\epsilon) \hspace{-0.05cm}+\hspace{-0.05cm} \m{W}_{\mc{S}_i}\bigl(\{v\}\hspace{-0.05cm}\bigr)}\hspace{-0.05cm} -\hspace{-0.05cm}
		\logdet{\overline{\m{W}}_{\mc{S}_i}(\epsilon)}\hspace{-0.05cm}
		\Bigr]\hspace{-0.05cm}.
	\end{align*}
	We can observe that $\widehat{f}_v(\epsilon)$ for $\epsilon =1$ evaluated at $\mc{S}_i \cup\{v\} \rightarrow \mc{B}_i$ and for $\epsilon =0$ evaluated at $\mc{S}_i \cup\{v\} \rightarrow \mc{A}_i$ we obtain the following relation
	\begin{align*}
		\widehat{f}_v(1) - \widehat{f}_v(0) = \int_0^1 \frac{d}{d\epsilon}\,
		\bigl[\widehat{f}_v(\epsilon)\bigr] \,d\epsilon.
	\end{align*} 
	
	We now show that the derivative of $\widehat{f}_v(\epsilon)$ written as 
	\(\tfrac{d}{d\epsilon}\widehat{f}_v(\epsilon) \) is negative
	for all \(\epsilon \in [0,1]\). Using the Jacobi formula
	$\tfrac{d}{d\epsilon} \logdet{ \m{W}_{\mc{S}_i}(\epsilon)} = 
	\trace\!\bigl[ \m{W}_{\mc{S}_i}(\epsilon)^{-1}\tfrac{d \m{W}_{\mc{S}_i}(\epsilon)}{d\epsilon}\bigr], $
	we obtain 
	\begin{align*}
		\frac{d}{d\epsilon}\,\widehat{f}_v(\epsilon) \hspace{-.05cm}= \hspace{-.05cm}&
		\sum_{i \in \mc{C}}
		\trace\Bigl[\hspace{-.05cm}
		\Bigl(\hspace{-.05cm}\bigl(\hspace{-.05cm}\overline{\m{W}}_{\mc{S}_i}(\epsilon) ) \hspace{-.05cm}+\hspace{-.05cm} \m{W}_{\mc{S}_i}\bigl(\{v\}\bigr)\bigr)^{-1} \hspace{-.1cm} -\hspace{-.1cm}
		\overline{\m{W}}_{\mc{S}_i}(\epsilon)^{-1} \hspace{-.05cm}\Bigr) \\
		&\quad \bigl( \m{W}_{\mc{S}_i}\bigl(\mc{B}_i\bigr) \;-\; \m{W}_{\mc{S}_i}\bigl(\mc{A}_i\bigr)
		\bigr)
		\Bigr],
	\end{align*}
	now since $ \overline{\m{W}}_{\mc{S}_i}(\epsilon) \;\preceq\; 
	\overline{\m{W}}_{\mc{S}_i}(\epsilon) \;+\; \m{W}_{\mc{S}_i}\bigl(\{v\}\bigr), $ it follows that
	\begin{align*}
		\Bigl(\,\overline{\m{W}}_{\mc{S}_i}(\epsilon) + \m{W}_{\mc{S}_i}\bigl(\{v\}\bigr)\Bigr)^{-1} -
		\overline{\m{W}}_{\mc{S}_i}(\epsilon)^{-1}\preceq 0, 
	\end{align*}
	is negative semidefinite. Therefore,
	\(\tfrac{d}{d\epsilon}\widehat{f}_v(\epsilon)\le 0\). Then, given that the derivative is non-positive on \([0,1]\), we have $	\widehat{f}_v(1) \le \widehat{f}_v(0)$ which implies that $	f_v(\mc{B}) \le f_v(\mc{A})$. This is equivalent to showing $f_v(\mc{A}\cup\{v\}) - f_v(\mc{A}) \ge
	f_v(\mc{B}\cup\{v\}) - f_v(\mc{B})$, i.e., the submodularity and monotone decreasing properties of $f_v$. By the additive property (Lemma~\ref{lem:convex_comb}) over all $v \in \mc{V}$, ${f}(\mc{S})$ is monotone increasing and submodular.
\end{proof}

Theorem~\ref{theo:modularGrampartitions} and Corollary~\ref{corr:SubPartObsMetrics} show that the observability-based measures for partitioning $\mb{P2}$ are submodular, and thus that $\mb{P2}$ can be solved using greedy algorithms with proven theoretical guarantees. The next section introduces the SP problem for a partitioned LTI system and shows that it can be cast as a submodular maximization problem under a partition matroid constraint, rather than a cardinality constraint (when the system is not partitioned). 
\section{Sensor Selection for Subsystem Dynamics}\label{sec:SNS}
We can now formulate the SP problem for the resulting subsystems defined by partitions $\mc{S}_i \in \pi^{*}_{\kappa}$. Given any LTI system and its $\kappa$ partitions obtained by solving $\mb{P2}$, the problem of allocating sensors across the subsystems is investigated.
\subsection{Subsystem sensor placement over a partition matroid}\label{subsec:SNS-1}
 Let the ground set $\mc{V}$ represent the measurable state-space. In other words, it is the set of possible available sensor location combinations. Let $\mc{R}\subseteq \mc{V}$ be a subset of sensor nodes where $v \in \mc{R}$ indicates an observed state. 
The number of sensors to be employed within the subsystems can be denoted as $\left|\mc{R}\right| = r$. The SP problem $(\mb{P3})$, over the subsystems, that is subject to a partition matroid can be written accordingly.
\begin{problem}\label{problem:P3}
	Given a system partitioning $\pi^*_{\kappa}=\{\mc{S}_i\}_{i\in\mc{C}}$ of measurable state-space $\mc{V}$, the problem of allocating $r$ sensors across $\kappa$ subsystems $\mc{S}_i \in \pi^*_{\kappa}$ while maximizing an overall observability-based measure $f(\mc{R}): 2^{\mc{V}} \rightarrow \mathbb{R}_{\geq0}$, with $f(\mc{R}) := f\left(\bigcup_{i\in\mc{C}}(\mc{R}\cap\mc{S}_i)\right)$, can be written as
\end{problem}
\begin{flalign}\label{eq:P3}
	\mb{P3}\hspace{-0.1cm}:&  \quad f^*_{\mc{R}}:=\max_{\mc{R}\subseteq\mc{V},\;\mc{R}\in\mc{I}_{p_{r_i}}}\;\;f(\mc{R})=f\left(\bigcup_{i\in\mc{C}}(\mc{R}\cap\mc{S}_i)\right)\hspace{-0.1cm},\hspace{-0.2cm}&
\end{flalign}
where the partition matroid constraint is given by $\mc{I}_{p_{r_i}}=\left\{\mc{R} \subseteq \mc{V}: \;  |\mc{R} \cap \mc{S}_i| \leq r_i, \forall \; i  \in \mc{C} \right\}$, such that $r_i$ is the number of sensors to be allocated to subsystem $\mc{S}_i$. The total number of sensors is therefore $r = \sum_{i \in \mc{C}} r_i$ and each $v \in \mc{R}$ corresponds to an allocated sensor.

\begin{table}[t]
	\fontsize{9}{9}\selectfont
	\centering 
	\caption{Notations for parameterized observability measures for a full and partitioned system.}
	\label{tab:obs_nomenclature}
	\renewcommand{\arraystretch}{1.3}
	\resizebox{\linewidth}{!}{
		\begin{tabular}{l|l|l}
			\toprule
			\textbf{Description} & \textbf{Full System} & \textbf{Partitioned System} \\
			\midrule
			Measurable state-space  & $\mc{V}$ & $\bigcup_{i \in \mc{C}} \mc{S}_i = \mc{V}$ \\
			Number of partitions & -- & $\kappa = |\mc{C}|$ \\
			System partition & -- & $\pi_{\kappa} = \{\mc{S}_1, \dots, \mc{S}_{\kappa}\}$ \\
			Subsystem& -- & A partition of $\pi_{\kappa}$, $\mc{S}_i \subseteq \mc{V}$ \\
			Sensor configuration set & $\mc{R} \subseteq \mc{V}$ & $\mc{R}_i = \mc{R} \cap \mc{S}_i \subseteq \mc{S}_i$ \\
			Overall system Gramian & $\m{W}_o$ & $\m{W}_{\pi_\kappa} = \sum_{i \in \mc{C}} \m{W}_{\mc{S}_i}$ \\
			Subsystem Gramian & -- & $\m{W}_{\mc{S}_i} = \sum_{v \in \mc{S}_i} \m{W}_{\mc{S}_i}(v)$ \\
			Parameterized Gramian& $\m{W}_o(\mc{R})$ & $\sum_{i \in \mc{C}} \m{W}_{\mc{S}_i}(\mc{R}_i)$ \\
			Observability-based measure & $f(\mc{R})$ & $f(\mc{R}) = f\left(\bigcup_i \mc{R}_i\right)$ \\
			$\mr{trace}$ metric & $\trace(\m{W}_o(\mc{R}))$ & $ \trace(\sum_{i \in \mc{C}}\m{W}_{\mc{S}_i}(\mc{R}_i))$ \\
			$\log\!\det$ metric & $\log\!\det(\m{W}_o(\mc{R}))$ & $\logdet{\sum_{i \in \mc{C}} \m{W}_{\mc{S}_i}(\mc{R}_i)}$ \\
			$\rank$ metric & $\rank(\m{W}_o(\mc{R}))$ & $\rank\left(\sum_{i \in \mc{C}} \m{W}_{\mc{S}_i}(\mc{R}_i)\right)$ \\
			\bottomrule
			\bottomrule
		\end{tabular}
	} 
\end{table}

The function $f({\mc{R}})= f\left(\bigcup_{i\in\mc{C}}(\mc{R}\cap\mc{S}_i)\right)$ is a submodular observability-based measure that quantifies overall system observability based on a set of sensor combinations $\mc{R} \subseteq \mc{V}$. Note that, if $f(\mc{R})$ is modular then it reduces to $f({\mc{R}}) = \sum_{i \in \mc{C}} f_i\left(\mc{R} \cap {\mc{S}_{i}}\right)$. The partition matroid ensures that the sensors are allocated across the partitions while maximizing an overall system observability measure $f(\mc{R})$. This means that there are no restrictions on having equal sensor ratios among the subsystems. The reason is due to the difference in objective functions between $\mb{P3}$ and $\mb{P2}$, where the observability-based measure of each individual partition or subsystem is maximized in $\mb{P2}$ rather than overall system observability $f\left(\bigcup_{i\in\mc{C}}(\mc{R}\cap\mc{S}_i)\right)$ in $\mb{P3}$. We summarize the notations for a full and partitioned system parameterized by sensor configuration set $\mc{R}$ in Tab.~\ref{tab:obs_nomenclature}.

We now show that the overall observability measure \(f(\mc{R})\), parameterized under a sensor set $\mc{R}$, retains its submodular set properties. Subsequently, we provide a theoretical optimality bound for the observability measures under a partition matroid.
\subsection{Submodular observability-based SP over a partition}\label{subsec:SNS-2}
To describe submodularity of the observability-based measure $f(\mc{R})$, evaluated under a system partition, we first introduce the following corollary that establishes the modularity of the subsystem observability Gramian under a specified sensor configuration. That is, the matrix $\m{W}_{\mc{S}_i}$ is modular in terms of sensor allocations within each subsystem $\mc{S}_i \in \pi_{\kappa}^{*}$. 

\begin{mycor}(Gramian modularity under sensor parameterization)\label{cor:ObsGramPartMod}
Let \(\pi_\kappa^{*}=\{\mc S_i\}_{i\in\mc C}\) be a partition of the measurable state-space \(\mc V\) obtained from solving $\mb{P2}$, and let \(\mc R\subseteq\mc V\) be any sensor configuration set. The modularity of the Gramian $\m{W}_{\mc{S}_i}$ for $i\in\mc{C}$ parameterized by a sensor selection set, denoted as $\mc{R}_i = \mc{R}\cap\mc{S}_i$, holds true for any $\kappa$ partitions.
\end{mycor}
\begin{proof}
	The proof follows from Theorem~\ref{theo:modularGrampartitions}. For a sensor configuration $\mc{R}$, we obtain a subsystem sensor configuration $\mc{R}_i = \mc{R}\cap\mc{S}_i$, such that for disjoint partitions $\mc{S}_i \in \pi_{\kappa}$, we have $\mc{R}_i \cap \mc{R}_j = \emptyset$ for all $i \neq j$. Then for $i\in\mc{C}$, we can write
	{\abovedisplayskip=0.1cm \belowdisplayskip=0.1cm
	\begin{align*}
		\m{W}_{\mc{S}_i}(\mc{R}_i) = \m{W}_{\mc{S}_i}(\mc{R}\cap\mc{S}_i)
		\;=\;\sum_{v\in \mc{R}\cap\mc{S}_i}\m{W}_{\mc{S}_i}(v),
	\end{align*}}
	where $\m{W}_{\mc{S}_i}(v)$ represents the contribution of measurable state $v\in \mc{V}$ to the $\m{W}_{\mc{S}_i} (\mc{R}_i)$ parameterized by sensor set $\mc{R}_i$. As such, for two disjoint subsets \(\mc{A}_i\) and \(\mc{B}_i\) of \(\mc{R}_i\), we obtain 
	$\m{W}_{\mc{S}_i}(\mc{A}_i\cup\mc{B}_i) 
	=\sum_{v \in \mc{A}_i\cup \mc{B}_i} \m{W}_{\mc{S}_i}(v)
	=\sum_{v\in \mc{A}_i} \m{W}_{\mc{S}_i}(v) + \sum_{v\in \mc{B}_i} \m{W}_{\mc{S}_i}(v)$. 
	This additivity property proves that Gramian \(\m{W}_{\mc{S}_i}\), under sensor parameterization $\mc{R}_i$, is a modular set function. Now, from Proposition~\ref{prs:obsGramparitions} we have 
	$
	\m{W}_{\pi_{\kappa}}(\mc{R})= \sum_{i\in\mc{C}}\m{W}_{\mc{S}_i} =  \sum_{i=1}^{\kappa} \sum_{v\in \mc{R}_i} \m{W}_{\mc{S}_i}(v).
	$
	Since this is a sum of modular functions, it follows from Lemma~\ref{lem:convex_comb} that $\m{W}_{\pi_{\kappa}}(\mc{R})$ is modular.
\end{proof}
Following Corollary~\ref{cor:ObsGramPartMod}, we now can establish the submodularity of the measures under a sensor parameterization.

\begin{mycor}\label{cor:SubObsMetrics}
	(Submodularity of parameterized observability measures under a partition) Given a set of sensors $\mc{R}_i = \mc{R}\cap\mc{S}_i$ within each subsystem, let $\m{W}_{\mc{S}_i}(\mc{R}_i)$ represent the contributions of sensors belonging to $\mc{R}_i$ for subsystem $\mc{S}_i$. Then, the parameterized set function $f(\mc{R}) = \trace\bigl(\cup_{i\in\mc{C}}\m{W}_{\mc{S}_i}( \mc{R}_i)\bigr)$ retains its modular set properties and $f(\mc{R}) = \log\!\det\bigl(\cup_{i\in\mc{C}}\m{W}_{\mc{S}_i}(\mc{R}_i)
	\bigr)$ and $f(\mc{R}) = \rank\bigl(\cup_{i\in\mc{C}} \m{W}_{\mc{S}_i}(\mc{R}_i) \bigr)$ remain submodular and monotone increasing under any sensor parameterization $\mc{R}$.
\end{mycor}
\begin{proof}
	From Corollary~\ref{cor:ObsGramPartMod} and for $i \in \mc{C}$ we can compute a sensor set $\mc{R}\subseteq\mc{V}$ contribution to the subsystem observability Gramian as $\m{W}_{\mc S_i}(\mc R_i)=\sum_{v\in\mc R_i}\m{W}_{\mc S_i}(v)$, where $\mc R_i=\mc R\cap\mc S_i$. Now considering the $\mr{trace}$ measure, the parameterized set function $f(\mc{R})$ can be written as
	{\abovedisplayskip=0.1cm \belowdisplayskip=0.1cm
	\begin{align*}
		f(\mc{R}) = \trace\Bigl( \bigcup_{i \in \mc{C}} \m{W}_{\mc{S}_i}(\mc{R} \cap \mc{S}_i) \Bigr)=\trace\Bigl( \sum_{i \in \mc{C}} \m{W}_{\mc{S}_i}(\mc{R} \cap \mc{S}_i) \Bigr),
	\end{align*}}
	\noindent where the second equality is due to the disjoint partition $\pi_{\kappa}$ such that for any sensor configuration $\mc{R}$, we have $\mc{R}_i \cap \mc{R}_j = \emptyset$ for all $i \in \mc{C}$. Now, since the $\mr{trace}$ is a linear operator and the parameterized subsystems are disjoint we can write
	\begin{align*}
	f(\mc{R}) =	\trace\Bigl( \sum_{i \in \mc{C}} \m{W}_{\mc{S}_i}(\mc{R} \cap \mc{S}_i) \Bigr)
	=  \sum_{i \in \mc{C}} \sum_{v \in \mc{R}\cap\mc{S}_i} \trace\Bigl(\m{W}_{\mc{S}_i}(v)\Bigr),
	\end{align*}
	now, similar to the proof of Corollary~\ref{cor:ObsGramPartMod}, it is straightforward to show that the equality holds true for~\eqref{eq:submodular_def} (Definition~\ref{def:modular_submodular}) for any $\mc{A}_i, \mc{B}_i \subseteq \mc{R}_i$ and $\mc{A}_i \cup \{v_i\} = \mc{B}_i$ for all $i\in\mc{C}$. Thus, $f(\mc{R}) = \trace\bigl(\cup_{i\in\mc{C}}\m{W}_{\mc{S}_i}( \mc{R}_i)\bigr)$ retains its modularity under sensor parameterization. 
	
	We follow a similar approach in Corollary~\ref{corr:SubPartObsMetrics} for the proof of the submodularity of the $\log\!\det$ of a system partition under sensor parameterization represented by $\mc{R}$. Again, we define an interpolation function to establish the submodularity of ${f}(\mc{R})$. Consider the set function $ f_v: 2^{\mc{V} \setminus \{v\}_i} \;\rightarrow\; \mathbb{R}_{\geq0},
	\;\text{for each }v \in \mc{V},$ defined by
	\begin{align*}
		f_v(\mc{R})\hspace{-0.05cm}=\hspace{-0.05cm} \logdet{\sum_{i \in \mc{C}}\m{W}_{\mc{S}_i}(\mc{R}_i \cup\{v\}_i)}\hspace{-0.1cm} - \hspace{-0.1cm}\logdet{\sum_{i \in \mc{C}}\m{W}_{\mc{S}_i}(\mc{R}_i)},
	\end{align*}
	where, for each \(i \in \mc{C}\), we denote \(\mc{R}_i := \mc{R} \cap \mc{S}_i\) and $\{v\}_i = v$ if $v\in \mc{S}_i$. Then, we can define a PSD matrix $\overline{\m{W}}(\epsilon)$ with $	0 \,\le\, \epsilon \,\le\, 1$ as follows
	\begin{align*}
		\overline{\m{W}}(\epsilon) =
		\sum_{i \in \mc{C}} \Bigl[
		\m{W}_{\mc{S}_i}(\mc{A}_i) + \epsilon 
		\Bigl( \m{W}_{\mc{S}_i}(\mc{B}_i) -  \m{W}_{\mc{S}_i}(\mc{A}_i) \Bigr)\Bigr],
	\end{align*}
	where $\overline{\m{W}}(\epsilon)$ interpolates over sets $\mc{A}_i := \mc{A} \cap \mc{S}_i,\; \text{and }
	\mc{B}_i := \mc{B} \cap \mc{S}_i,$ where $\mc{A}, \mc{B} \subseteq \mc{V}$ and $\mc{A} \subseteq \mc{B}$ such that $v \notin \mc{B}$. Now by considering
	\begin{align*}
		\widehat{f}_v(\epsilon) :=
		\log\!\det\!\Bigl(
		\overline{\m{W}}(\epsilon)+
		\sum_{i \in \mc{C}} \m{W}_{\mc{S}_i}\bigl(\{v\}_i\bigr)
		\Bigr) -
		\log\!\det\!\bigl(\overline{\m{W}}(\epsilon)\bigr).
	\end{align*}
	
	We can observe that by applying the derivative of $\widehat{f}_v(\epsilon)$ and showing in a similar methodology as in the proof of Corollary~\ref{corr:SubPartObsMetrics} that \(\tfrac{d}{d\epsilon}\widehat{f}_v(\epsilon)\le 0\). Then,  $\widehat{f}_v(1)
	\;=\;
	f(\mc{B}\cup\{v\}) \;-\; f(\mc{B}),
	\quad
	\widehat{f}_v(0)
	\;=\;
	f(\mc{A}\cup\{v\}) \;-\; f(\mc{A})$, such that  $\widehat{f}_v(1) \le \widehat{f}_v(0)$. Therefore establishing the submodularity of $f(R) =  \log\!\det\bigl(\cup_{i\in\mc{C}}\m{W}_{\mc{S}_i}(\mc{R}_i)
	\bigr)$. For brevity, we do not show the matrix derivative of $\log\!\det(\overline{\m{W}}(\epsilon))$ and solve for the derivative of $\widehat{f}_v(\epsilon)$. This follows a similar approach to the proof of Corollary~\ref{corr:SubPartObsMetrics}.
\end{proof}

Given that we are solving $\mb{P3}$ while maximizing an overall observability measure over the system partition rather than over individual subsystems, it is important to understand how the resulting sensor configuration for overall observability compares to that of subsystem observability. Note that, in the case where we are allocating sensors for local subsystem observability, we use $f(\mc{R}) = \sum_{i \in \mc{C}} f\left(\mc{R}\cap\mc{S}_i\right)$ when solving $\mb{P3}$. With that in mind, we show that there exists a bound for the different observability-based measures. The bound relates overall system observability to the sum of the observability metrics computed individually for each subsystem using the same sensor configuration. This result enables us to quantify the impact of decentralized sensor contributions on an overall observability-based objective. The following theorem establishes this bound on the resulting SP configuration within the subsystems as compared to overall system observability under the same sensor parameterization.
\begin{theorem}\label{theo:lowerbound}
	\textit{(Bound on parameterized subsystem observability measures)} Let $f: 2^{\mc{V}} \rightarrow \mathbb{R}_{\geq 0}$ be a $\mr{trace}$ or $\log\!\det$ set function quantifying the overall observability of parameterized subsystems $\mc{S}_i$. Then, for any set $\mc{R}$ with $|\mc{R}|=r$ sensors, the following holds true
	\begin{equation}\label{eq:lowerbound}
		f	\left(\bigcup_{i\in\mc{C}}(\mc{R}\cap\mc{S}_i)\right)\geq \sum_{i \in \mc{C}} f_i\left({\mc{R}\cap\mc{S}_{i}}\right),
	\end{equation}
	and the reverse inequality holds true for $\rank$, where $\pi_{\kappa}=\{\mc{S}_i\}_{i\in\mc{C}}$ is any partition of $\mc{V}$ and $f_i: 2^{\mc{V}} \rightarrow \mathbb{R}_{\geq 0}$ is the corresponding observability-based measure for subsystem $\mc{S}_i$.
\end{theorem}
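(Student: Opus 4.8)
The plan is to reduce \eqref{eq:lowerbound} to a matrix inequality about sums of positive semidefinite matrices and then treat the three metrics separately. Since $\pi_{\kappa}$ partitions $\mc{V}$, the sensor subsets $\mc{R}_i := \mc{R}\cap\mc{S}_i$ are pairwise disjoint with $\bigcup_{i\in\mc{C}}\mc{R}_i=\mc{R}$, so (see Proposition~\ref{prs:obsGramparitions} and Corollary~\ref{cor:ObsGramPartMod}) the Gramian appearing on the left of \eqref{eq:lowerbound} is the full parameterized Gramian and decomposes as $\m{W}_o(\mc{R})=\sum_{i\in\mc{C}}\m{W}_{\mc{S}_i}(\mc{R}_i)$, with every summand $M_i:=\m{W}_{\mc{S}_i}(\mc{R}_i)\succeq 0$, while each $f_i(\mc{R}\cap\mc{S}_i)$ is the same spectral functional applied to $M_i$. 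Hence \eqref{eq:lowerbound} is equivalent to $g\!\left(\sum_{i\in\mc{C}}M_i\right)\ge\sum_{i\in\mc{C}}g(M_i)$ for $g\in\{\trace,\log\!\det\}$, with the reverse inequality for $g=\rank$.

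I would dispose of the $\trace$ and $\rank$ cases first. Linearity of the trace gives $\trace\!\left(\sum_i M_i\right)=\sum_i\trace(M_i)$, so \eqref{eq:lowerbound} holds with equality for the trace metric. For the rank, $\range\!\left(\sum_i M_i\right)\subseteq\sum_i\range(M_i)$ yields subadditivity, $\rank\!\left(\sum_i M_i\right)\le\sum_i\rank(M_i)$, which is exactly the asserted reverse inequality; both statements pass from two summands to $\kappa$ summands by an immediate induction, and the empty set and the $\kappa=1$ case are trivial.

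The substance is the $\log\!\det$ metric. I would split on whether every $M_i$ is nonsingular. If some $M_j$ is singular then $\log\!\det(M_j)=-\infty$ — the typical situation in a genuine partition, where subsystem $j$ alone collects too few sensor rows $\m{c}_v$ to make the full state observable — so the right side of \eqref{eq:lowerbound} is $-\infty$ and the bound is vacuous. When every $M_i\succ 0$, the tool is superadditivity of the determinant on the positive semidefinite cone: Minkowski's inequality $\det(A+B)^{1/n_x}\ge\det(A)^{1/n_x}+\det(B)^{1/n_x}$ together with $(x+y)^{n_x}\ge x^{n_x}+y^{n_x}$ for $x,y\ge 0$ gives $\det(A+B)\ge\det(A)+\det(B)$, and iterating over $\mc{C}$ gives $\det\!\left(\sum_i M_i\right)\ge\sum_i\det(M_i)$; this, together with the Loewner monotonicity of $\log\!\det$ on the positive definite cone (each $\sum_i M_i\succeq M_j$), is then used to pin $\log\!\det\!\left(\sum_i M_i\right)$ below $\sum_i\log\!\det(M_i)$. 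I expect this last step to be the main obstacle: superadditivity bounds the logarithm of a sum of determinants, whereas \eqref{eq:lowerbound} demands the sum of their logarithms, and closing that gap is where the rank-one structure of each $M_i=\sum_{v\in\mc{R}_i}\sum_{k}(\m{A}^{k})^{\top}\m{c}_v^{\top}\m{c}_v\m{A}^{k}$ — and hence the rank deficiency of the individual subsystem Gramians in any nontrivial partition — has to be exploited to reduce back to the vacuous case, with the genuinely positive-definite regime handled through the PSD monotonicity above.
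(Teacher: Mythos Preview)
Your treatment of the $\trace$ and $\rank$ cases is essentially identical to the paper's: linearity of the trace gives equality, and subadditivity of the rank (via the range inclusion you state) gives the reverse inequality.

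For $\log\!\det$, however, your route diverges from the paper's and leaves a genuine gap that you yourself flag. From determinant superadditivity you reach $\log\det\bigl(\sum_i M_i\bigr)\ge \log\bigl(\sum_i \det M_i\bigr)$, but the target is $\sum_i \log\det M_i=\log\bigl(\prod_i \det M_i\bigr)$, and there is no general inequality between $\sum_i d_i$ and $\prod_i d_i$ for positive reals $d_i$. Your two proposed patches do not close this: (a) Loewner monotonicity $\sum_i M_i\succeq M_j$ only yields $\log\det\bigl(\sum_i M_i\bigr)\ge \max_j \log\det M_j$, not the sum; and (b) the hope that in any ``nontrivial'' partition each $\m W_{\mc S_i}(\mc R_i)$ is singular is unfounded---a single sensor can render the full $n_x\times n_x$ Gramian nonsingular whenever $(\m A,\m c_v)$ is an observable pair, so the positive-definite regime cannot be declared vacuous.

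The paper's argument for $\log\!\det$ also starts from Minkowski's determinant inequality, but instead of passing to superadditivity of $\det$, it keeps the $1/n$-th roots, applies the logarithm to obtain $\log\det\bigl(\sum_i M_i\bigr)^{1/n}\ge \log\bigl(\sum_i \det(M_i)^{1/n}\bigr)$, and then invokes Jensen's inequality for the concave logarithm with convex-combination weights $\alpha_i=r_i/r$ (so $\sum_i\alpha_i=1$) to convert the log-of-a-sum on the right into a sum-of-logs. That Jensen step, absent from your outline, is precisely the missing bridge; if you want to repair your argument along the paper's lines, this is the ingredient to insert in place of the superadditivity/monotonicity detour.
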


\begin{proof}
	For each of the different measures we prove a lower bound for the observability of the LTI system accordingly. 
	\noindent \textit{(i)} The $\mr{trace}$ is a linear modular set function under sensor parameterization (Corollary~\ref{cor:ObsGramPartMod}), then the lower bound in~\eqref{eq:lowerbound} holds as an equality as follows.
	\begin{align*}
		\trace\Bigl( \bigcup_{i \in \mc{C}} \m{W}_{\mc{S}_i}(\mc{R} \cap \mc{S}_i)\Bigr) 
		=	\sum_{i \in \mc{C}}\trace\Bigl(\m{W}_{\mc{S}_i}(\mc{R} \cap \mc{S}_i) \Bigr).
	\end{align*}
	\noindent \textit{(ii)} The $\log\!\det$ is submodular and monotone increasing. For all subsystems Gramians \(\m{W}_{\mc{S}_i}(\mc{R}\cap\mc{S}_i)\succeq 0\), we employ the Minkowski determinant inequality~\cite[Theorem 7.8.21]{Horn1985}, which states that for any square matrices $\m{A}, \m{B} \succeq 0 $, the following inequality holds $\det(\m A+ \m B)^{\tfrac{1}{n}} \geq \det(\m A)^{\tfrac{1}{n}}+\det(\m B)^{\tfrac{1}{n}}$. By applying the logarithm to the Minkowski inequality of the sum of subsystem Gramians, the following holds true
	\begin{align*}
		\log\!\det\Bigl(\sum_{i\in\mc{C}} \m{W}_{\mc{S}_i}(\mc{R}_i) \Bigr)^{\tfrac{1}{n}}
		\geq \log\Bigl(\sum_{i\in\mc{C}} \det\bigl(\m{W}_{\mc{S}_i}(\mc{R}_i) \bigr)^{\tfrac{1}{n}}\Bigr).
	\end{align*}
	This inequality implies that the $\log\!\det$ of the sum of subsystem Gramians is greater than or equal to a function involving the log of the sum of determinants of $\m{W}_{\mc{S}_i}$. Now, given that the $\log\!\det$ is a concave function, and using Jensen's inequality~\cite[Lemma 1.1.2]{Niculescu2006} which states that $\sum_{i=1}^n \alpha_i f\left(\m{x}_i\right) \leq f\left(\sum_{i=1}^n \alpha_i \m{x}_i\right)$, where $\sum_{i=1}^n \alpha_i=1$. Then by defining $\alpha_i = \tfrac{r_i}{r}$, we can write that
	\begin{align*}
		\log&\left(\sum_{i \in \mc{C}} \tfrac{1}{\alpha_i}\det(\m{W}_{\mc{S}_i})\right)\geq \sum_{i \in \mc{C}}\tfrac{1}{\alpha_i}\log\!\det(\m{W}_{\mc{S}_i})\\ &\qquad= \log\bigl(\prod_{i \in \mc{C}} \det(\m{W}_{\mc{S}_i})^{\tfrac{1}{\alpha_i}}\bigr)= \sum_{i \in \mc{C}} \tfrac{1}{\alpha_i}\logdet{\m{W}_{\mc{S}_i}}.
	\end{align*}
	 Then, for $n=1$ and $\sum_{i\in \mc{C}}\alpha_i = \sum_{i\in \mc{C}}\tfrac{r_i}{r} = 1$, we obtain the lower bound as
	\begin{align*}
		\log\!\det\Bigl(\sum_{i\in\mc{C}} \m{W}_{\mc{S}_i}(\mc{R}_i) \Bigr) 
		\geq \sum_{i \in \mc{C}}\logdet{\m{W}_{\mc{S}_i}(\mc{R}_i)}.
	\end{align*}
	\noindent\textit{(iii)} For any partition with each \(\m W_{\mc S_i}(\mc R\cap\mc S_i)\succeq 0\) we obtain the following $\rank$ measure inequality for any sum of matrices.
	\begin{align*}
	\rank\Bigl(\sum_{i\in\mc{C}} \m{W}_{\mc{S}_i}(\mc{R}_i)\Bigr) \leq \sum_{i\in\mc{C}} \rank\Bigl(\m{W}_{\mc{S}_i}(\mc{R}_i)\Bigr).\label{eq:proof-theo2-rank}
	\end{align*}
	This completes the proof.
\end{proof}
\begin{myrem}\label{rmk:cardinalitytopartition}
	The set $\mc{R}$ representing the sensors allocated to the partitioned measurable state-space can yield sub-optimal results when considering the $\rank$ metric. By contrast, the optimality of the result for the $\mr{trace}$ and $\log\!\det$ measures is bounded by the sum of individual subsystem observability measures. The usage of specific observability metrics remains application-dependent but the theoretical results offer the described trade-offs between optimality and performance.
\end{myrem}

We note that a sensor $v \in \mc{R}$ is selected under a measurable state-space partition $\pi_{\kappa}$ with subsystems that are interconnected. This means that the contribution of this sensor within a subsystem impacts overall observability through the dynamical coupling of the states. As a result, and considering the above theorem, we can say that when solving the SP problem for the $\mr{trace}$ and $\log\!\det$ metrics over a system partition, we obtain an overall observability solution that is greater than or equal to the sum of the measures over each individual subsystem. This shows that considering a centralized SP formulation over an interconnected partitioned system results in performance that is bounded relative to the sum of subsystem observability measures. We show this in the case studies in Section~\ref{sec:simulation}.

In the next section, we present the framework and algorithm used to solve the observability-based partitioning and SP problems. This includes detailing the continuous relaxation of a submodular set maximization problem.

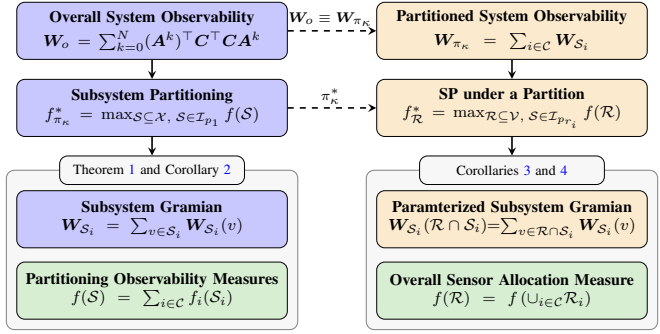
\begin{figure}[t]
	\vspace{+0.1cm}
	\centering
	\resizebox{1\linewidth}{!}{ 
		\begin{tikzpicture}[node distance=1.35cm]
			\node (fullsys) [colorbox1] {\textbf{Overall System Observability\vspace{+0.1cm}} \\ 
				$\m{W}_o = \sum_{k=0}^{N-1} (\m{A}^{k})^\top \m{C}^\top \m{C} \m{A}^{k}$}; 
			
			\node (partitioning) [colorbox1, below of=fullsys] {\textbf{Subsystem Partitioning} \\ $f^{*}_{\pi_{\kappa}}= \max_{\mc{S}\subseteq\mc{X},\; \mc{S} \in \mc{I}_{p_1}}  f(\mc{S})$};
			
			\node (obsmetric) [colorbox1, below of=partitioning, yshift=-0.6cm] {\textbf{Subsystem Gramian} \\ $\m{W}_{\mc{S}_i} = \sum_{v\in \mc{S}_i} \m{W}_{\mc{S}_i}(v)$};
			
			\node (submod) [colorbox3, below of=obsmetric,yshift=0.1cm] {\textbf{Partitioning Observability Measures} \\ 
				$f(\mc{S}) = \sum_{i \in \mc{C}} f_i(\mc{S}_i)$};
			
			\node (partsys) [colorbox2, right=1.55cm of fullsys] {\textbf{Partitioned System Observability\vspace{+0.1cm}} \\ 
				$\m{W}_{\pi_{\kappa}} = \sum_{i\in\mc{C}} \m{W}_{\mc{S}_i}$}; 
			
			\node (sensorsel) [colorbox2, right=1.55cm of partitioning] {\textbf{SP under a Partition} \\ 
				$f^*_{\mc{R}}= \max_{\mc{R}\subseteq\mc{V},\; \mc{R} \in \mc{I}_{p_{r_i}}}  f(\mc{R})$};
			
			\node (matroid) [colorbox2, below of=sensorsel, yshift=-0.6cm] {\textbf{Parameterized Subsystem Gramian} \\ 
				$\m{W}_{\mc{S}_i}(\mc{R}\cap\mc{S}_i) \hspace{-0.1cm}=\hspace{-0.1cm}\sum_{v\in \mc{R}\cap\mc{S}_i}\m{W}_{\mc{S}_i}(v)$};
			
			\node (finalsol) [colorbox3, below of=matroid, yshift=0.1cm] {\textbf{Overall Sensor Allocation Measure} \\ 
				$f(\mc{R}) =f\left(\cup_{i\in\mc{C}} \mc{R}_{i}\right)$};
			
			\draw [arrow] (fullsys) -- (partitioning);

			\draw [dashedarrow] (fullsys.east) -- ++(1cm,0) |- (partsys.west);
			\draw [dashedarrow] (partitioning.east) -- ++(1cm,0) |- (sensorsel.west);
			
			\draw [arrow] (partsys) -- (sensorsel);
			
			\begin{scope}[on background layer]
				\coordinate (extra1) at ($(obsmetric.north) + (0,5pt)$);
				\node[
				draw=gray, 
				fill={rgb,255:red,247; green,247; blue,247}, 
				rounded corners, 
				inner sep=5pt, 
				fit= (obsmetric) (submod) (extra1),
				] (block1) {};
			\end{scope}
			
			\node (theo1) [colorbox4, anchor =north] at ([xshift=0cm, yshift=+0.25cm]block1.north) 			{Theorem~\ref{theo:modularGrampartitions} and Corollary~\ref{corr:SubPartObsMetrics}};
			
			\begin{scope}[on background layer]
				\coordinate (extra2) at ($(matroid.north) + (0,5pt)$);
				\node[
				draw=gray, 
				fill={rgb,255:red,247; green,247; blue,247}, 
				rounded corners, 
				inner sep=5pt, 
				fit= (matroid) (finalsol) (extra2),
				] (block2) {};
			\end{scope}

			\node (theo2) [colorbox4, anchor =north] at ([xshift=0cm, yshift=+0.25cm]block2.north) 	{Corollaries~\ref{cor:ObsGramPartMod} and~\ref{cor:SubObsMetrics}};
			
			\draw [arrow] (partitioning) -- (theo1);
			\draw [arrow] (sensorsel) -- (theo2);
			
			\node (pistar) 
			at ($($(fullsys.east)!0.45!(partsys.west)$)+(0.05cm,0.2cm)$) 
			{\scriptsize $\m{W}_o \equiv \m{W}_{\pi_{\kappa}}$};
			\node (pistar) 
			at ($($(partitioning.east)!0.45!(sensorsel.west)$)+(0.05cm,0.2cm)$) 
			{\scriptsize $\pi_{\kappa}^{*}$};
			
		\end{tikzpicture}
	} 
	\vspace{-0.2cm}
	\caption{The framework for submodular partitioning and sensor placement in LTI systems. 
		The left column partitions the measurable state-space by solving $\mb{P2}$ while quantifying observability based on $\kappa$ subsystems.
		The right column represents the resulting partitioned system, where $\mb{P3}$ is solved under a partition matroid constraint. Gray boxes show the modular and submodular properties of the subsystem observability Gramians, as established in Theorem~\ref{theo:modularGrampartitions} and Corollaries~\ref{corr:SubPartObsMetrics}--\ref{cor:SubObsMetrics}.}\label{fig:submodular_P1-P2}
\end{figure}
\section{Partitioning and observability via the Continuous Multilinear Extension}\label{sec:multilinear}
The framework of observability-based partitioning and SP for an LTI system is illustrated in Fig.~\ref{fig:submodular_P1-P2}. As a result of posing the problems under a matroid constraint, and due to the submodular properties of the observability-based measures illustrated in Theorem~\ref{theo:modularGrampartitions} and Corollaries~\ref{corr:SubPartObsMetrics}--\ref{cor:SubObsMetrics}, we exploit greedy algorithms to solve observability-based system partitioning and SP over the interconnected subsystems.

For a matroid constraint, solving the submodular maximization problem continuously offers a better optimality guarantee. This can be done by applying continuous extensions to a submodular function; that is, extending set function $f$ to a function $F: [0,1]^{n} \rightarrow \mathbb{R}_{\geq 0}$, which agrees with $f$ on the hypercube vertices~\cite{Bai2018}. An extension to submodular functions under maximization, such as the multilinear extension~\cite{Vondrak2008a}, is equivalent to the approximate concave closure of the objective set function. The application of the multilinear relaxation to a utility-maximization problem of multi-agent systems under a partition matroid is studied in~\cite{Rezazadeh2021}. In this work, we apply such an extension to solve the partitioning problem $\mb{P2}$ and the SP problem $\mb{P3}$ using the continuous greedy algorithm. We note that distributed algorithms, such as the one introduced in~\cite{Rezazadeh2021}, can be utilized for both problem formulations in this manuscript; however, for brevity, we do not attempt to solve the partitioning and SP problems using different algorithms. Furthermore, for succinctness in exposition, we write the extension for the partitioning problem $\mb{P2}$; an equivalent formulation can be written for $\mb{P3}$ by changing the ground set and the partition matroid constraint.

For a monotone submodular function $f: 2^{\mc{X}} \rightarrow \mbb{R}_{\geq 0}$, we consider its multilinear extension $F: [0,1]^n \rightarrow \mbb{R}_{\geq 0}$ as 
\begin{equation}\label{eq:multilinext}
	F(\mathbf{x}) = \sum_{\mc{S} \subseteq \mc{X}} f(\mc{S}) \prod_{s \in \mc{S}}[\mathbf{x}]_s \prod_{s \notin \mc{S}}(1 - [\mathbf{x}]_s),
\end{equation}
where $[\mathbf{x}]_s$ denotes the inclusion probability of element $s \in \mc{X}$ and $n = |\mc{X}|$. Let the set $\mc{S}_{\mathbf{x}}$ be any subset for $\mb{x} \in[0,1]^n$ where each element $s\in\mc{X}$ is included independently with probability $[\mathbf{x}]_s$. Then, the multilinear extension can be equivalently expressed by randomly sampling sets $\mc{S}$ to probabilities in $[\mathbf{x}]_s$ as the expected value, $F(\mathbf{x}) = \mathbb{E}\left[f(\mc{S}_{\mathbf{x}})\right]$, where $\mathbb{E}[\cdot]$ denotes the expected value of $f\left(\mc{S}_{\mb{x}}\right)$. The submodular set maximization problem under the application of the multilinear extension can be written as
\begin{equation}\label{eq:sns_submod_extended}
	F^*_{\mc{S}} :=\max_{\mc{S}\subseteq\mc{X},\; \mc{S} \in \mc{I}_{p_1}} \;\;\;\; F(\mb{x}).
\end{equation}

The first partial derivative of the multilinear extension $F(\mb{x})$ can be written as
\begin{align}
	\frac{\partial F(\mb{x})}{\partial[\mb{x}]_s} &= \mathbb{E}\left[f(\mc{S}_{\mb{x}} \cup \{s\}) - f(\mc{S}_{\mb{x}} \setminus \{s\})\right]. \label{eq:partialF}
\end{align}

Readers are referred to~\cite{Vondrak2008a} for the derivation of the above partial derivative. Note that $F$ is concave in every direction when $f$ is polymatroid.
This means that $\tfrac{\partial F(\mb{x})}{\partial[\mb{x}]_s} \geq 0$ if and only if $f$ is monotone non-decreasing, and the second partial derivative, $\tfrac{\partial^2 F(\mb{x})}{\partial[\mb{x}]_a \partial[\mb{x}]_b}$, for distinct $a, b \in \mc{X}$, is $\leq 0$ if and only if $f$ is submodular.

To solve this continuous relaxation, we apply the continuous greedy Algorithm~\ref{alg:continuous_greedy}. The algorithm follows the direction of maximum expected marginal gain under a partition matroid constraint. The algorithm defines a path $\mb{x}:[0,1] \rightarrow \mc{S}_{\mb{x}}$, where $\mb{x}(0)=0$ and $\mb{x}(1)$ is the output of the algorithm. The gradient of $\mb{x}$ is chosen greedily in $\mc{X}$ in order to maximize $F$. This results in maximizing $\frac{\mathrm{d}}{\mathrm{d} l} \mb{x}(l) =\argmax_{\mb{x}\in \mc{S}_{\mb{x}}} \frac{\partial F}{\partial[\mb{x}]_s}(\mb{x}(l))$, where $l \in [0,1]$. Due to the equality in~\eqref{eq:partialF}, this is equivalent to maximizing the expected marginal gain as follows
\begin{align}\label{eq:expected_value}
	\underset{\mc{S}\in \mc{I}}{\mr{argmax}} \sum_{s \in \mc{S}} w_s \sim \mathbb{E}\left[f\left(\mc{S}_{\mb{x}} \cup\{s\}\right)-f\left(\mc{S}_{\mb{x}} \backslash\{s\}\right)\right],
\end{align}
Once a fractional solution $\mb{x}^*$ is obtained, a \emph{pipage rounding}~\cite{Ageev2004} algorithm is used to recover a discrete feasible solution. It is shown in~\cite{Calinescu2011}, that pipage rounding can be used to round the solution while maintaining its feasibility; interested readers are referred to~\cite[Section 3.2]{Calinescu2011}.
\begin{myrem}
	In $\mb{P2}$, the underlying matroid constraint is a simple partition matroid, $\kappa_i=1$. Thus, the continuous greedy algorithm can be simplified by performing independent randomized rounding for each state after obtaining the fractional solution $\mb{x}$. This rounding preserves the $(1-1/e)$ guarantee without requiring pipage rounding~\cite{Calinescu2011}.
\end{myrem}
\begin{algorithm}[t]
	\caption{Continuous greedy algorithm~\cite{Calinescu2011} for solving $\tikzmarknode{P1-0}{\mb{P2}}$ and $\tikzmarknode{P2-0}{\mb{P3}}$.}\label{alg:continuous_greedy}
	\DontPrintSemicolon
	\textbf{Input:} Ground set $\tikzmarknode{P1-1}{\mc{X}}$ , $\tikzmarknode{P2-1}{\mc{V}}$ , function $\tikzmarknode{P1-2}{f(\mc{S})}$ , $\tikzmarknode{P2-2}{f(\mc{R})}$ , matroid $\tikzmarknode{P1-3}{\mc{I}_{p_1}}$ , $\tikzmarknode{P2-3}{\mc{I}_{p_{r_i}}}$ , partitions $\kappa$, sensors $r$\;\vspace{0.1cm}
	\textbf{Initialize:} $\mb{x} \leftarrow \mathbf{0}$, $\tikzmarknode{P1-4}{\mc{S} \leftarrow \mb{0}}$ , $\tikzmarknode{P2-4}{\mc{R} \leftarrow \mb{0}}$\;
	
	\For{$t \leftarrow 1$ \KwTo $T$}{
		Estimate $\frac{\partial F}{\partial [\mb{x}]_s}$ for either $\tikzmarknode{P1-8}{s \in \mc{X}}$ or $\tikzmarknode{P2-8}{r \in \mc{V}}$\;\vspace{0.1cm}
		\tikzmarknode{P1-5}{Select $\mc{S}^* = \arg\max_{\mc{S} \in \mc{I}_{p_1}} \sum_{s \in \mc{S}} \frac{\partial F}{\partial [\mb{x}]_s}\vspace{0.2cm}$ using~\eqref{eq:expected_value},}\vspace*{0.2cm}
		\tikzmarknode{P2-5}{or $\mc{R}^* = \arg\max_{\mc{R} \in \mc{I}_{p_{r_i}}} \sum_{r \in \mc{R}} \frac{\partial F}{\partial [\mb{x}]_r}$ using~\eqref{eq:expected_value}}\;\vspace{0.2cm}
		\tikzmarknode{P1-6}{Update $\mb{x} \leftarrow \mb{x} + \frac{1}{\kappa} \cdot \mathbf{1}_{\mc{S}^*}$,}\hspace{0.1cm}
		\tikzmarknode{P2-6}{or $ \mb{x} \leftarrow \mb{x} + \frac{1}{\kappa} \cdot \mathbf{1}_{\mc{R}^*}$}\;
	}
	Apply random sampling to obtain $\tikzmarknode{P1-7}{\mc{S}^* \subseteq \mc{X}}$ , or pipage rounding to obtain $\tikzmarknode{P2-7}{\mc{R}^* \subseteq \mc{V}}$\;
\end{algorithm}

\tikz[overlay, remember picture]{
	\node[highlightP1] at (P1-0.base) {\(\mb{P2}\)};}
\tikz[overlay, remember picture]{
	\node[highlightP1] at (P1-1.base) {\(\mc{X}\)};}
\tikz[overlay, remember picture]{
	\node[highlightP1] at (P1-2.base) {\(f(\mc{S})\)};}
\tikz[overlay, remember picture]{
	\node[highlightP1] at (P1-3.base) {\(\mc{I}_{p_1}\)};}
\tikz[overlay, remember picture]{
	\node[highlightP1] at (P1-4.base) {\(\mc{S} \leftarrow \mb{0}\)};}
\tikz[overlay, remember picture]{
	\node[highlightP1] at (P1-5.base) {Select $\mc{S}^* = \arg\max_{\mc{S} \in \mc{I}_{p_1}} \sum_{s \in \mc{S}} \frac{\partial F}{\partial [\mb{x}]_s}$ using~\eqref{eq:expected_value},\vspace{0.1cm}};}
\tikz[overlay, remember picture]{
	\node[highlightP1] at (P1-6.base) {Update $\mb{x} \leftarrow \mb{x} + \frac{1}{\kappa} \cdot \mathbf{1}_{\mc{S}^*}$,}; }
\tikz[overlay, remember picture]{
	\node[highlightP1] at (P1-7.base) {\(\mc{S}^* \subseteq \mc{X}\)};}
\tikz[overlay, remember picture]{
	\node[highlightP1] at (P1-8.base) {\(s \in \mc{X}\)};}

\tikz[overlay, remember picture]{
	\node[highlightP2] at (P2-0.base) {\(\mb{P3}\)};}
\tikz[overlay, remember picture]{
	\node[highlightP2] at (P2-1.base) {\(\mc{V}\)};}
\tikz[overlay, remember picture]{
	\node[highlightP2] at (P2-2.base) {\(f(\mc{R})\)};}
\tikz[overlay, remember picture]{
	\node[highlightP2] at (P2-3.base) {\(\mc{I}_{p_{r_i}}\)};}
\tikz[overlay, remember picture]{
	\node[highlightP2] at (P2-4.base) {\(\mc{R} \leftarrow \mb{0}\)};}
\tikz[overlay, remember picture]{
	\node[highlightP2] at (P2-5.base) {or $\mc{R}^* = \arg\max_{\mc{R} \in \mc{I}_{p_{r_i}}} \sum_{r \in \mc{R}} \frac{\partial F}{\partial [\mb{x}]_r}$ using~\eqref{eq:expected_value}\vspace{0.1cm}};}
\tikz[overlay, remember picture]{
	\node[highlightP2] at (P2-6.base) {or $\mb{x} \leftarrow \mb{x} + \frac{1}{\kappa} \cdot \mathbf{1}_{\mc{R}^*}$};}
\tikz[overlay, remember picture]{
	\node[highlightP2] at (P2-7.base) {\(\mc{R}^* \subseteq \mc{V}\)};}
\tikz[overlay, remember picture]{
	\node[highlightP2] at (P2-8.base) {\(r\in \mc{V}\)};}

The partitioning problem $\mb{P2}$ can be efficiently solved while obtaining the following optimality guarantee.
\begin{theorem}(\hspace{-0.012cm}\cite{Calinescu2011})~\label{theo:guarenteeMulti}
	Let $f: 2^\mc{X} \rightarrow \mathbb{R}_{\geq0}$ be a polymatroid function and $F:[0,1]^n \rightarrow \mathbb{R}_{\geq0}$ be its multilinear extension.
	Then, the following performance bound holds true
	\begin{align*}
		F^*-f(\emptyset) \geq \left(1-\frac{1}{e}\right)\left(f^*-f(\emptyset)\right), \quad \text{with}\; f(\emptyset) =0,
	\end{align*}
	where $f^*$ and $F^*$ are the optimal values of the set maximization problem and its continuous relaxation.
\end{theorem}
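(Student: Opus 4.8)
The plan is to track the trajectory $\mb{x}(l)$, $l\in[0,1]$, produced by the continuous greedy procedure of Algorithm~\ref{alg:continuous_greedy}, and to lower bound the value $F(\mb{x}(l))$ of the multilinear extension along it in terms of the optimal integral value $f^{*}$. Fix an optimal independent set $\mc{O}\subseteq\mc{X}$, so that $f(\mc{O})=f^{*}$, and write $P(\mc{M})=\mr{conv}\{\mathbf{1}_{\mc{S}}:\mc{S}\in\mc{I}_{p_{1}}\}$ for the matroid polytope, whose vertices are precisely the indicator vectors of independent sets. The continuous greedy step moves $\mb{x}$ along $\mathbf{v}(l)=\argmax_{\mathbf{v}\in P(\mc{M})}\langle\mathbf{v},\nabla F(\mb{x}(l))\rangle$; since a linear objective over a polytope is maximized at a vertex, this coincides with $\mathbf{1}_{\mc{S}^{*}}$ for $\mc{S}^{*}$ the maximum-weight independent set under the weights $\partial F/\partial[\mb{x}]_{s}$, i.e. the selection rule of Algorithm~\ref{alg:continuous_greedy}, and $\mb{x}(1)\in P(\mc{M})$ as an average of such vertices, so $F^{*}\geq F(\mb{x}(1))$.

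The key step is the differential inequality
\[
\frac{\mathrm{d}}{\mathrm{d}l}\,F(\mb{x}(l))\;\geq\; f^{*}-F(\mb{x}(l)),\qquad l\in[0,1].
\]
To prove it, note first that $\mathbf{1}_{\mc{O}}\in P(\mc{M})$, so the greedy choice gives $\langle\mathbf{v}(l),\nabla F(\mb{x}(l))\rangle\geq\langle\mathbf{1}_{\mc{O}},\nabla F(\mb{x}(l))\rangle=\sum_{s\in\mc{O}}\tfrac{\partial F}{\partial[\mb{x}]_{s}}(\mb{x}(l))$. Because $F$ is multilinear, $F(\mb{x}\vee\mathbf{1}_{\{s\}})-F(\mb{x})=(1-[\mb{x}]_{s})\tfrac{\partial F}{\partial[\mb{x}]_{s}}(\mb{x})\leq\tfrac{\partial F}{\partial[\mb{x}]_{s}}(\mb{x})$, using $\partial F/\partial[\mb{x}]_{s}\geq 0$ (monotonicity, which holds since $f$ is polymatroid). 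Summing over $s\in\mc{O}$ and applying the lattice diminishing-returns inequality $\sum_{s\in\mc{O}}\big(F(\mb{x}\vee\mathbf{1}_{\{s\}})-F(\mb{x})\big)\geq F(\mb{x}\vee\mathbf{1}_{\mc{O}})-F(\mb{x})$ (equivalent to $\partial^{2}F/\partial[\mb{x}]_{a}\partial[\mb{x}]_{b}\leq 0$ for $a\neq b$, i.e. submodularity, proved by a telescoping/exchange argument) yields $\sum_{s\in\mc{O}}\tfrac{\partial F}{\partial[\mb{x}]_{s}}(\mb{x}(l))\geq F(\mb{x}(l)\vee\mathbf{1}_{\mc{O}})-F(\mb{x}(l))$. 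Monotonicity then gives $F(\mb{x}(l)\vee\mathbf{1}_{\mc{O}})\geq F(\mathbf{1}_{\mc{O}})=f(\mc{O})=f^{*}$, so $\langle\mathbf{v}(l),\nabla F(\mb{x}(l))\rangle\geq f^{*}-F(\mb{x}(l))$; combined with the chain rule $\tfrac{\mathrm{d}}{\mathrm{d}l}F(\mb{x}(l))=\langle\nabla F(\mb{x}(l)),\mathbf{v}(l)\rangle$ this is the asserted inequality.

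Setting $g(l):=F(\mb{x}(l))$ we have $g'(l)\geq f^{*}-g(l)$ with $g(0)=F(\mathbf{0})=f(\emptyset)=0$; multiplying by the integrating factor $e^{l}$ and integrating yields $g(l)\geq f^{*}(1-e^{-l})$, and at $l=1$ this gives $F^{*}\geq F(\mb{x}(1))\geq(1-1/e)f^{*}$, which is exactly the stated bound under the normalization $f(\emptyset)=0$. For the discrete-time algorithm I would replace the ODE by the one-step estimate $F(\mb{x}(t+1))-F(\mb{x}(t))\geq\tfrac{1}{T}\big(f^{*}-F(\mb{x}(t))\big)-O(1/T^{2})$, where the quadratic error is controlled by concavity of $F$ along the nonnegative direction $\tfrac{1}{T}\mathbf{1}_{\mc{S}^{*}}$ (noted in the text), and fold the error of estimating $\partial F/\partial[\mb{x}]_{s}$ through the sampling identity~\eqref{eq:partialF} into the same $O(1/T)$ slack; taking $T$ large recovers $1-1/e$ up to vanishing error. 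Finally, pipage rounding---or the independent randomized rounding available when $\kappa_{i}=1$, per the remark following Algorithm~\ref{alg:continuous_greedy}---converts the fractional solution into an integral independent set without decreasing $F$ in expectation~\cite{Ageev2004,Calinescu2011}.

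The main obstacle is the key step: correctly assembling the two separate consequences of $f$ being a polymatroid function---monotonicity (which yields $\nabla F\geq 0$ and $F(\mb{x}\vee\mathbf{1}_{\mc{O}})\geq f^{*}$) and submodularity (which yields the lattice diminishing-returns inequality)---together with the matroid fact that $\mathbf{1}_{\mc{O}}$ is a feasible direction for the greedy update, so that the instantaneous gain of the algorithm is at least $f^{*}-F(\mb{x}(l))$. Since the statement is quoted from~\cite{Calinescu2011}, the accompanying bookkeeping (discretization error, concentration of the gradient estimates, and feasibility preservation under rounding) is standard and may be cited rather than reproduced.
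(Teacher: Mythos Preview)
The paper does not prove this theorem at all; it is quoted verbatim as a known result from~\cite{Calinescu2011} and invoked without argument. Your sketch is a correct rendition of the standard continuous greedy analysis from that reference (differential inequality $F'\geq f^{*}-F$ via monotonicity, submodularity, and feasibility of $\mathbf{1}_{\mc{O}}$ in the matroid polytope, followed by Gr\"onwall and rounding), so there is nothing in the paper's own treatment to compare against.
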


Note that the same optimality guarantee holds true for the SP problem $\mb{P3}$. The continuous greedy algorithm presented in Algorithm~\ref{alg:continuous_greedy} can be used to solve both $\mb{P2}$ and $\mb{P3}$. The differences are depicted by the colors highlighting the equations for each of the problems. The computational complexity of the continuous greedy algorithm in general is $\mc{O}(|X|^7)$~\cite{Vondrak2008a}, while for a simple partition matroid as in $\mb{P2}$, it is $\mc{O}(|X|^2)$~\cite{Calinescu2011}, where $|X| = \kappa \times n_y$. There exists a plethora of studies that contribute to finding optimal and efficient methods to solve and formulate the continuous maximization problem~\eqref{eq:sns_submod_extended}. Methods include \textit{(i)} the probabilistic formulation utilized herein~\cite{Calinescu2011}, \textit{(ii)} exact multilinear extensions for specific objective functions~\cite{Iyer2014}, and approximate extensions by Taylor series approximations~\cite{Ozcan2021}; both do not require random sampling. Indeed, investigating the applicability of different approaches to solve the multilinear extension is warranted. However, the objective of this manuscript is to formulate the partitioning problem and its respective SP formulation while investigating their submodular properties.
\section{Numerical Simulation}\label{sec:simulation}
To demonstrate and validate the measurable state partitioning problem $\mb{P2}$ and the subsequent SP problem $\mb{P3}$, we investigate the following research questions
\begin{itemize}
	\item$(\mr{Q}1)$ How does solving $\mb{P3}$ over a partitioned system compare to solving the SP problem over the unpartitioned system?
	\item$(\mr{Q}2)$ Does the inequality established in Theorem~\ref{theo:lowerbound} hold for $\log\!\det$ observability measure parameterized by sensor configuration $\mc{R}$? 
	\item$(\mr{Q}3)$ How does the choice of the number of partitions $\kappa$ affect the optimality of the measurable state-space partitioning in $\mb{P2}$? Does the framework enable computational scalability, and what are the trade-offs between partition number and SP performance?
	\item $(\mr{Q}4)$ How does the observability of the subsystems obtained from $\mb{P2}$ compare to that of partitions from the spectral clustering method? Is the state-estimation performance maintained under the proposed partitioning framework?
\end{itemize} 
To address the above questions, we consider a nonlinear natural gas combustion reaction network~\cite{smirnov2014modeling} written as
\begin{equation}\label{eq:combustion_network}
	\dot{\m {x}}(t)=\Theta \boldsymbol{\psi}\left(\m {x}(t)\right),
\end{equation}
where $\m{\psi}\left(\m {x}\right)=[\psi_{1}\left(\m {x} \right),\psi_{2}\left(\m {x} \right),\ldots, \psi_{N_{r}}\left(\m {x} \right)]^{\top}$, and $\psi_{j}$, $j\in\{1,\;2,\;\ldots\;, \; N_{r}\}$ are the polynomial functions of concentrations. The state vector $\m {x}=[x^{1},\; x^{2},\;\cdots\;,\; x^{n_{x}}]$, represents the concentrations of $n_x$ chemical species. The constant matrix $\Theta=[w_{ji}-q_{ji} ]\in \mathbb{R}^{n_{x}\times N_{r}}$, where $q_{ji}$ and $w_{ji}$ are the stoichiometric coefficients. Let $N_{r}$ denote the number of chemical reactions, then the list of reactions can be written as $\sum_{i=1}^{n_{x}}q_{ji} \mc{R}_{i} \rightleftarrows  \sum_{i=1}^{n_{x}} w_{ji}\mc{R}_{i}, \; j \in \{1,\;2\;,\;\cdots\;, \; N_{r}\}$, where $\mc{R}_{i}$, $i\in \{1,\;2,\;\cdots\;,\; n_{x}\}$ are the chemical species. The reaction rates are computed using Cantera~\cite{Cantera} at a temperature of $1000 \, \mr{K}$ and under atmospheric pressure. 

The two mechanisms or combustion networks considered are: $\mr{(N1)}$ an $\mr{H}_2\mr{O}_2$ network that has $N_r = 27$ reactions and $n_x=10$ chemical species and $\mr{(N2)}$ a $\mr{GRI30}$ network that has $N_r = 325$ reactions and $n_x=53$ chemical species. The network models are described using Cantera type files ``h2o2.yaml" and ``gri30.yaml". The combustion reaction is solved using the Cantera package. The submodular maximization problems are implemented in Python and solved on a MacBook Pro with an Apple M1 Pro chip, a 10-core CPU, and 16 GB RAM. We linearize the nonlinear combustion network~\eqref{eq:combustion_network} around reference steady-state conditions $\m{x}_0 = \m{x}_0^{\mr{act}}$ obtained from the mechanism's nominal equilibrium. The discretization constant using forward Euler is $\Delta\mr{t}=1\cdot 10^{-12}$ and an observation window of $N = 1000$ is chosen. We perturb the initial conditions using a conservation-respecting transformation that lies in the null space of an elemental composition matrix $\m{E}$. This ensures that total atomic balances are preserved. The perturbed initial state $\m{x}_0^{\mathrm{pert}}$ is used to generate the dynamics~\eqref{eq:model_DT}. 
Let $\m{y}_k = \m{C}\m{x}_k$, where $\m{C}$ is the output matrix that maps the concentrations of chemical species $\m{x}_k$ to the measurable states $\m{y}_k$ via the to-be-placed sensors.
\subsection[Solving the partitioning and SP problem on $\mr{(N1)}$]{Solving the partitioning and SP problem on N1}\label{sec:simB}
To demonstrate the proposed framework, we solve the partitioning problem $\mb{P2}$ for $\mr{N1}$ under the $\log\!\det$ observability measure. The choice of this metric is grounded in the results of Theorem~\ref{theo:lowerbound} and its underlying relationship to Kalman filters and, therefore, state-estimation error. The partitioning problem $\mb{P2}$ is then solved using the continuous greedy Algorithm~\ref{alg:continuous_greedy} for $\kappa =4$. We analyze the convergence of the multilinear extension under different sampling numbers and number of steps $T$. A sampling number of $100$ and $T=10$ steps are chosen for solving $\mb{P2}$. We do not show the convergence analysis here for brevity; it depends on the size of the network and the objective function. We note that the combustion networks have non-participating species that are assigned a singleton partition of their own. The SP problem $\mb{P3}$ is subsequently solved for the system under the partitioning $\pi_{\kappa}$ obtained from $\mb{P2}$. The SP results for the partitioned system are compared to those of the full system with no partitions.

The results for solving $\mb{P2}$ and $\mb{P3}$ for the $\mr{H}_2\mr{O}_2$ combustion reaction network are illustrated in Fig.~\ref{fig:H202_results}. The results in Fig.~\ref{fig:H202_unpart} depict the system states when no clustering is considered along with the optimal sensor configuration for $r = 6$ sensors, and in Fig.~\ref{fig:H202_part}, the clustered measurable states along with the sensor configuration for the same number of sensors.
The arrows depict connections between system states obtained from the adjacency matrix representing the list of reactions, $\sum_{i=1}^{n_{x}}q_{ji} \mc{R}_{i} \rightleftarrows  \sum_{i=1}^{n_{x}} w_{ji}\mc{R}_{i}$, of the considered combustion network. Notice that states nine and ten are non-interacting species, i.e., they do not react with any of the other species. To observe such states, a sensor is required to be allocated to each particular state variable. 

By observing Fig.~\ref{fig:H202_results}, we highlight that \textit{(i)} the partitioning, for $\kappa = 4$, results in two non-participating species partitions and two subsystems, $\mc{S}_1$ and $\mc{S}_2$, that maximize partition observability $f(\mc{S}_i)$; and \textit{(ii)} the sensor configuration obtained by solving $\mb{P3}$ using the continuous greedy Algorithm~\ref{alg:continuous_greedy} is the same for both the partitioned and unpartitioned systems. That being said, we obtain the same optimal solution with a $ \logdet{\m{W}(\mc{R})}= \logdet{\sum_{i \in \mc{C}}\m{W}_{\mc{S}_i}(\mc{R})} = 38.91$. As a baseline, the observability measure for a fully sensed $\mr{H}_2\mr{O}_2$ network is $ \logdet{\m{W}} = 79.35$. The computational time for solving the partitioning problem $\mb{P2}$ using Algorithm~\ref{alg:continuous_greedy} is $2.83$ seconds. While solving the partitioning problem using the method of expectation might be costly, a closed form and approximate multilinear extensions exist and can be used~\cite{Iyer2014,Iyer2022}. This reduces the computational complexity by avoiding the random sampling of the expectation.

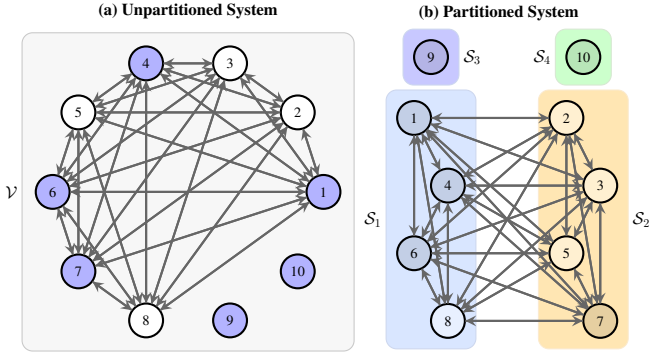
\begin{figure}[t]
	\centering
	\vspace{-0.25cm}
	\captionsetup[subfloat]{labelformat=empty}  
	\makebox[0.2\textwidth]{%
		\subfloat[\label{fig:H202_unpart}]{
		\begin{minipage}{0.2\textwidth}
			\centering
			\begin{tikzpicture}[scale=0.45, transform shape]
				\tikzstyle{node}=[circle, draw=black, thick, fill=white, minimum size=1cm, font=\large]
				\tikzstyle{sensor}=[circle, draw=black, thick, fill=blue!30, minimum size=1cm, font=\large]
				\tikzstyle{arrow}=[->, thick, >=stealth, color=black!60]
				\node[sensor] (x0) at (4.00,0.00) {1};
				\node[node] (x1) at (3.24,2.35) {2};
				\node[node] (x2) at (1.24,3.80) {3};
				\node[sensor] (x3) at (-1.24,3.80) {4};
				\node[node] (x4) at (-3.24,2.35) {5};
				\node[sensor] (x5) at (-4.00,0.00) {6};
				\node[sensor] (x6) at (-3.24,-2.35) {7};
				\node[node] (x7) at (-1.24,-3.80) {8};
				\node[sensor] (x8) at (1.24,-3.80) {9};
				\node[sensor] (x9) at (3.24,-2.35) {10};
				\draw[arrow] (x0) -- (x1);
				\draw[arrow] (x0) -- (x2);
				\draw[arrow] (x0) -- (x3);
				\draw[arrow] (x0) -- (x4);
				\draw[arrow] (x0) -- (x5);
				\draw[arrow] (x0) -- (x6);
				\draw[arrow] (x0) -- (x7);
				\draw[arrow] (x1) -- (x0);
				\draw[arrow] (x1) -- (x2);
				\draw[arrow] (x1) -- (x3);
				\draw[arrow] (x1) -- (x4);
				\draw[arrow] (x1) -- (x5);
				\draw[arrow] (x1) -- (x6);
				\draw[arrow] (x1) -- (x7);
				\draw[arrow] (x2) -- (x0);
				\draw[arrow] (x2) -- (x1);
				\draw[arrow] (x2) -- (x3);
				\draw[arrow] (x2) -- (x4);
				\draw[arrow] (x2) -- (x5);
				\draw[arrow] (x2) -- (x6);
				\draw[arrow] (x2) -- (x7);
				\draw[arrow] (x3) -- (x0);
				\draw[arrow] (x3) -- (x1);
				\draw[arrow] (x3) -- (x2);
				\draw[arrow] (x3) -- (x4);
				\draw[arrow] (x3) -- (x5);
				\draw[arrow] (x3) -- (x6);
				\draw[arrow] (x3) -- (x7);
				\draw[arrow] (x4) -- (x0);
				\draw[arrow] (x4) -- (x1);
				\draw[arrow] (x4) -- (x2);
				\draw[arrow] (x4) -- (x3);
				\draw[arrow] (x4) -- (x5);
				\draw[arrow] (x4) -- (x6);
				\draw[arrow] (x4) -- (x7);
				\draw[arrow] (x5) -- (x0);
				\draw[arrow] (x5) -- (x1);
				\draw[arrow] (x5) -- (x2);
				\draw[arrow] (x5) -- (x3);
				\draw[arrow] (x5) -- (x4);
				\draw[arrow] (x5) -- (x6);
				\draw[arrow] (x5) -- (x7);
				\draw[arrow] (x6) -- (x0);
				\draw[arrow] (x6) -- (x1);
				\draw[arrow] (x6) -- (x2);
				\draw[arrow] (x6) -- (x3);
				\draw[arrow] (x6) -- (x4);
				\draw[arrow] (x6) -- (x5);
				\draw[arrow] (x6) -- (x7);
				\draw[arrow] (x7) -- (x0);
				\draw[arrow] (x7) -- (x1);
				\draw[arrow] (x7) -- (x2);
				\draw[arrow] (x7) -- (x3);
				\draw[arrow] (x7) -- (x4);
				\draw[arrow] (x7) -- (x5);
				\draw[arrow] (x7) -- (x6);
				\begin{scope}[on background layer]
					\node[draw=Vedge, fill=Vgray, 
					rounded corners, inner sep=5pt, fit=(x1) (x2) (x3) (x4) (x5) (x6) (x7) (x8) (x9) (x0), label=left:\Large \(\mc{V}\)] {};
				\end{scope}		
				\node[above=0.5cm] at (0,4.5) {\textbf{\Large (a) Unpartitioned System}};
			\end{tikzpicture}
		\end{minipage}}
		\hspace{0.9cm}
		\subfloat[\label{fig:H202_part}]{
		\begin{minipage}{0.2\textwidth}
			\centering
			\begin{tikzpicture}[scale=0.45, transform shape]
				\tikzstyle{colorednode1}=[circle, draw=black, thick, fill=Sub1color!60, minimum size=1cm, font=\large]
				\tikzstyle{coloredsensor1}=[circle, draw=black, thick, fill=Sub1color!90!black, minimum size=1cm, font=\large]
				\tikzstyle{colorednode2}=[circle, draw=black, thick, fill=Sub2color!60, minimum size=1cm, font=\large]
				\tikzstyle{coloredsensor2}=[circle, draw=black, thick, fill=Sub2color!90!black, minimum size=1cm, font=\large]
				\tikzstyle{colorednode3}=[circle, draw=black, thick, fill=Sub4color!60, minimum size=1cm, font=\large]
				\tikzstyle{coloredsensor3}=[circle, draw=black, thick, fill=Sub4color!90!black, minimum size=1cm, font=\large]
				\tikzstyle{colorednode4}=[circle, draw=black, thick, fill=Sub3color!60, minimum size=1cm, font=\large]
				\tikzstyle{coloredsensor4}=[circle, draw=black, thick, fill=Sub3color!90!black, minimum size=1cm, font=\large]
				\tikzstyle{colorednode5}=[circle, draw=black, thick, fill=Sub3color!60, minimum size=1cm, font=\large]
				\tikzstyle{coloredsensor5}=[circle, draw=black, thick, fill=Sub4color!90!black, minimum size=1cm, font=\large]
				\tikzstyle{arrow}=[->, thick, >=stealth, color=black!60]
				\node[coloredsensor1] (x0) at (-0.50,-0.00) {1};
				\node[coloredsensor1] (x3) at (0.50,-2.00) {4};
				\node[coloredsensor1] (x5) at (-0.50,-4.00) {6};
				\node[colorednode1] (x7) at (0.50,-6.00) {8};
				\node[colorednode2] (x1) at (4.0,-0.00) {2};
				\node[colorednode2] (x2) at (5.0,-2.00) {3};
				\node[colorednode2] (x4) at (4.0,-4.00) {5};
				\node[coloredsensor2] (x6) at (5.0,-6.00) {7};
				\node[coloredsensor3] (x8) at (0.00, 1.8) {9};
				\node[coloredsensor4] (x9) at (4.50,1.8) {10};
				\draw[arrow] (x0) -- (x1);
				\draw[arrow] (x0) -- (x2);
				\draw[arrow] (x0) -- (x3);
				\draw[arrow] (x0) -- (x4);
				\draw[arrow] (x0) -- (x5);
				\draw[arrow] (x0) -- (x6);
				\draw[arrow] (x0) -- (x7);
				\draw[arrow] (x1) -- (x0);
				\draw[arrow] (x1) -- (x2);
				\draw[arrow] (x1) -- (x3);
				\draw[arrow] (x1) -- (x4);
				\draw[arrow] (x1) -- (x5);
				\draw[arrow] (x1) -- (x6);
				\draw[arrow] (x1) -- (x7);
				\draw[arrow] (x2) -- (x0);
				\draw[arrow] (x2) -- (x1);
				\draw[arrow] (x2) -- (x3);
				\draw[arrow] (x2) -- (x4);
				\draw[arrow] (x2) -- (x5);
				\draw[arrow] (x2) -- (x6);
				\draw[arrow] (x2) -- (x7);
				\draw[arrow] (x3) -- (x0);
				\draw[arrow] (x3) -- (x1);
				\draw[arrow] (x3) -- (x2);
				\draw[arrow] (x3) -- (x4);
				\draw[arrow] (x3) -- (x5);
				\draw[arrow] (x3) -- (x6);
				\draw[arrow] (x3) -- (x7);
				\draw[arrow] (x4) -- (x0);
				\draw[arrow] (x4) -- (x1);
				\draw[arrow] (x4) -- (x2);
				\draw[arrow] (x4) -- (x3);
				\draw[arrow] (x4) -- (x5);
				\draw[arrow] (x4) -- (x6);
				\draw[arrow] (x4) -- (x7);
				\draw[arrow] (x5) -- (x0);
				\draw[arrow] (x5) -- (x1);
				\draw[arrow] (x5) -- (x2);
				\draw[arrow] (x5) -- (x3);
				\draw[arrow] (x5) -- (x4);
				\draw[arrow] (x5) -- (x6);
				\draw[arrow] (x5) -- (x7);
				\draw[arrow] (x6) -- (x0);
				\draw[arrow] (x6) -- (x1);
				\draw[arrow] (x6) -- (x2);
				\draw[arrow] (x6) -- (x3);
				\draw[arrow] (x6) -- (x4);
				\draw[arrow] (x6) -- (x5);
				\draw[arrow] (x6) -- (x7);
				\draw[arrow] (x7) -- (x0);
				\draw[arrow] (x7) -- (x1);
				\draw[arrow] (x7) -- (x2);
				\draw[arrow] (x7) -- (x3);
				\draw[arrow] (x7) -- (x4);
				\draw[arrow] (x7) -- (x5);
				\draw[arrow] (x7) -- (x6);
				\node[above=0.5cm] at (2,2.25) {\textbf{\Large (b) Partitioned System}};
				\begin{scope}[on background layer]
					\node[draw=Sub1edge, fill=Sub1color, rounded corners, inner sep=4pt, fit={(x0) (x3) (x5) (x7)}, label=left:\Large $\mc{S}_{1}$] {};
					\node[draw=Sub2edge, fill=Sub2color, rounded corners, inner sep=4pt, fit={(x1) (x2) (x4) (x6)}, label=right:\Large $\mc{S}_{2}$] {};
					\node[draw=Sub4edge, fill=Sub4color, rounded corners, inner sep=4pt, fit={(x8)}, label=right:\Large $\mc{S}_{3}$] {};
					\node[draw=Sub3edge, fill=Sub3color, rounded corners, inner sep=4pt, fit={(x9)}, label=left: \Large $\mc{S}_{4}$] {};
				\end{scope}
			\end{tikzpicture}
		\end{minipage}
	}} \vspace{-0.2cm}
	\caption{$(\mr{N1})$ Partitioning of an $\mr{H}_2\mr{O}_2$ combustion reaction network. Sensor locations on the (a) unpartitioned system and (b) partitioned system that is clustered by color. Sensor nodes can be identified by observing the shaded nodes within each cluster and the colored nodes within the unpartitioned system.}
	\label{fig:H202_results}
\end{figure}

Furthermore, the computational time for solving $\mb{P3}$ using the continuous greedy algorithm and pipage rounding for both the unpartitioned system and partitioned system obtained from solving $\mb{P2}$ are $0.648$ and $0.226$ seconds. This highlights a significant decrease in computational time for the SP problem. The authors in~\cite{Summers2016} show that a simple greedy algorithm can yield $99\%$ optimality for LTI systems. To check the optimality of our solution above, we use a brute-force approach to find the optimal sensor configuration that maximizes the $\log\!\det$ metric. The result is equivalent to the solution obtained from solving $\mb{P3}$ using the proposed framework, thereby validating the theoretical guarantees of Theorem~\ref{theo:guarenteeMulti}. We note that we also obtain the same optimal sensor configuration when solving $\mb{P3}$ using a simple greedy algorithm for the partitioned system under a partition matroid constraint. While the optimality of simple greedy algorithms for partition matroid constraints is $1/2$ in theory, the results herein further demonstrate the practicality of the submodular maximization approach.

Based on the results illustrated on the network $\mr{N1}$ (which serves as an example), we can now address the posed research questions by studying the larger-scale network $\mr{N2}$.
\subsection[Spectral clustering and identifying the range of kappa]{Spectral clustering and identifying the range of $\kappa$}\label{sec:modularity}

\begin{figure}[t]
	\centering
	\hspace*{-0.15cm}
	\includegraphics[width=0.95\columnwidth,keepaspectratio]{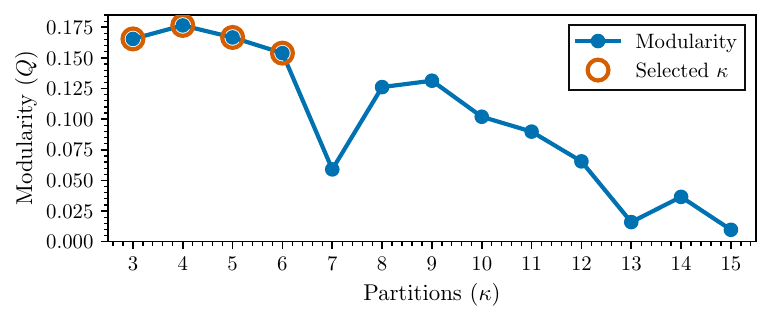}
	\vspace{-0.2cm}
	\caption{Partition modularity $Q$ of the $\mr{GRI}30$ combustion network obtained from spectral clustering for varying $\kappa$. The selected range for $\kappa$ is shown in red.}\label{fig:mod}
\end{figure}

We approach identifying potential values for the partition number $\kappa$ by studying graph-based measures that capture community structure within a network. We do not delve into the literature on such graph-based methods, as this is not the purpose of the work herein. Rather, we borrow a well-known measure that enables the quantification of closely and densely connected subsystems (community structures) as a means $\textit{(i)}$ to motivate the range of values for $\kappa$, and $\textit{(ii)}$ to show that solving the observability-based partitioning problem results in more observable partitions.

We consider a modularity-based metric (not to be confused with Definition~\ref{def:modular_submodular}) to obtain potential choices for the number of partitions. For a system partition $\pi_{\kappa}$, modularity $Q$, which quantifies each subsystem’s interconnection link density~\cite{Newman2004}, can be defined as
\begin{equation}\label{eq:modularityQ}
	Q = \frac{1}{2m} \sum_{i,j=1}^{n_x} \left[ A^{\mr{adj}}_{ij} - \frac{a_i a_j}{2m} \right] \delta(\mathcal{S}_i, \mathcal{S}_j) \in [-1, 1],
\end{equation}
where the adjacency matrix $\m{A}^{\mr{adj}} = [A^{\mr{adj}}_{ij}] \in \mathbb{R}^{n_x \times n_x}$ represents the reaction network $\Theta$, such that $A^{\mr{adj}}_{ij} = 1$ if species $\mathcal{R}_i$ and $\mathcal{R}_j$ participate in any reaction together, and $A^{\mr{adj}}_{ij} = 0$ otherwise. The degree $a_i = \sum_{j=1}^{n_x} A^{\mr{adj}}_{ij}$ represents the number of connections between species $\mathcal{R}_i$ and all other species, while the total number of edges in a network is given by $m = \frac{1}{2}\sum_{i,j=1}^{n_x} A^{\mr{adj}}_{ij}$. The term $\frac{a_i a_j}{2m}$ represents the expected number of connections between species $\mathcal{R}_i$ and $\mathcal{R}_j$ if edges were randomly distributed while preserving node degrees. The function $\delta(\mathcal{S}_i, \mathcal{S}_j) = 1$ if species $\mathcal{R}_i$ and $\mathcal{R}_j$ belong to the same community, and $0$ otherwise.

We consider spectral clustering to obtain partitions for a varying number of $\kappa$ values by computing the normalized Laplacian of the combustion reaction network. The Laplacian matrix for a given adjacency matrix $\m{A}^{\mr{adj}}$ is defined as $\m{L} = \m{D} - \m{A}^{\mr{adj}}\in \mathbb{R}^{n_x \times n_x}$, where $\m{D} \in \mathbb{R}^{n_x \times n_x}$ is a diagonal degree matrix with $D_{ii} = a_i$. The normalized Laplacian is given by $\bar{\m{L}} = \m{D}^{-1/2} \m{L} \m{D}^{-1/2}$. 

\begin{table}[b]
	\fontsize{9}{9}\selectfont
	\centering
	\caption{Difference in optimal solution sets when solving $\mb{P3}$ for the partitioned system with $\kappa = 5$ and $\kappa = 6$, evaluated for $r = 23$ sensors under the global $f(\bigcup_{i \in \mc{C}} \mathcal{R}_i)$ and local $\sum_{i \in \mc{C}} f(\mathcal{R}_i)$ observability objective functions. All other selected states remain the same.}
	\label{tab:partition_diff}
	\vspace{-0.2cm}
	\renewcommand{\arraystretch}{1.3}
	\resizebox{\linewidth}{!}{%
		\begin{tabular}{l|cc|cc}
			\midrule \hline
			\multirow{2}{*}{Partition} & \multicolumn{2}{c|}{$\kappa=5$} & \multicolumn{2}{c}{$\kappa=6$} \\
			& \makecell[c]{$f(\bigcup_{i \in \mc{C}}\mathcal R_i)$ \vspace{0.05cm}\\$ 176.36$}
			& \makecell[c]{$\sum_{i \in \mc{C}} f(\mathcal R_i) \vspace{0.05cm}$\\$ 115.16$}
			& \makecell[c]{$f(\bigcup_{i \in \mc{C}}\mc R_i) \vspace{0.05cm}$\\$ 176.36$}
			& \makecell[c]{$\sum_{i \in \mc{C}} f(\mc R_i) \vspace{0.05cm}$\\$ 106.81$} \\
			\hline
			$\mathcal S_{1}$ & -- & $\{5,\, 20,\, 30\}$ & -- & -- \\
			$\mathcal S_{2}$ & $\{18,\, 29,\, 33,\, 45\}$ & -- & -- & $\{20,\, 30\}$ \\
			$\mathcal S_{3}$ & -- & $\{27\}$ & $\{33\}$ & $\{53\}$ \\
			$\mathcal S_{4}$ & -- & -- & $\{23,\, 29\}$ & -- \\
			$\mathcal S_{5}$ & -- & -- & -- & -- \\
			$\mathcal S_{6}$ & $\emptyset$ & $\emptyset$ & -- & -- \\
			\toprule \bottomrule
		\end{tabular}
	}
\end{table}
\setlength{\textfloatsep}{1pt}

For each $\kappa$, we compute the first $\kappa$ eigenvectors corresponding to the smallest non-zero eigenvalues of $\bar{\m{L}}$ and use them to cluster the species into $\kappa$ partitions using the spectral $k$-means clustering algorithm~\cite{Ng2001}. We vary $\kappa$ from $3$ (due to the existence of non-interacting species) to $14$ and compute the corresponding modularity $Q$ scores for each partitioning $\pi_{\kappa}$. The modularity scores for $\mr{N2}$ are shown in Fig.~\ref{fig:mod}. We observe that $Q$ values for $\kappa \in \{3, 4, 5, 6\}$ correspond to high modularity scores ($Q \in [0.143, 0.175]$), with $\kappa = 4$ achieving the maximum modularity of $Q = 0.175$. Note that high modularity ($Q > 0$) indicates that nodes within the same partition are more densely connected, revealing natural community structure. Based on the results, to solve $\mb{P2}$ and evaluate its optimality, we select $\kappa \in \{3, 4, 5, 6\}$.

\begin{figure*}[ht]
	\centering
	\captionsetup[subfloat]{labelformat=empty}
	\subfloat[\label{fig:unpartitioned_GRI30}]{\includegraphics[width=0.33\textwidth]{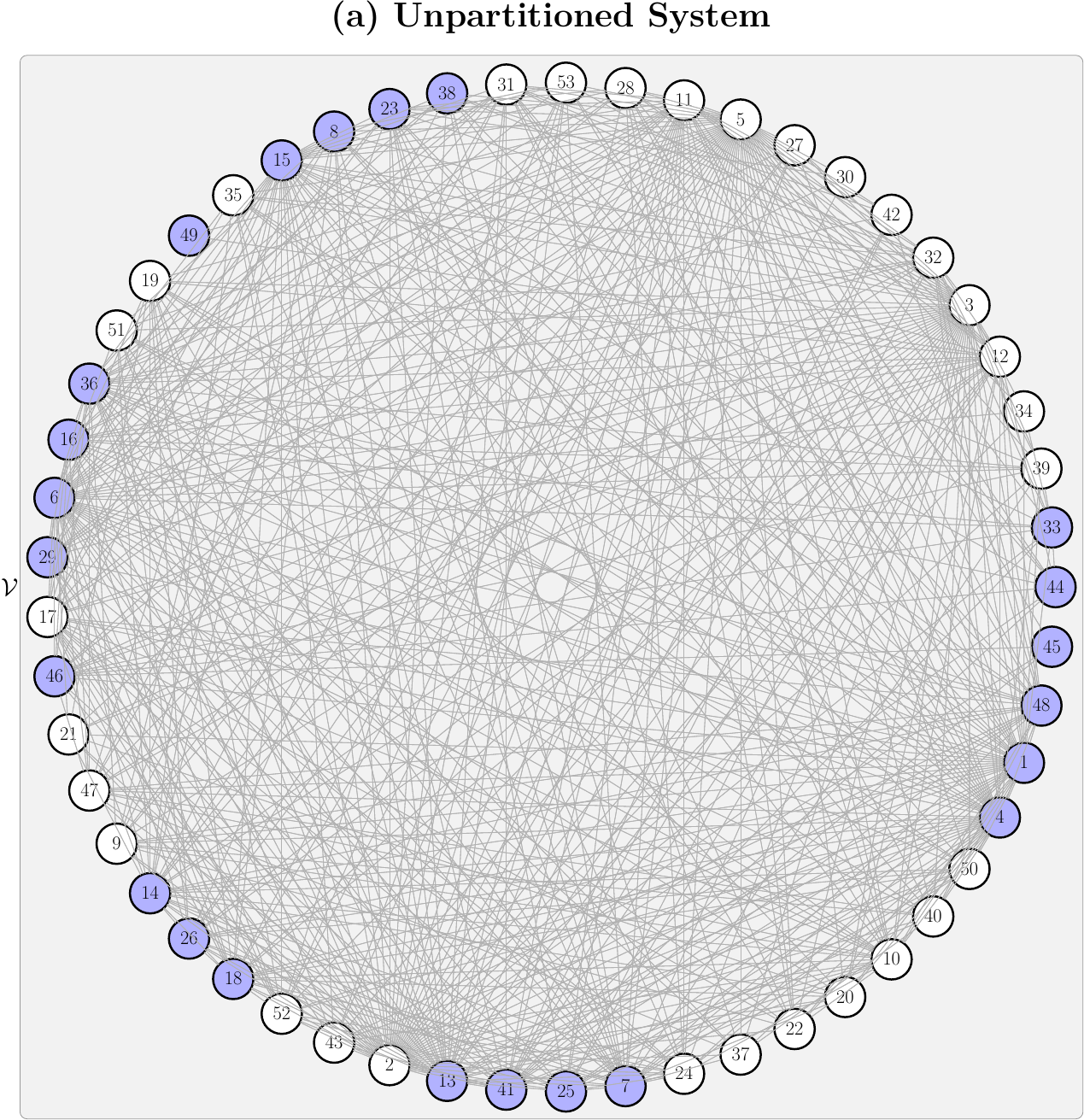}}
	\hspace*{0.1cm}
	\subfloat[\label{fig:partitioned_k2}]{\includegraphics[width=0.478\textwidth]{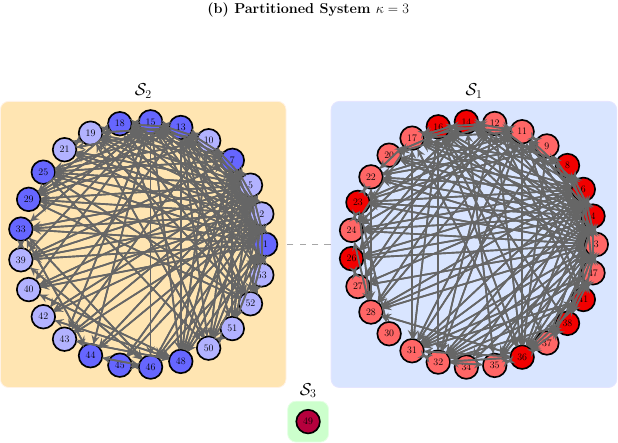}}
	\vspace*{-0.75cm}
	\subfloat[\label{fig:partitioned_k3}]{\includegraphics[width=0.33\textwidth]{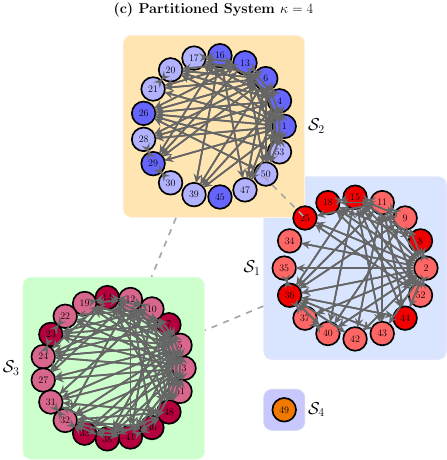}}{}{} 
	\subfloat[\label{fig:partitioned_k4}]{\includegraphics[width=0.33\textwidth]{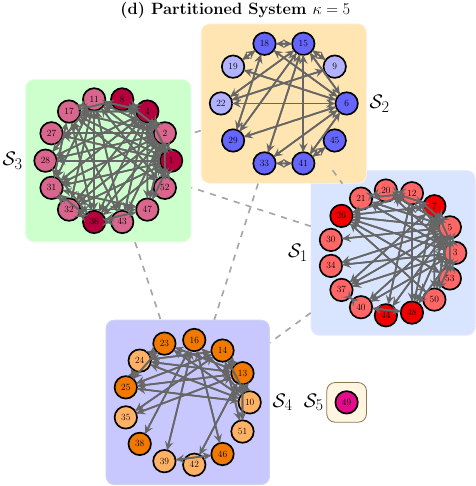}}
	\subfloat[\label{fig:partitioned_k6}]{\includegraphics[width=0.33\textwidth]{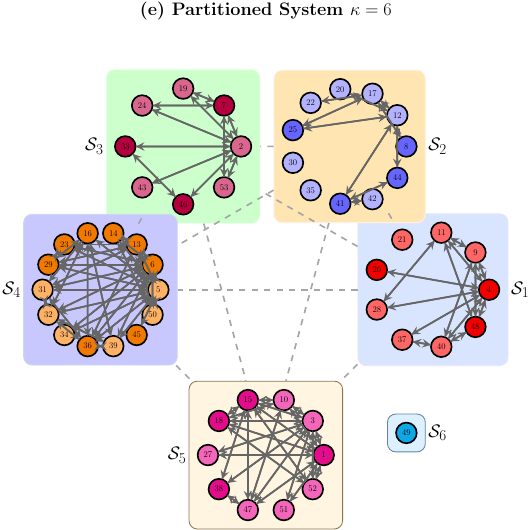}}
	\vspace{-0.2cm}
	\caption{$(\mr{N2})$ Partitioning of a $\mr{GRI30}$ combustion reaction network. Sensor locations for the (a) unpartitioned system and partitioned systems with (b) $\kappa=3$, (c) $\kappa=4$, (d) $\kappa=5$, and (e) $\kappa=6$ are shown. Sensor nodes can be identified by observing the shaded nodes within each subsystem and the colored nodes within the unpartitioned system. The state network within the subsystem is shown by the arrows. The dashed arrows illustrate subsystems that have at least one state interacting with a state in another partition. Note that partitions (b) $\mc{S}_3$, (c) $\mc{S}_4$, (d) $\mc{S}_5$ and (e) $\mc{S}_6$ are singleton partitions indicating non-participating species.}
	\label{fig:GRI30_results}
	\vspace{-0.6cm}
\end{figure*}

\begin{figure}[t]
	\centering
	\includegraphics[keepaspectratio=true,width=\columnwidth]{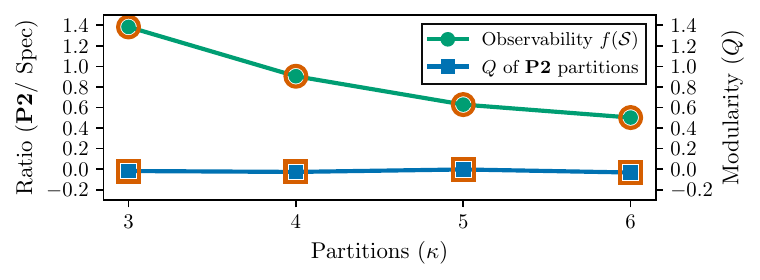}
	\vspace{-0.5cm}
	\caption{Modularity $Q$ of the optimal partitions and the ratio of the observability measure $f(\mathcal{S})$ for subsystems obtained by solving $\mb{P2}$ versus those obtained via spectral clustering for the $\mr{GRI}30$ network.} \label{fig:mod-obs}
\end{figure}

\subsection{Answering the posed research questions}
To answer the posed research questions, we now solve $\mb{P2}$ and $\mb{P3}$ for the larger $\mr{GRI30}$ combustion reaction network. The results for partitioning under partition numbers $\kappa = \{3,4,5,6\}$ and $r = 23$ are depicted in Fig.~\ref{fig:GRI30_results}. The computational time for solving the partitioning problem under each $\kappa$ is $\{11.997, 21.097, 34.415, 49.648\}$ seconds; see Section~\ref{sec:multilinear} for the time complexity. Notice that by increasing $\kappa$, the partitions obtained tend to have nodes/states that are not interacting (reacting) with the states within the same partition; see Fig.~\ref{fig:partitioned_k6}, partition $\mc{S}_2$. This is expected since from Fig.~\ref{fig:mod}, we can see that modularity for $\kappa>5$ is low. The reason is that the partition number $\kappa = 6$ is larger than that of the communities of species that interact within the combustion reaction network. When increasing $\kappa$ beyond the existing community structures, states are acquired by the partition while increasing the overall system observability abilities of the partition. This is corroborated by the fact that states $\{30, 35\} \in \mc{S}_2$ are interacting with sensed species in other partitions. Observing Fig.~\ref{fig:partitioned_k4}, partition $\mc{S}_2$, we see that sensor state $\{29\}$ now in $\mc{S}_4$ (Fig.~\ref{fig:partitioned_k6}) is interacting with state $\{30\}$, while sensed state $\{26\}\in \mc{S}_1$ is interacting with state $\{35\}$. Thus by construction, $\mb{P2}$ partitions the system while maximizing the overall system observability by maintaining inter-subsystem connections. In other words, we obtain species within a subsystem that do not interact with the species in the same subsystem but interact with other subsystems, which means the partitioning problem favors overall observability maximization when $\kappa$ is not optimal for inter-subsystem observability.

We note that as a result of the practical optimality of the $\mb{P3}$ for $\mr{N1}$, we solve $\mb{P3}$ for $\mr{N2}$ using the simple greedy algorithm. Similar to the results obtained for the small combustion reaction network, the optimal sensor configuration for the unpartitioned system is equivalent to that of the partitioned system. Furthermore, notice that we obtain the same optimal configuration under different partition numbers. This is a result of solving $\mb{P3}$ under objective function defined as $\logdet{\sum_{i \in \mc{C}} \m{W}_{\mc{S}_i}(\mc{R}_i)}$. Now, to validate Theorem~\ref{theo:lowerbound}, we solve the SP problem for objective function $\sum_{i \in \mc{C}}\logdet{\m{W}_{\mc{S}_i}(\mc{R}_i)}$. The difference in sensor configuration results with respect to those obtained from $\logdet{\sum_{i \in \mc{C}} \m{W}_{\mc{S}_i}(\mc{R}_i)}$ are summarized in Tab.~\ref{tab:partition_diff}.

Notice that the resulting sensor configurations for both $\kappa$ values are suboptimal compared to the original objective function; however, the configuration for $\kappa = 5$ is closer to the optimal value than that for $\kappa = 6$. By observing the difference in chosen states within specific partitions, the choice of states favors localized observability, meaning that the chosen states have more intra-subsystem connections (species interacting within the same cluster). This can be easily depicted for $\kappa = 5$ by checking state $\{5\}$ in $\mc{S}_2$ (Fig.~\ref{fig:partitioned_k4}), which is chosen as a sensor node when solving for the local objective function. This state interacts with several other states with the same partition, thereby enhancing the observability of that subsystem. This can also be observed for $\kappa = 6$ by checking states $\{20, 30\}$ in $\mc{S}_2$ and $\{53\}$ in $\mc{S}_3$ (Fig.~\ref{fig:partitioned_k6}). However, since the subsystem Gramian remains of size $\Rn{n_x \times n_x}$, the localized objective function still contributes to maximizing global observability. The partitioning of the Gramian into submatrices of different sizes results in subsystems that are disconnected (i.e., with no interconnections). This is a topic for future research, as it requires reformulating the partitioning problem $\mb{P2}$.

\begin{table*}[t]
	\vspace*{0.15cm}	
	\fontsize{9}{9}\selectfont
	\centering
	\caption{Optimality and computational time for solving the SP problem $\mb{P3}$ over the partitioned and unpartitioned $\mr{GRI}30$ combustion reaction network for varying $\kappa$, $r$, and objective function formulation.}
	\label{tab:comp_time}
	\renewcommand{\arraystretch}{1.3}
	\resizebox{\linewidth}{!}{%
		\begin{tabular}{l|ccc|ccc|ccc|ccc}
			\midrule \hline
			\multirow{2}{*}{$\mr{N2}$: $\mr{GRI}30$} & \multicolumn{3}{c|}{$\kappa=3$} & \multicolumn{3}{c|}{$\kappa=4$} & \multicolumn{3}{c|}{$\kappa=5$} & \multicolumn{3}{c}{$\kappa=6$} \\
			\cline{2-13}
			& $r=10$ & $r=16$ & $r=23$ & $r=10$ & $r=16$ & $r=23$ & $r=10$ & $r=16$ & $r=23$ & $r=10$ & $r=16$ & $r=23$ \\ \hline
			
			$f(\mathcal R)$ & -50.12 & 91.00 & 176.36 & -50.12 & 91.00 & 176.36 & -50.12 & 91.00 & 176.36 & -50.12 & 91.00 & 176.36 \\
			$\mathrm{time}\;(\mathrm{sec})$ & 12.10 & 18.49 & 24.35 & 11.95 & 17.98 & 24.48 & 12.12 & 18.63 & 24.19 & 12.05 & 18.62 & 24.20 \\
			\hline
			
			$f(\cup_{i\in\mathcal C}\mathcal R_i)$ & -50.12 & 91.00 & 176.36 & -50.12 & 91.00 & 176.36 & -50.12 & 91.00 & 176.36 & -50.12 & 91.00 & 176.36 \\
			$\mathrm{time}\;(\mathrm{sec})$ & 1.89 & 2.84 & 3.67 & 1.80 & 2.86 & 3.70 & 1.92 & 2.88 & 3.58 & 1.94 & 2.83 & 3.55 \\
			\hline
			
			$\sum_{i\in\mathcal C} f(\mathcal R_i)$ & -93.27 & 49.69 & 136.85 & -115.22 & 18.97 & 123.43 & -128.88 & 13.41 & 115.16 & -136.02 & 1.43 & 106.81 \\
			$\mathrm{time}\;(\mathrm{sec})$ & 1.95 & 2.91 & 3.80 & 1.82 & 2.97 & 3.83 & 1.97 & 3.04 & 3.87 & 1.99 & 2.93 & 4.30 \\
			\toprule \bottomrule
	\end{tabular}}
	\setlength{\textfloatsep}{1pt}
	\vspace*{-0.2cm}
\end{table*}

Based on the results, we compare the observability measure and the modularity $Q$ of the resulting partitions in Fig.~\ref{fig:GRI30_results} with those obtained using spectral clustering in Section~\ref{sec:modularity}. Observe from Fig.~\ref{fig:mod-obs} that the observability ratio $f_{\mb{P2}}(\mc{S}) / f_{\text{spec}}(\mc{S})$ of partitions obtained from solving $\mb{P2}$ versus those from spectral clustering are $1.4$ to $0.6$ times higher for the considered $\kappa$ values. However, the respective modularity values of the optimal partitions decrease to around zero for each $\kappa$; this is in contrast to those of the spectral clustering partitions. This shows that $\mb{P2}$ results in more observable partitions while balancing inter/intra-connections within the subsystems ($Q \approx 0$), thus demonstrating the value of observability-based partitioning in identifying subsystems that are more observable and can potentially enhance state estimation performance. The results also provide evidence that higher modularity (i.e., more densely connected subsystems) does not necessarily imply improved observability of the subsystems.

Furthermore, Tab.~\ref{tab:comp_time} shows the optimality of the objective function and the computational time for solving $\mb{P3}$ for different partition numbers $\kappa$ and sensor numbers $r$. Accordingly, we highlight the following results. The optimal value for $\mb{P3}$ for the unpartitioned system, $f(\mc{R})$, and the partitioned system, $f(\cup_{i \in \mc{C}} \mc{R}_i)$, are equivalent. The computational time is significantly reduced for all the cases studied under system partitioning; however, it does not change with varying the value of $\kappa$ but with the number of sensors. This is due to a computational threshold, from $\log\!\det$ evaluations, that is not exceeded for this system size. The optimal value for $\mb{P3}$ under partitioning and the localized objective function $\sum_{i \in \mc{C}} f(\mathcal{R}_i)$ is suboptimal for all cases and thereby validates Theorem~\ref{theo:lowerbound}. The least suboptimal values obtained for all the sensor numbers studied are for $\kappa = 3$. This shows that, for this network, the partitioning problem $\mb{P2}$ favors maximizing overall system observability by considering optimally allocating states to a partition with more inter-subsystem connections.

To better understand and further validate the proposed partitioning and SP framework, we assess the performance of the optimal sensor configurations from a state-estimation perspective. We implement a discrete-time Kalman filter~\cite{Kalman1960} for observation horizon $N = 1000$ with Monte Carlo averaging over $50$ trials to assess statistical performance. The noise covariance matrices are chosen as $Q = 10^{-4} \eye_{n_x}$ (process noise) and $R = 10^{-4} \eye_{n_y}$ (measurement noise). The initial error covariance is set to $P_0 = 10^{-1} I_{n_x}$. The state estimation relative error for varying partition numbers $\kappa$ and sensors $r$ are shown in~Fig.~\ref{fig:state-estimation-results}. The relative error is computed as $\|\hat{\m{x}} - \m{x}\|_2 / \|\m{x}\|_2$, where $\hat{\m{x}}$ is the estimated state and $\m{x}$ is the true state. We observe that the state-estimation performance for the partitioned system under different $\kappa$ values matches that of the unpartitioned system. This further validates the optimality of the sensor configurations obtained by solving $\mb{P3}$ under the proposed framework. Notice that increasing the number of sensors $r$, which is equivalent to achieving a higher observability measure (Tab.~\ref{tab:comp_time}), leads to improved state-estimation performance. This is expected due to the underlying relationship between $\log\!\det$ and state-estimation error. While the results show that partitioning does not degrade estimation performance, they warrant future investigation into the performance of localized or distributed state estimators, particularly for boundary states.

 \begin{figure}[t]
 	\centering
 	\hspace*{-0.15cm}
 	\includegraphics[keepaspectratio=true,width=\columnwidth]{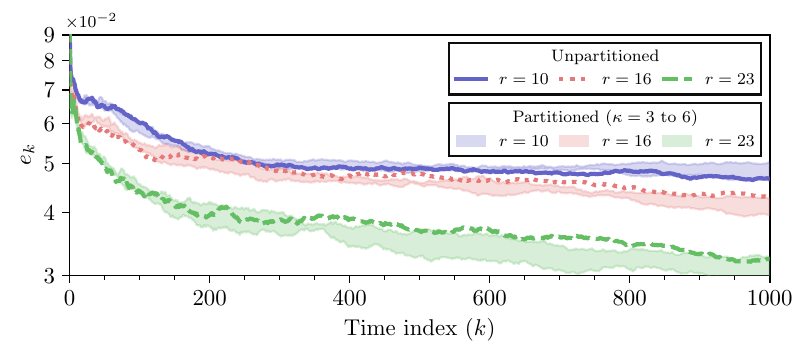}
 	\vspace{-0.5cm}
 	\caption{State estimation results over the partitioned and unpartitioned $\mr{GRI}30$ network under optimal SP partitioned for varying $\kappa$, and $r$.}
 	\label{fig:state-estimation-results}
 \end{figure}

Based on the numerical results of two combustion reaction networks, we conclude the following. \textit{(i)} That the results for the SP problem over partitioned and unpartitioned systems yield similar optimal values while significantly reducing the computational time; this answers $\mr{Q1}$. \textit{(ii)} The above results validate the theoretical bounds established for the $\log\!\det$ metric in Theorem~\ref{theo:lowerbound}. We have shown that solving $\mb{P3}$ under the localized objective function $\sum_{i \in \mc{C}} \log\!\det(\m{W}_{\mc{S}_i}(\mc{R}_i))$ yields suboptimal values when compared to the global objective $f(\cup_{i \in \mc{C}} \mc{R}_i)$; this answers $\mr{Q2}$. \textit{(iii)} By varying the number of partitions $\kappa$, the structure of the partitions and observability can be affected. For network $\mr{N2}$, the optimal partition number is $\kappa = 5$. While achieving the highest observability-based optimal value $f(\mc{S})$, it also demonstrates better performance under parameterized local observability measures. This shows that the partitioning and SP framework supports scalable partitioning while offering trade-offs between subsystem intraconnections and the number of partitions; this answers $\mr{Q3}$. \textit{(iv)} By comparing to spectral clustering-based partitioning, we have shown that the proposed observability-based partitioning framework yields partitions that are more observable while balancing inter/intra-subsystem connections; this is also reflected in the state-estimation performance. This answers $\mr{Q4}$ and concludes this section.
\section{Final Remarks}\label{sec:conclusion}
This paper presented a new perspective for solving the SP problem. We introduced a partitioning and sensor placement framework grounded in submodular maximization formulations. By clustering the measurable state-space into a user-defined number of interconnected partitions, we enable scalable solutions to the SP problem under a partition matroid while maximizing full system observability. The global and local subsystem observability under sensor parameterization are studied, thereby offering insights into the trade-offs introduced by partitioning. By duality, all of the results hold for actuator placement under partitioned system dynamics using controllability Gramian-based measures. This work is not devoid of limitations. Specifically, we do not consider: $\textit{(i)}$ process noise or measurement noise in the problem formulation, although we study state-estimation performance under such configurations; $\textit{(ii)}$ redundancy towards sensor failure events, which impacts practical deployment; and $\textit{(iii)}$ models that depict infrastructure system networks. The work presented thus merits future investigation on this topic, which entails: \textit{(a)} addressing the aforementioned limitations; \textit{(b)} studying the partitioning problem while explicitly removing inter-partition connections, thereby decomposing the Gramian into sub-Gramians; and \textit{(c)} solving $\mb{P3}$ utilizing algorithms that further exploit the partition structure (distributed algorithms~\cite{Rezazadeh2021}).

\bibliographystyle{IEEEtran}
\bibliography{library}
\balance
\end{document}